%% file: main.tex
\documentclass[acmsmall]{acmart}

\input{preamble}

\input{macros}

\setcopyright{cc}
\setcctype{by}
\acmDOI{10.1145/3808288}
\acmYear{2026}
\acmJournal{PACMPL}
\acmVolume{10}
\acmNumber{PLDI}
\acmArticle{210}
\acmMonth{6}
\acmSubmissionID{pldi26main-p272-p}
\received{2025-11-13}
\received[accepted]{2026-04-03}

\begin{document}

\title{{Presynthesis}}
\subtitle{{Towards Scaling Up Program Synthesis with Finer-Grained Abstract Semantics}}


\author{Rui Dong}
\orcid{0000-0002-2757-2768}
\affiliation{%
  \institution{University of Michigan}
  \city{Ann Arbor}
  \country{USA}
}
\email{ruidong@umich.edu}

\author{Qingyue Wu}
\orcid{0009-0009-2888-6361}
\affiliation{%
  \institution{University of Michigan}
  \city{Ann Arbor}
  \country{USA}
}
\email{wqingyue@umich.edu}

\author{Danny Ding}
\authornote{Work done while working as an undergraduate researcher at the University of Michigan, Ann Arbor.}
\orcid{0009-0009-3437-8064}
\affiliation{%
  \institution{University of Texas at Austin}
  \city{Austin}
  \country{USA}
}
\email{dingyy@utexas.edu}

\author{Zheng Guo}
\orcid{0000-0002-0927-4011}
\affiliation{%
  \institution{University of Michigan}
  \city{Ann Arbor}
  \country{USA}
}
\email{zhgguo@umich.edu}

\author{Ruyi Ji}
\orcid{0000-0002-0150-8629}
\affiliation{%
  \institution{University of Michigan}
  \city{Ann Arbor}
  \country{USA}
}
\email{jiry@umich.edu}

\author{Xinyu Wang}
\orcid{0000-0002-1836-0202}
\affiliation{%
  \institution{University of Michigan}
  \city{Ann Arbor}
  \country{USA}
}
\email{xwangsd@umich.edu}

\begin{CCSXML}
<ccs2012>
   <concept>
       <concept_id>10011007.10011006.10011050.10011056</concept_id>
       <concept_desc>Software and its engineering~Programming by example</concept_desc>
       <concept_significance>500</concept_significance>
       </concept>
   <concept>
       <concept_id>10003752.10003766.10003772</concept_id>
       <concept_desc>Theory of computation~Tree languages</concept_desc>
       <concept_significance>500</concept_significance>
       </concept>
   <concept>
       <concept_id>10011007.10011074.10011092.10011782</concept_id>
       <concept_desc>Software and its engineering~Automatic programming</concept_desc>
       <concept_significance>500</concept_significance>
       </concept>
 </ccs2012>
\end{CCSXML}

\ccsdesc[500]{Software and its engineering~Programming by example}
\ccsdesc[500]{Theory of computation~Tree languages}
\ccsdesc[500]{Software and its engineering~Automatic programming}

\keywords{Program Synthesis, Automata Theory, Abstract Interpretation}

\input{sections/00-abstract}

\maketitle

\clearpage

\input{sections/01-intro}
\input{sections/02-overview}
\input{sections/03-alg}

\input{sections/04-sql}
\input{sections/05-eval}

\input{sections/06-related}

\input{sections/07-conc}

\clearpage

\input{sections/ack}

\section*{Data-Availability Statement}
The artifact that implements the techniques and supports the evaluation results reported in this paper is available on Zenodo~\cite{dong_2026_19078478}.

\bibliography{main}

\clearpage
\appendix

\input{sections/appendix-proof}

\input{sections/appendix-transformer}

\end{document}

%% file: preamble.tex
\usepackage{enumitem}
\usepackage[noend]{algpseudocode}
\usepackage{algorithmicx}
\usepackage{algorithm}
\usepackage{subcaption}
\usepackage{makecell}
\usepackage{booktabs}
\usepackage{multirow}
\usepackage{array}

\usepackage[colorinlistoftodos]{todonotes}

\usepackage{xspace}
\usepackage{amsmath}
\usepackage{xcolor}

\usepackage{booktabs}
\usepackage{subcaption}
\usepackage{hhline}

%% file: macros.tex
\newcommand{\frameworkname}{\textsc{Foresighter}\xspace}
\newcommand{\sqlname}{\textsc{ForesighterSQL}\xspace}
\newcommand{\stringname}{\textsc{ForesighterString}\xspace}
\newcommand{\matrixname}{\textsc{ForesighterMatrix}\xspace}

\newcommand{\xinyutodo}[1]{{\color{red}{#1}}}

\newcommand{\ruichecked}[1]{{\color{black}{#1}}}

\newcommand{\newpara}[1]{\vspace{5pt}\noindent\emph{\textbf{#1}}\hspace{4pt}}
\newcommand{\mydots}{\cdots}
\newcommand{\assign}{:=}

\newcommand{\myprocedure}[1]{\hspace{-10pt}\textbf{procedure }{#1}}
\newcommand{\myinput}{\hspace{-10pt}\textbf{input: }}
\algnewcommand\Output{\hspace{-10pt}\textbf{output: }}
\newcommand{\irule}[2]%
   {\mkern-2mu\displaystyle\frac{#1}{\vphantom{,}#2}\mkern-2mu}

\newcommand{\grammar}{{G}}
\newcommand{\grammareq}{::=}

\newcommand{\terminalsymbols}{\Sigma}
\newcommand{\nonterminalsymbols}{V}
\newcommand{\grammarsymbol}{s}
\newcommand{\startsymbol}{s_0}
\newcommand{\productions}{R}
\newcommand{\productionarrow}{\rightarrow}
\newcommand{\variablesymbol}{x}
\newcommand{\allvariablesymbols}{\terminalsymbols_{\emph{var}}}
\newcommand{\constantsymbol}{c}
\newcommand{\allconstantsymbols}{\terminalsymbols_{\emph{const}}}
\newcommand{\dsloperator}{f}

\newcommand{\fta}{\mathcal{A}}
\newcommand{\ftastates}{Q}
\newcommand{\ftafinalstates}{Q_f}
\newcommand{\ftatransitions}{\Delta}
\newcommand{\ftatransition}{\delta}
\newcommand{\ftatransitionarrow}{\rightarrow}
\newcommand{\ftaalphabet}{F}

\newcommand{\ftastate}{q}

\newcommand{\abstractdomain}{\mathcal{D}}
\newcommand{\abstractinputspace}{\mathcal{D}}
\newcommand{\abstractvalue}{a}

\newcommand{\abstracttransformer}[1]{\llbracket{#1}\rrbracket^{\sharp}}

\newcommand{\concretetransformer}[1]{\llbracket #1 \rrbracket}
\newcommand{\transformer}[1]{\llbracket #1 \rrbracket^{\sharp}}
\newcommand{\concretevalue}{c}

\newcommand{\ioex}{e}
\newcommand{\inputex}{\ioex_{\textit{in}}}
\newcommand{\outputex}{\ioex_{\textit{out}}}
\newcommand{\inputtooutput}{\mapsto}
\newcommand{\inputvartoinputval}{\mapsto}
\newcommand{\ioexs}{E}
\newcommand{\abstractinputenv}{\Gamma}

\newcommand{\prog}{P}

\newcommand{\mynull}{null}

\newcommand{\ftaslice}{\fta_{\ioex}}
\newcommand{\ftaslicee}[1]{\fta_{#1}}

\newcommand{\cubes}{\xspace\textsc{Cubes}\xspace}

\newcommand{\patsql}{\xspace\textsc{PatSQL}\xspace}
\newcommand{\blazesql}{\xspace\textsc{BlazeSQL}\xspace}
\newcommand{\blaze}{\xspace\textsc{Blaze}\xspace}

\newcommand{\toolnopresynkpred}[1]{\xspace\textsc{NoPresyn-}{#1}\xspace}
\newcommand{\toolnopresynthesis}{\xspace\textsc{NoPresyn}\xspace}
\newcommand{\toolnoslicingoracle}{\xspace\textsc{NoOracle}\xspace}

\newcommand{\toolkpred}[1]{\xspace\textsc{ForesighterSQL-}{#1}\xspace}

\newcommand{\stringkpred}[1]{\xspace\textsc{ForesighterString-}{#1}\xspace}
\newcommand{\matrixkpred}[1]{\xspace\textsc{ForesighterMatrix-}{#1}\xspace}

\newcommand{\presynthesisalg}{\textsc{Presynthesize}}
\newcommand{\synthesisalg}{\textsc{Synthesize}}
\newcommand{\buildftaalg}{\textsc{BuildOfflineFTA}}
\newcommand{\buildslicingoraclealg}{\textsc{BuildOracle}\xspace}
\newcommand{\slicealg}{\textsc{Slice}\xspace}
\newcommand{\searchalg}{\textsc{Search}}
\newcommand{\slicingoracle}{\mathcal{O}}

\newcommand{\bigland}{\bigwedge}
\newcommand{\biglor}{\bigvee}

\newcommand{\sqlquery}{Q}
\newcommand{\sqlinputtablevar}{x}
\newcommand{\sqlpredicate}{\phi}
\newcommand{\filterpred}{\sqlpredicate}
\newcommand{\joinpred}{\sqlpredicate}
\newcommand{\proj}{\text{Project}}
\newcommand{\filter}{\text{Filter}}
\newcommand{\innerjoin}{\text{IJoin}}
\newcommand{\leftjoin}{\text{LJoin}}
\newcommand{\distinct}{\text{Distinct}}
\newcommand{\groupby}{\text{GroupBy}}

\newcommand{\agglist}{L_{g}}
\newcommand{\columnlist}{L_{c}}
\newcommand{\aggexpr}{g}

\newcommand{\agg}{\textit{agg}}
\newcommand{\logicop}{\odot}
\newcommand{\sqlvalue}{v}

\newcommand{\sqllimit}{\text{Limit}}

\newcommand{\like}{\text{like}}
\newcommand{\colidx}{{col}}
\newcommand{\sqlmin}{\text{Min}}
\newcommand{\sqlmax}{\text{Max}}
\newcommand{\sqlavg}{\text{Avg}}
\newcommand{\sqlcount}{\text{Count}}
\newcommand{\sqlsum}{\text{Sum}}

\newcommand{\sqlminmax}{\text{Min/Max}}
\newcommand{\sqlsumavg}{\text{Sum/Avg}}

\newcommand{\prednumofcols}{{\texttt{NumCols}}}
\newcommand{\predsubsetofinput}{{\texttt{SubsetOf}}}
\newcommand{\predcoltype}{{\texttt{ColType}}}
\newcommand{\prednumofrows}{{\texttt{NumRows}}}

\newcommand{\predcolrange}{{\texttt{In}}}
\newcommand{\predlistcolinset}{{\texttt{In}}}
\newcommand{\predcolsused}{{\texttt{MaxCols}}}
\newcommand{\setofcols}{J}

\newcommand{\maxcol}{m}

\newcommand{\lenOfString}{{\texttt{Len}}}
\newcommand{\fromInput}{{\texttt{FromInp}}}
\newcommand{\charEqual}{{\texttt{CharEq}}}
\newcommand{\indexOf}{{\texttt{Index}}}

\newcommand{\dimNum}{{\texttt{DimNum}}}
\newcommand{\shapeOf}{{\texttt{Shape}}}
\newcommand{\elementAt}{{\texttt{ElemEq}}}
\newcommand{\vin}{{v_{\textit{in}}}}

\newcommand{\vout}{{v_{\textit{out}}}}
\newcommand{\voutprime}{{v'_{\textit{out}}}}
\newcommand{\vectorOf}{{\texttt{Vector}}}

\newcommand{\nbc}[3]{
	{\colorbox{#3}{\bfseries\sffamily\scriptsize\textcolor{white}{#1}}}
	{\textcolor{#3}{\sf\small \textit{#2}}}
}
\definecolor{rycolor}{RGB}{150, 0, 0}
\definecolor{ruicolor}{RGB}{0, 0, 150}
\newcommand\ruyicomment[1]{\nbc{RJ}{#1}{rycolor}}

\newcommand{\graybox}[1]{{\setlength{\fboxsep}{2pt}\colorbox{gray!20}{#1}}}
\newcommand{\grayboxr}[1]{  \begingroup
  \setlength{\fboxsep}{0pt} 
  \colorbox{gray!20}{\strut #1}
  \endgroup
}

\newcommand{\lang}{\mathcal L}
\newcommand{\ftarun}{\rho}
\newcommand{\ftaruns}{runs\xspace}

\newcommand{\runvar}{\rho} 
\newcommand{\runset}{\mathcal R}
\newcommand{\bottomup}{\textsc{BottomUpReachable}}
\newcommand{\inputconsistent}{\textsc{InputConsistent}}
\newcommand{\shl}{<\!\!<}
\newcommand{\bv}[1]{#1}
\newcommand{\bvlastbit}[1]{{\square #1}}

\newcommand{\myautoref}[2]{\hyperref[#2]{#1~\ref*{#2}}}
\newcommand{\algorithmref}[1]{\myautoref{Algorithm}{#1}}
\newcommand{\alglineref}[1]{\myautoref{line}{#1}}
\newcommand{\lemmaref}[1]{\myautoref{Lemma}{#1}}
\newcommand{\defref}[1]{\myautoref{Definition}{#1}}
\newcommand{\ignore}[1]{}
\newcommand{\samplestate}[2]{\langle #1, \bv{#2} \rangle}

\newcommand{\inputconsistentconstraint}{\Psi}
\newcommand{\bindsto}{\textsc{BindsTo}}

\newcommand{\sygus}{\textsc{SyGuS}}

\newcommand{\speedup}[1]{{#1}\times}

%% file: sections/00-abstract.tex
\begin{abstract}

Abstract semantics has proven to be instrumental for accelerating search-based program synthesis, by enabling the sound pruning of a set of incorrect programs (without enumerating them). 
One may expect \emph{faster} synthesis with increasingly \emph{finer-grained} abstract semantics.
Unfortunately, to the best of our knowledge, this is not the case, yet.
The reason is because, as abstraction granularity increases---while fewer programs are enumerated---pruning becomes more costly. 
This imposes a fundamental limit on the overall synthesis performance, which we aim to address in this work.

Our key idea is to introduce an \emph{offline presynthesis} phase, which consists of two steps. 
Given a DSL with abstract semantics, the first \emph{semantics modeling} step constructs a tree automaton $\fta$ for \emph{a space of inputs}---such that, for any program $\prog$ and for any considered input $I$, $\fta$ has a run that corresponds to $\prog$'s execution on $I$ under abstract semantics. 
Then, the second step builds an oracle $\slicingoracle$ for $\fta$. 
This $\slicingoracle$ enables fast pruning \emph{during synthesis}, by allowing us to efficiently find exactly those DSL programs that satisfy a given input-output example under abstract semantics.

We have implemented this \emph{presynthesis-based synthesis paradigm} in a framework, $\frameworkname$. 
On top of it, we have developed three instantiations for SQL, string transformation, and matrix manipulation. All of them significantly outperform prior work in the respective domains. 

\end{abstract}

%% file: sections/01-intro.tex
\section{Introduction}
\label{sec:intro}

This paper concerns the long-standing problem of syntax-guided program synthesis from examples, also known as programming-by-example (PBE). 
That is, 

\vspace{5pt}
\begin{center}
\parbox{.95\linewidth}{
Given a domain-specific language (DSL) whose syntax is defined by a context-free grammar and given a set $\ioexs$ of input-output examples, find a program $\prog$ in the DSL such that $\prog$ satisfies $\ioexs$ according to the DSL's semantics. 
}
\end{center}
\vspace{5pt}

This is an intellectually challenging problem with a wide range of practical applications, evidenced by the seminal FlashFill work~\cite{gulwani2011automating} among many others~\cite{solar2008program,gulwani2017program,alur2013syntax,solar2006combinatorial,wang2017program,polozov2015flashmeta,srivastava2010program}. 
A widely-used synthesis method is to \emph{search} over the given grammar for satisfying programs. 
But, as one can imagine, this is computationally intractable, due to the enormous search space of programs.

\newpara{State-of-the-art.}
To scale up synthesis, many state-of-the-art synthesizers leverage some form of \emph{abstract semantics} to prune the search space~\cite{wang2017program,feser2015synthesizing,wang2017synthesizing,feng2017component,polikarpova2016program,lee2021combining,guria2023absynthe}.
The idea is to reason about the behavior of \emph{a set of programs} to prove their inconsistency with the specification, such that they can be safely pruned away. 
For example, top-down synthesis~\cite{feng2017component} works by enumerating program sketches (i.e., incomplete programs with holes). 
A sketch can be pruned away if its abstract behavior is not consistent with the example, thereby saving us from enumerating any of its completions. 
As another example, in representation-based synthesis~\cite{wang2017program}, a graph-like data structure is first built for the given input-output example, based on the DSL's abstract semantics. 
In this graph, nodes are associated with abstract output values of programs on the given input, edges are created using abstract semantics, and paths represent programs. 
Then, synthesis boils down to searching among  ``consistent'' paths (i.e., paths that reach a target node whose associated abstract value is consistent with the desired output).
Irrelevant nodes (that do not lie on a consistent path) are not considered during search; in other words, their corresponding programs are pruned away.

\newpara{Finer-grained abstraction $\Rightarrow$ fast synthesis?}
The idea of abstraction-based pruning is profound, and underpins search-based synthesis techniques across a wide range of applications. 
In this work, we ask the following question.

\vspace{5pt}
\emph{Can we scale up synthesis by providing it with increasingly finer-grained abstract semantics?}
\vspace{5pt}

This is an important question, because a positive answer indicates that we can potentially use abstraction granularity as a dimension to continuously scale up synthesis.
Unfortunately, to our best knowledge, this is not the case, yet. 
While developing precise abstract semantics is itself very hard~\cite{johnson2024automating,wang2018learning}, an orthogonal challenge is that \emph{pruning is not always cheap.}
With a finer abstraction, while more programs can potentially be pruned away, the cost of pruning also increases.
As the abstraction becomes more granular, this overhead can offset---or even outweigh---its benefit to synthesis.
For instance, in top-down synthesis, 
evaluating a program sketch using a finer-grained abstraction
is often more costly~\cite{feng2017component}, while potentially still failing to prune it away.
In representation-based synthesis, finer-grained abstract semantics will result in a larger graph \emph{with more irrelevant nodes}~\cite{wang2017program}.
This ``graph bloat'' phenomenon is fundamentally because we cannot soundly predict, during its construction, which part of the graph is relevant, until the graph is fully built.

\begin{figure}[!t]
\centering
\includegraphics[width=0.85\linewidth]{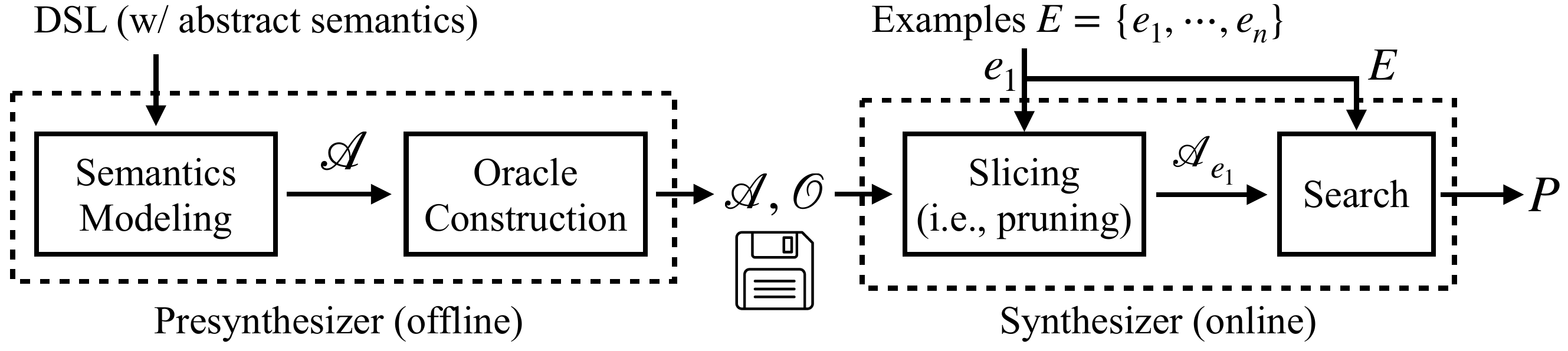}
\caption{Schematic workflow of our presynthesis-based synthesis paradigm.}
\label{fig:workflow}
\end{figure}

\newpara{Our work: presynthesis-based synthesis.}
We present a scalable pruning technique, which we believe is a crucial step towards scaling up synthesis with finer-grained abstract semantics. \autoref{fig:workflow} shows the schematic workflow. 
The key idea is to introduce an \emph{offline presynthesis} phase. 
Given a DSL with abstract semantics, we \emph{precompute} some information, which can be stored on a disk. 
To solve a stream of synthesis tasks, such information is first loaded into the synthesizer, and will be used to enable fast pruning. 
This is a paradigm shift, from existing work\footnote{{We refer interested readers to~\autoref{sec:related} for related work that utilizes preprocessing/precomputation.}} which is virtually all ``purely online'', to one that can take advantage of \emph{offline} resources for better \emph{online} performance.
Two key ideas allow us to realize this ``presynthesis-based synthesis'' paradigm.

\newpara{Key idea 1: offline abstract semantics modeling using FTAs.}
Our first idea is to model the given DSL's abstract semantics for \emph{a space of inputs} (that are of interest for future synthesis).
Specifically, given a DSL with abstract semantics, we build  a finite tree automaton (FTA) $\fta$ during presynthesis, such that, for any program $\prog$ and for any input in the considered space, $\fta$ has a run $\ftarun$ that begins from an abstraction of the given input and ends in a state labeled with $\prog$'s resulting abstract output.

\newpara{Key idea 2: fast pruning using offline-built oracle.}
Given this offline FTA $\fta$, synthesis boils down to finding a run $\ftarun$ of $\fta$ that is \emph{consistent with} the given input-output example $\ioex = \inputex \inputtooutput \outputex$. 
Here, $\ftarun$ is said to be consistent with $\ioex$, if it (i) begins from states consistent with $\inputex$ and (ii) ends in a state that matches $\outputex$. 
This can be viewed as a ``graph search'' problem; the challenge lies in that our $\fta$ is very large. 
Our idea is to first build a form of ``reachability oracle'' $\slicingoracle$ \emph{during presynthesis}.
Then, \emph{during synthesis}, $\slicingoracle$ is used to help extract a \emph{slice} $\ftaslicee{\ioex}$ (which is a sub-FTA of $\fta$) for $\ioex$, such that $\ftaslicee{\ioex}$ represents all programs that satisfy $\ioex$ under abstract semantics. 
This slicing step essentially performs pruning. 
However, unlike prior work, our slicing-based pruning is very fast, thanks to our oracle $\slicingoracle$ (which significantly reduces the ``graph bloat'' while building $\ftaslicee{\ioex}$).
Finally, we perform standard search over $\ftaslicee{\ioex}$ to find a program that satisfies all examples under concrete semantics.

\newpara{Synthesis as a service.}
As one can imagine, in order for the synthesizer to fully load in the offline artifact, more memory is needed (compared to purely online approaches). Therefore, in general, we envision our presynthesis-based synthesizer to be deployed as a \emph{service}: that is, user requests (i.e., specifications) are sent to a \emph{synthesis server}, and the synthesized programs are sent back to users. 
Note that users do not need as much memory, unless they choose to deploy the service locally.

\newpara{Implementation, instantiation, and evaluation.}
We have implemented this presynthesis-based synthesis paradigm into a framework called $\frameworkname$, and instantiated it for three different domains. In addition to outperforming prior work across all domains, it is also worth highlighting that, as the abstraction granularity increases (up to the finest available), the pruning time of our approach stays consistently low and the synthesis time continues to decrease.

\newpara{Contributions.}
This paper makes the following contributions. 
\begin{itemize}[leftmargin=*]
\item 
The concept of presynthesis, and a new presynthesis-based synthesis paradigm. 
\item 
The first presynthesis technique.
\item 
The first synthesis algorithm that utilizes presynthesis-based pruning. 
\item 
Implementation of the presynthesis-based synthesis framework, and three instantiations.
\item 
Comprehensive evaluation, with promising experimental results across the board.
\end{itemize}

%% file: sections/02-overview.tex

\section{Overview}
\label{sec:overview}

This section illustrates our approach over a simple synthesis task.

\newpara{Domain-specific language (DSL).}
Consider the following unrolled grammar\footnote{{We assume an unrolled grammar to simplify the presentation.}}, which specifies the syntax of a small bit-vector language from the $\sygus$ dataset~\cite{sygus-website,alur2013syntax}.
For presentation purposes, we consider vectors of length $4$, left shift $\shl$ and addition $+$, one single input variable $x$, and two integer constants for $k$.
We further assume only $t_2$ is the start symbol. 
The semantics is standard.
\[\small
\begin{array}{lcl}
t & \grammareq &  x \ | \ t + k \ | \ t \shl k \\ 
k & \grammareq &  1 \ | \ 2 \\ 
\end{array}
\xrightarrow{\quad\text{Unrolling}\quad}
\begin{array}{lcl}
t_2 & \grammareq &  t_1 + k \ | \ t_1 \shl k \\ 
t_1 & \grammareq &  t_0 + k \ | \ t_0 \shl k \\ 
t_0 & \grammareq &  x  
\\ 
k &  \grammareq  & 1 \ | \ 2 \\ 
\end{array}
\]

\newpara{Programming-by-example (PBE) task.}
Given this DSL, consider a PBE task with the following set $\ioexs$ of two examples. 
The PBE task is to find a DSL program that satisfies both examples in $\ioexs$.
\[
\ioexs = 
\left\{
\begin{array}{l}
\ioex_1: \{ \ x = \bv{0001} \ \} \inputtooutput \bv{0110} 
\\ 
\ioex_2: \{ \ x = \bv{0010} \ \} \inputtooutput \bv{1010}
\end{array}
\right\}
\]
In our unrolled grammar, $(x \shl 2) + 2$ is the only program that satisfies $\ioexs$.

\subsection{FTA-Based Synthesis using Abstract Semantics}
\label{sec:overview:fta-synthesis}

Let us first illustrate an existing PBE framework~\cite{wang2017program,wang2019efficient}---which our work builds upon---that is based on finite tree automata (FTAs) and abstract semantics.

\newpara{Abstract semantics.}
The abstract semantics for our bit-vector DSL is defined by a set of abstract transformers for all operators in the DSL.
A transformer for operator $\dsloperator$ accepts abstract values as arguments and returns an abstract value as its output. 
Abstract values are drawn from a predefined domain of abstract values (i.e., abstract domain).
Below are some example transformers. 
\[
\abstracttransformer{\bvlastbit{1} \shl 1} = \bvlastbit{0} 
\qquad 
\abstracttransformer{\bvlastbit{1} \shl 2} = \bvlastbit{0} 
\qquad 
\abstracttransformer{\bvlastbit{1} + 1} = \bvlastbit{0} 
\qquad
\abstracttransformer{\bvlastbit{1} + 2} = \bvlastbit{1}
\]
$\bvlastbit{b}$ denotes an (abstract) vector whose lowest bit is $b$.
For instance, the first example says ``performing left shift on a vector whose lowest bit is $1$ always yields a vector whose lowest bit is $0$''.

\newpara{PBE using FTAs and abstract semantics.}
We can perform synthesis by first building an FTA, given  a DSL with abstract semantics and given an example.~\autoref{fig:online-fta-lowest-bit} shows the FTA  $\fta_1$ for the first example $\ioex_1$, given our DSL with the aforementioned abstract semantics.
Each state $\ftastate_{\grammarsymbol}^{\abstractvalue}$ is associated with a grammar symbol $\grammarsymbol$ and an abstract value $\abstractvalue$. 
Each transition corresponds to a DSL operator $\dsloperator$, and is built using $\dsloperator$'s abstract transformer $\abstracttransformer{\dsloperator}$.
An accepting run $\ftarun$ of $\fta_1$ represents a program that satisfies $\ioex_1$ under abstract semantics. 
For instance, consider the following run $\ftarun$ of $\fta_1$
\[
\xrightarrow{ \ x \ }
\ftastate_{t_0}^{\bvlastbit{1}} 
\xrightarrow{ \ \shl_{1} \ }
\ftastate_{t_1}^{\bvlastbit{0}}
\xrightarrow{ \ +_{2} \ }
\ftastate_{t_2}^{\bvlastbit{0}}
\]
which corresponds to $(x \shl 1) + 2$. 
This program, however, does not satisfy $\ioex_1$ under the DSL's \emph{concrete} semantics.
It is not hard to figure out that $\fta_1$ has in total $12$ accepting runs, while the DSL has a total of $16$ programs.
Given $\fta_1$, we can perform search \emph{among its accepting runs}, to find a program that satisfies both $\ioex_1$ and $\ioex_2$ under concrete semantics.

\newpara{Remarks.}
While $\fta_1$ prunes away $4$ programs (which are not considered during search), they are still presented in~\autoref{fig:online-fta-lowest-bit} as runs ending at $\ftastate_{t_2}^{\bvlastbit{1}}$.
This is because states are built in a bottom-up fashion: begin with states for input variables, iteratively create more states, until all programs in the DSL are considered. 
For instance, in~\autoref{fig:online-fta-lowest-bit}, 
transition $\xrightarrow{ \ x \ } \ftastate_{t_0}^{\bvlastbit{1}}$ is first created, since $\ioex_1$ binds $x$ to $\bv{0001}$ whose lowest bit is $1$. 
Note the state $\ftastate_{t_2}^{\bvlastbit{1}}$: while ``useless'', it is still created---this is because, in general, we cannot foresee whether or not a state can lead to a final state. 
As a result, prior work first builds all ``reachable'' states from the input example and then removes those not reaching a final state (i.e., the desired output example).

\begin{figure*}[t]
\centering 
\hfill
\begin{subfigure}{0.45\textwidth}
\centering
\includegraphics[scale=0.7]{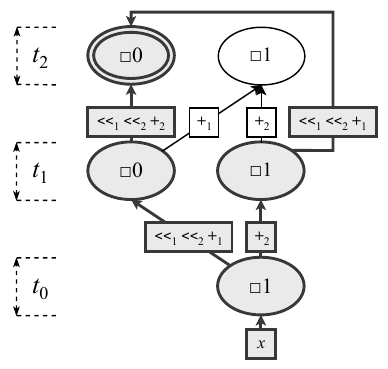}
\caption{FTA $\fta_1$ for $\ioex_1$, tracking lowest bit.}
\label{fig:online-fta-lowest-bit}
\end{subfigure}
\hfill
\begin{subfigure}{0.48\textwidth}
\centering
\includegraphics[scale=0.7]{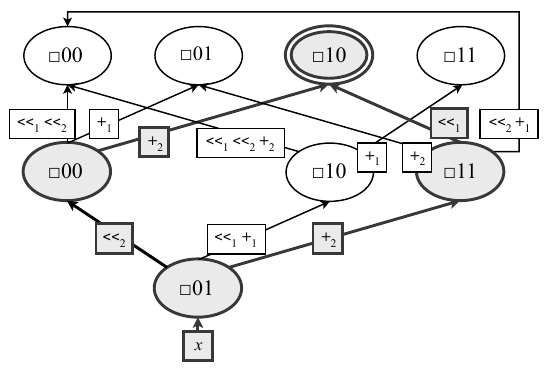}
\caption{FTA $\fta_2$ for $\ioex_1$, tracking lowest two bits.}
\label{fig:online-fta-lowest-two-bits}
\end{subfigure}
\hfill\
\caption{Two FTAs for input-output example $\ioex_1: \{ x = \bv{0001} \} \inputtooutput \bv{0110}$.
$\fta_1$ in \autoref{fig:online-fta-lowest-bit} uses an abstract domain that tracks the lowest bit for all bitvectors, 
and $\fta_2$ in \autoref{fig:online-fta-lowest-two-bits} uses a finer-grained domain that tracks the lowest two bits.
Each state is labeled with its associated abstract value, and the associated grammar symbol is shown on the far left.
For presentation purposes, transitions for $\shl$ and $+$ also include constant $k$ as part of the label (boxed), and we merge labels for transitions that share the same input and output states. 
Final states are double-circled. 
Accepting runs are highlighted with gray states and bold transitions.}
\label{fig:online-ftas}
\end{figure*}

\subsection{The FTA Bloat Phenomenon}
\label{sec:overview:bloat}

Now let us illustrate the FTA bloat issue when a finer-grained abstract semantics is used.

\newpara{Finer-grained abstract semantics.}
Consider a more precise abstract domain, which tracks the lowest two bits. The abstract transformers can be defined accordingly. Below are some examples. 
\[
\abstracttransformer{\bvlastbit{01} \shl 1} = \bvlastbit{10} 
\qquad 
\abstracttransformer{\bvlastbit{11} \shl 2} = \bvlastbit{00} 
\qquad 
\abstracttransformer{\bvlastbit{01} + 1} = \bvlastbit{10} 
\qquad
\abstracttransformer{\bvlastbit{00} + 2} = \bvlastbit{10}
\]

\newpara{Fewer accepting runs.}
\autoref{fig:online-fta-lowest-two-bits} shows the FTA $\fta_2$ for this new abstraction. 
Compared to $\fta_1$ in \autoref{fig:online-fta-lowest-bit}, $\fta_2$ is larger, since it tracks finer-grained information. As a result, $\fta_2$ has fewer accepting runs---in particular, only 2, compared to 12 in $\fta_1$. In other words, more programs are pruned away.
This makes the subsequent search faster. 

\newpara{More bloat.}
However, this additional pruning power and faster search do not come for free---$\fta_2$ has more ``useless'' states and transitions. As alluded to earlier, this FTA bloat phenomenon is due to the inability to soundly ``predict the future'' (i.e., if a current state will lead to a future final state). 
For instance, $\ftastate_{t_1}^{\bvlastbit{10}}$ in $\fta_2$ does not lead to a final state, no matter what operators and constants are used; but, in general, we must first build all of its outgoing transitions and states, to soundly tell so.
This FTA bloat issue is exacerbated when finer-grained abstract semantics is used.

\subsection{Our Approach: Presynthesis-Based Synthesis}
\label{sec:overview:our-work}

Finally, let us illustrate how we enable fast pruning without causing FTA bloat during synthesis.

\begin{figure}[t]
\hfill 
\includegraphics[scale=0.7]{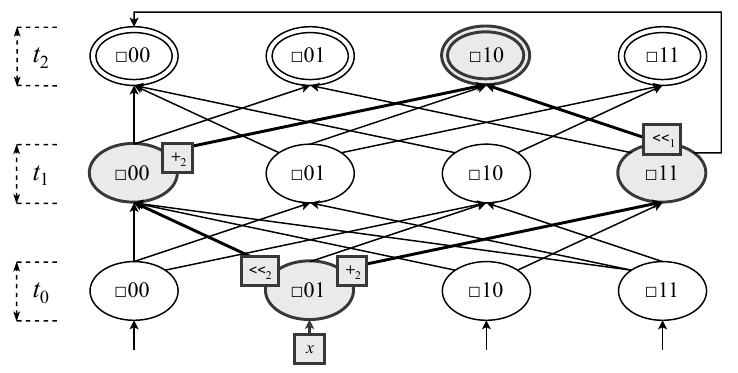}
\hfill\  
\caption{Offline FTA $\fta$ for our DSL using abstract semantics that tracks the lowest two bits.  
We highlight the part of $\fta$ that corresponds to the slice $\ftaslicee{\ioex_1}$ for the first example $\ioex_1: \{ x = \bv{0001} \} \inputtooutput \bv{0110}$, and we elide labels for transitions not in $\ftaslicee{\ioex_1}$ to simplify the presentation. 
}
\label{fig:offline-fta}
\end{figure}

\newpara{Offline FTA-based semantics modeling.}
Our FTA is built during a \emph{presynthesis} phase. 
\autoref{fig:offline-fta} shows our offline FTA $\fta$,  built using the same abstraction for $\fta_2$.
Unlike $\fta_2$, $\fta$ has 3 additional ``initial'' states for $x$, because $\fta$ needs to consider all potential input vectors. 
Another distinction is that all $t_2$ states are final, because any of them can be desired before ``online examples'' are revealed.
The rest of the FTA construction machinery remains the same. 

Our new insight is that $\fta$ essentially models the DSL's abstract semantics---in the sense that, the execution of any program $P$ on any input corresponds to an accepting run $\ftarun$ of $\fta$.
Prior work that builds FTAs \emph{online} also performs such modeling, but only for specific inputs.

\newpara{Online slicing.}
Now, consider the synthesis task for our examples $\ioexs = \{ \ioex_1, \ioex_2 \}$.
First, we extract a \emph{slice} $\ftaslicee{\ioex_1}$ out of $\fta$. 
$\ftaslicee{\ioex_1}$ is a sub-FTA of $\fta$, as highlighted in~\autoref{fig:offline-fta}. 
$\ftaslicee{\ioex_1}$'s accepting runs represent exactly those programs that satisfy $\ioex_1$ abstractly. 
Readers now might wonder: how is $\ftaslicee{\ioex_1}$ different from $\fta_2$ (see~\autoref{fig:online-fta-lowest-two-bits})? 
Indeed, they have the language. 
The key distinction is that our slicing does not suffer from the bloat issue as when building $\fta_2$. Let us illustrate our slicing process below.

Given $\fta$ and $\ioex_1$, we first find final states that are consistent with the output example: in our case,  $\ftastate_{t_2}^{\bvlastbit{10}}$ is the only such state. 
Then, we traverse $\fta$ top-down from $\ftastate_{t_2}^{\bvlastbit{10}}$ to collect its ``descendants''. 
For instance, we will collect $\ftastate_{t_2}^{\bvlastbit{00}}$ and $\ftastate_{t_2}^{\bvlastbit{11}}$, both of which belong to $\ftaslicee{\ioex_1}$, as well as $\ftastate_{t_0}^{\bvlastbit{01}}$. 
However, we will not collect $\ftastate_{t_1}^{\bvlastbit{01}}$, because it is reachable only from initial states that are \emph{not consistent} with $\ioex_1$'s input. 
Similarly, we will not collect  $\ftastate_{t_0}^{\bvlastbit{00}}$ and $\ftastate_{t_0}^{\bvlastbit{11}}$ either.
In other words, we smartly avoid \emph{irrelevant} states that end up not in  $\ftaslicee{\ioex_1}$ (which is the root cause of bloat!).
How can we foresee the future?

\newpara{Offline-built oracle.}
Our foresight comes from the use of an oracle $\slicingoracle$, that is built offline during presynthesis. 
Intuitively, $\slicingoracle$ tells us whether or not a state $\ftastate$ in $\fta$ is reachable from initial states that are consistent with $\ioex_1$'s input: if so, $\ftastate$ is relevant, and will be added to our slice $\ftaslicee{\ioex_1}$. 
In particular, $\slicingoracle$ maps every $\ftastate$ to all \emph{abstract inputs} that ``can reach'' $\ftastate$. 
For instance, $\slicingoracle$ contains the following entry
\[
\slicingoracle( \ftastate_{t_1}^{\bvlastbit{01}} )  \ \   = 
\   \ 
\big( x = \bvlastbit{00} \big) \lor 
\big( x = \bvlastbit{11} \big)
\]
which means: there is a run to $\ftastate_{t_1}^{\bvlastbit{01}}$ in $\fta$, only if $x$ binds to $\bvlastbit{00}$ or $\bvlastbit{11}$. 
Now, consider our example $\ioex_1$ where the input binds $x$ to the concrete value $\bv{0001}$. 
The reason that, during the previous top-down traversal process, we did not collect $\ftastate_{t_1}^{\bvlastbit{01}}$ is because: $\ioex_1$'s input $\bv{0001}$ is not consistent with any of the abstract values that $\slicingoracle$ maps $\ftastate_{t_1}^{\bvlastbit{01}}$ to. 
On the other hand, we have 
\[
\slicingoracle( \ftastate_{t_1}^{\bvlastbit{00}} ) = 
\big( x = \bvlastbit{00} \big) \lor 
\big( x = \bvlastbit{01} \big) \lor 
\big( x = \bvlastbit{10} \big) \lor 
\big( x = \bvlastbit{11} \big) 
\quad
\slicingoracle( \ftastate_{t_1}^{\bvlastbit{11}} ) = 
\big( x = \bvlastbit{01} \big) \lor 
\big( x = \bvlastbit{10} \big) 
\]
both of which contain a consistent abstract input. That is why both states were collected.  
Note that we check $\slicingoracle$ for final states too: in this case, $\slicingoracle$ maps $\ftastate_{t_2}^{\bvlastbit{10}}$ to an abstract input consistent with $\ioex_1$.

$\slicingoracle$ is essentially a form of \emph{reachability oracle}~\cite{cohen2003reachability,thorup2004compact}, which is designed for our synthesis setting.
It is built by processing $\fta$ in a bottom-up fashion, starting from $t_0$ (i.e., initial) states. 
\[
\slicingoracle( \ftastate_{t_0}^{\bvlastbit{00}} ) = 
\big( x = \bvlastbit{00} \big) 
\quad 
\slicingoracle( \ftastate_{t_0}^{\bvlastbit{01}} ) = 
\big( x = \bvlastbit{01} \big) 
\quad 
\slicingoracle( \ftastate_{t_0}^{\bvlastbit{10}} ) = 
\big( x = \bvlastbit{10} \big) 
\quad 
\slicingoracle( \ftastate_{t_0}^{\bvlastbit{11}} ) = 
\big( x = \bvlastbit{11} \big) 
\]
Then, we build entries for each $t_1$ state, by combining its incoming transitions disjunctively. See $\slicingoracle( \ftastate_{t_1}^{\bvlastbit{00}} ), \slicingoracle( \ftastate_{t_1}^{\bvlastbit{01}} ), \slicingoracle( \ftastate_{t_1}^{\bvlastbit{11}} )$ above as examples. Finally, $t_2$ states are processed in the same way.

\newpara{Online search.}
Our slice $\ftaslicee{\ioex_1}$ is essentially a grammar (with all programs that satisfy $\ioex_1$ abstractly).
Now, given $\ioexs = \{ \ioex_1, \ioex_2 \}$, we can in principle use any syntax-guided solver to search over $\ftaslicee{\ioex_1}$ for programs that concretely satisfy $\ioexs$.
Hence, our framework is not tied to a specific search algorithm. 

%% file: sections/03-alg.tex
\section{Presynthesis-Based Synthesis Framework}
\label{sec:alg}

\subsection{Preliminaries}
\label{sec:alg:prelim}

We first cover necessary formal background on a widely-adopted framework~\cite{wang2017program,wang2017synthesis} for syntax-guided synthesis from examples, that is based on finite tree automata (FTAs) and abstract semantics.

\newpara{Domain-specific language (DSL).} 
Following the seminal work on syntax-guided synthesis~\cite{alur2013syntax}, we assume the DSL's syntax is defined by a context-free grammar $\grammar = ( \nonterminalsymbols, \terminalsymbols, \productions, \startsymbol )$, where

\begin{itemize}[leftmargin=*]

\item 
$\nonterminalsymbols$ is a finite set of non-terminal symbols, that correspond to (sub-)programs in the DSL.

\item 
$\terminalsymbols$ is a finite set of terminal symbols, which correspond to DSL operators, input variables, and constants. 
We use $\allvariablesymbols \subseteq \terminalsymbols$ to denote the set of all input variables. 

\item 
$\productions$ is a set of production rules of the form $\grammarsymbol \rightarrow \dsloperator(\grammarsymbol_1, \mydots, \grammarsymbol_n)$, where $\grammarsymbol, \grammarsymbol_1, \mydots, \grammarsymbol_n$ are non-terminals and $\dsloperator$ is an $n$-arity DSL operator. 
Input variables and constants are viewed as nullary operators. 

\item 
$\startsymbol \in \nonterminalsymbols$ is the start symbol. 

\end{itemize}
As standard in the literature~\cite{wang2017program,solar2008program,polozov2015flashmeta}, we finitize the grammar (e.g., by finite unrolling).

The DSL's (concretize) semantics is assigned by defining, for each DSL operator $\dsloperator$, a concrete transformer $\concretetransformer{\dsloperator}$, which computes the concrete output of $\dsloperator$ given concrete argument values. 
An input variable is evaluated given a binding environment in the usual way.

\newpara{Programming-by-example (PBE).}
The PBE problem is to find a program $\prog$, from a given DSL, that satisfies a given set $\ioexs$ of input-output examples according to the DSL's concrete semantics. 
Each example $\ioex \in \ioexs$ is of the form $\inputex \inputtooutput \outputex$, where $\inputex$ is a binding environment that maps input variables to their values and $\outputex$ is the desired output for $\inputex$.

\newpara{Abstract semantics.}
Following the seminal work on abstract interpretation~\cite{cousot1977abstract},
we define the DSL's abstract semantics using abstract values and abstract transformers.  
In particular, for each production rule $\grammarsymbol \rightarrow \dsloperator(\grammarsymbol_1, \mydots, \grammarsymbol_n)$, its abstract transformer $\abstracttransformer{\dsloperator}$ returns an abstract value $\abstractvalue$ for $\grammarsymbol$ given abstract values $\abstractvalue_1, \mydots, \abstractvalue_n$ for $\dsloperator$'s argument symbols $\grammarsymbol_1, \mydots, \grammarsymbol_n$.
Following prior work~\cite{wang2017program}, we require such transformers to be sound; that is, 
\[
\text{if } 
\abstracttransformer{\dsloperator}( \abstractvalue_1, \mydots, \abstractvalue_n ) = \abstractvalue
\text{ and } 
\concretevalue_1 \in \gamma(\abstractvalue_1), \mydots, \concretevalue_n \in \gamma(\abstractvalue_n), 
\text{ then }
\concretetransformer{\dsloperator}(\concretevalue_1, \mydots, \concretevalue_n) \in \gamma(\abstractvalue). 
\]
Here, $\gamma$ is the concretization function that maps an abstract value to a set of concrete values. 
We also require an abstraction function $\alpha$ that maps a concrete value to an abstract value---which will be used to perform abstraction-based synthesis, as will be shown shortly.

\newpara{Finite tree automata (FTAs).}
FTAs~\cite{comon2008tree} accept trees, which our work will utilize to represent a set of abstract syntax trees (ASTs) for DSL programs. 
More formally, a (bottom-up) FTA is a tuple $\fta = (\ftastates, \ftaalphabet, \ftafinalstates, \ftatransitions)$, 
where $\ftastates$ is a set of states, $\ftaalphabet$ is a (ranked) alphabet, $\ftafinalstates \subseteq Q$ is a set of final states, and $\ftatransitions$ is a set of transitions of the form $\dsloperator(\ftastate_1, \mydots, \ftastate_n) \rightarrow \ftastate$ with $\dsloperator \in \ftaalphabet$ and $\ftastate_1, \mydots, \ftastate_n, \ftastate \in \ftastates$. 

$\fta$ \emph{accepts} a term $t$---which corresponds to a DSL program---if, using $\fta$'s transitions (i.e., rewrite rules), $t$ can be rewritten to one of $\fta$'s final states.  
The language of $\fta$, denoted by $\lang(\fta)$, is the set of terms accepted by $\fta$. 
Given any $t \in \lang(\fta)$, an \emph{accepting run} for $t$ maps each sub-term (which corresponds to a sub-program) of $t$ to a state in $\fta$, and maps $t$ to a final state.

\newpara{Program synthesis using FTAs and abstract semantics.}
A recent line of work~\cite{wang2019efficient} developed a technique to construct an FTA $\fta$ for a given DSL (with a given abstract semantics), such that $\lang(\fta)$ is exactly those programs that satisfy a given input-output example $\ioex$ with respect to the DSL's abstract semantics. 
$\fta$ can then be used for PBE~\cite{wang2017program,wang2017synthesis,wang2018relational,yaghmazadeh2018automated}.
Our work builds upon this approach. 
\autoref{fig:online-fta-rules} gives their FTA construction rules, which are briefly explained below.

\input{alg-figures/online-fta-rules}

In a nutshell, a state $\ftastate_{\grammarsymbol}^{\abstractvalue}$ is constructed in FTA $\fta$, if a (sub-)program at grammar symbol $\grammarsymbol$ can produce abstract value $\abstractvalue$. A transition connects argument states to a return state based on abstract semantics. 
Specifically, the first rule, \textsc{Syn-Var}, creates transitions for input variables, where $\variablesymbol$'s abstract value $\abstractvalue$ is obtained by applying $\alpha$ to the concrete value that $\inputex$ binds $\variablesymbol$ to. 
The \textsc{Syn-Prod} rule grows $\fta$ by processing every production rule. 
Finally, \textsc{Syn-Final} marks a state $\ftastate_{\startsymbol}^{\abstractvalue}$ final, if $\abstractvalue$ is consistent with $\outputex$. 
The soundness of abstract transformers ensures $\fta$ is also sound---that is, every accepting run of $\fta$ corresponds to a DSL program that satisfies $\ioex$ under the given abstraction. 
Since rules in \autoref{fig:online-fta-rules} are applied until saturation, $\lang(\fta)$ has all programs that abstractly satisfy $\ioex$. 
Hence, we can safely prune out programs not in $\fta$, without hindering synthesis completeness.

\newpara{Remarks.}
This FTA $\fta$ essentially models the abstract semantics for the given example, by mapping all programs in the DSL to runs in $\fta$. 
However, only those \emph{accepting} runs represent abstractly valid programs, while the rest cause the FTA bloat issue. Our work addresses exactly this bloat.

\subsection{Top-Level Algorithm}
\label{sec:alg:top-level}

\input{alg-figures/top-level-algs}

\algorithmref{alg:top-level} gives our top-level algorithm with two procedures, $\presynthesisalg$ and $\synthesisalg$.

\newpara{Presynthesis (offline phase).}
$\presynthesisalg$ is performed offline---\emph{once} for a given DSL.
It yields an FTA $\fta$ and its corresponding oracle $\slicingoracle$.
$\fta$ models the abstract semantics for all DSL programs, for \emph{a space of inputs} (which are of interest for future synthesis tasks).
That is,  given a program $\prog$, for any input from the considered space, $\fta$ has a corresponding run for $\prog$ which reveals $\prog$'s output (abstract) value as well as all intermediate ones. 
This is different from all prior work (to our best knowledge), which builds an FTA \emph{only} for those inputs specific to the synthesis task at hand. 
As a result, our offline $\fta$ is expected to be larger than its online counterpart from prior work, but $\fta$ is built during presynthesis. 
$\slicingoracle$ is a data structure, also built offline, that is used to accelerate online synthesis, which we will explain shortly.

\newpara{Synthesis (online phase).}
Before serving any synthesis requests, $\fta$ and $\slicingoracle$ are first loaded into a synthesis server. 
Then, given a synthesis task with a set $\ioexs$ of examples, $\synthesisalg$ will perform $\searchalg$ (as in prior work) to find a program that satisfies $\ioexs$ (with respect to the concrete semantics).
This $\searchalg$, however, is performed over a \emph{slice} $\ftaslicee{\ioex_1}$ that is built for one example $\ioex_1 \in \ioexs$.
$\ftaslicee{\ioex_1}$ is a sub-FTA of our offline $\fta$, and $\lang(\ftaslicee{\ioex_1})$ is all programs that satisfy $\ioex_1$ under abstract semantics. 
$\ftaslicee{\ioex_1}$ is computed using a new $\slicealg$ algorithm that leverages our oracle $\slicingoracle$ to address the FTA bloat.

\newpara{Addressing FTA bloat.}
Prior work can be viewed as also building a slice/FTA that represents all abstractly consistent programs. 
But, as illustrated in~\autoref{sec:overview}, they do it online (i.e., during synthesis), and suffer from the FTA bloat issue (i.e., creating too many unnecessary states/transitions). 
Note that the bloat cannot be addressed by only building an offline $\fta$. 
We must also know how to navigate in $\fta$, to avoid adding unnecessary states to the slice.
Our oracle $\slicingoracle$ is designed exactly for this purpose: our slicing algorithm, $\slicealg$, utilizes $\slicingoracle$ to compute the slice highly efficiently.

In what follows, we first describe how to build offline $\fta$, and then present how slicing works. 
Since $\searchalg$ can be done using any syntax-guided search technique, we skip its presentation.

\subsection{Offline FTA Construction}
\label{sec:alg:fta}

\autoref{fig:offline-fta-rules} gives the construction rules for $\buildftaalg$. 
It shares the same structure as~\autoref{fig:online-fta-rules}, except for two important differences. 
First, \textsc{Presyn-Var} creates  transitions for $\variablesymbol$, for all abstract values in $\abstractinputspace_{\variablesymbol}$ that $\variablesymbol$ can bind to. This is because we consider a space of inputs during presynthesis.
Second, \textsc{Presyn-Final} marks states  associated with the grammar's start symbol $\startsymbol$ as final states, regardless of the abstract value. 
This is because presynthesis does not have access to online examples yet.
On the other hand,  \textsc{Presyn-Prod} remains the same as its counterpart in~\autoref{fig:online-fta-rules}.


In what follows, we highlight a few important characteristics of this offline FTA $\fta$.

\newpara{Non-deterministic FTA to model semantics for all inputs.}
The \textsc{Presyn-Var} rule---which builds \emph{multiple} transitions for the same variable $\variablesymbol$---makes $\fta$ non-deterministic. That is, one program now corresponds to multiple runs in $\fta$. 
This non-determinism is intended, since $\fta$ aims to model the abstract semantics \emph{for all inputs} of interest, rather than for a specific one as in prior work.

\newpara{Compactness by sharing states across both programs and inputs.}
$\fta$ is compact in two ways. 
First, as in prior work, \emph{for the same input}, different (sub-)programs are merged into one state if they return the same output. 
Second, (sub-)programs that yield the same output, on \emph{different inputs}, are also merged.
In other words, state sharing in $\fta$ is not only across programs but also across inputs.

\newpara{Implication on synthesis.}
Since states in $\fta$ are shared across different inputs, simply extracting an \emph{accepting run} (as in prior work~\cite{wang2017synthesis,feng2017component}) is insufficient.
That is, given $\ioex = \inputex \inputtooutput \outputex$, an accepting run $\ftarun$ that ends in a final state $\ftastate_{\startsymbol}^{\abstractvalue}$ , where $\outputex \in \gamma(\abstractvalue)$, does not guarantee that its corresponding program $\prog$ satisfies $\ioex$ under abstract semantics. This is because $\ftarun$ might map an input variable $\variablesymbol$ in $\prog$ to a state $\ftastate_{\grammarsymbol}^{\abstractvalue}$, where 
$\abstractvalue \neq \alpha^{\abstractinputspace_{\variablesymbol}}(\inputex)$.
We will present a solution to this issue in the next \autoref{sec:alg:slicing}.

\input{alg-figures/offline-fta-rules}

\subsection{Slicing}
\label{sec:alg:slicing}

As alluded to at the end of~\autoref{sec:alg:fta}, given $\ioex = \inputex \inputtooutput \outputex$, we are interested in \emph{specific} accepting runs in $\fta$---which not only end in a state consistent with $\outputex$ but also \emph{begin with states that are consistent with $\inputex$.}
Such runs are said to be \emph{consistent with $\ioex$}, which is formally defined below. 

\begin{definition}[Example-consistent run]
Given 
an FTA $\fta$ constructed by~\autoref{fig:offline-fta-rules} for a given DSL and 
an example $\ioex = \inputex \inputtooutput \outputex$, 
consider an accepting run $\ftarun$ in $\fta$ that corresponds to program $\prog$. 
We say $\ftarun$ is consistent with $\ioex$, if the following two conditions are true.
\begin{enumerate}[leftmargin=*]
\item 
$\ftarun$ is output-consistent. 
That is,  
$\ftarun$ maps $\prog$ to a final state $\ftastate_{\startsymbol}^{\abstractvalue}$, such that 
$\outputex \in \gamma(\abstractvalue)$. 
\item 
$\ftarun$ is input-consistent. 
That is, 
$\ftarun$ maps each input variable $\variablesymbol$ in  $\prog$ to $\ftastate_{\grammarsymbol}^{\abstractvalue}$, such that 
$\abstractvalue = \alpha^{\abstractinputspace_{\variablesymbol}}\big(\inputex(\variablesymbol)\big)$, i.e., the abstract value $\abstractvalue$ correctly abstracts the concrete value that $\ioex_1$ binds $\variablesymbol$ to.
\end{enumerate}
It should be obvious that $\ftarun$'s corresponding program $\prog$ must satisfy $\ioex$ according to the DSL's abstract semantics that is used to build $\fta$.
\end{definition}

\newpara{Oracle-guided slicing algorithm.}
Now, we are ready to present our $\slicealg$ algorithm; see~\autoref{fig:online-slicing-rules}. Recall how $\slicealg$ is used in our synthesis algorithm (see~\alglineref{alg:top-level:slice} of~\algorithmref{alg:top-level}): it takes the offline FTA $\fta$ and an example $\ioex$ (and an  oracle $\slicingoracle$), and returns a \emph{slice} $\ftaslicee{\ioex}$ which is formally defined below.

\begin{definition}[Slice]
Given 
an FTA $\fta = (\ftastates, \ftaalphabet, \ftafinalstates, \ftatransitions)$ constructed by~\autoref{fig:offline-fta-rules} for a given DSL and an example $\ioex$, a slice $\ftaslicee{\ioex} = (\ftastates', \ftaalphabet, \ftafinalstates', \ftatransitions')$ is a \emph{sub-FTA} of $\fta$ (that is, $\ftastates' \subseteq \ftastates$, $\ftafinalstates' \subseteq \ftafinalstates$, $\ftatransitions' \subseteq \ftatransitions$), such that an accepting run $\ftarun$ of $\fta$ is consistent with $\ioex$ iff $\ftarun$ is an accepting run of $\ftaslicee{\ioex}$. 
\end{definition}

Because example-consistent runs always correspond to abstractly consistent programs, $\lang(\ftaslicee{\ioex})$ is exactly the set of programs that satisfy $\ioex$ under the DSL's abstract semantics.

\input{alg-figures/slicing-rules}

\input{alg-figures/slicing-oracle-rules}

Let us now explain how~\autoref{fig:online-slicing-rules} extracts $\ftaslicee{\ioex}$ from $\fta$. 
Our basic idea is to find states that are involved in at least one example-consistent run of $\fta$---such states belong to $\ftaslicee{\ioex}$. 
The \textsc{Output} rule first identifies final states of $\ftaslicee{\ioex}$: $\ftastate_{\startsymbol}^{\abstractvalue}$ is a final state of $\ftaslicee{\ioex}$, if (i) $\ftastate_{\startsymbol}^{\abstractvalue}$ is a final state in $\fta$ where $\abstractvalue$ is consistent with the output example $\outputex$; (ii) there is a run in $\fta$ to $\ftastate_{\startsymbol}^{\abstractvalue}$ that is consistent with $\inputex$. 
Note that (ii) is checked by $\inputconsistent$, which additionally takes as input the oracle $\slicingoracle$; we will explain how this check works shortly. 
The \textsc{Transition} rule serves as the recursive case, which grows $\ftaslicee{\ioex}$ by processing transitions in $\fta$. 
Specifically, given a state $\ftastate$ in $\ftaslicee{\ioex}$, we add an incoming transition of $\ftastate$ and its argument states $\ftastate_1, \mydots, \ftastate_n$ to $\ftaslicee{\ioex}$, if $\inputconsistent$ holds for every $\ftastate_i$. 
Finally, when reaching an input variable, the \textsc{Input} rule is used to collect those transitions $\variablesymbol \ftatransitionarrow \ftastate_{\grammarsymbol}^{\abstractvalue}$ where $\abstractvalue$ correctly abstracts $\variablesymbol$'s concrete value in $\inputex$.

Now, let us explain how $\inputconsistent$ works. 
Our approach first constructs an oracle $\slicingoracle$ during presynthesis for the offline FTA $\fta$ (see~\alglineref{alg:top-level:slicing-oracle} in~\algorithmref{alg:top-level}). 
At a high level, $\slicingoracle$ maps every state $\ftastate$ in $\fta$ to a constraint $\inputconsistentconstraint$: 
conceptually, $\inputconsistentconstraint$ describes, for every run $\ftarun$ to $\ftastate$, the \emph{abstract input} $\abstractinputenv$ that $\ftarun$ is consistent with. 
Here, $\abstractinputenv$ is a binding environment that maps input variables to abstract values. 
In other words, if $\slicingoracle(\ftastate)$ does not hold for some $\abstractinputenv$, then we know there is no run in $\fta$ to $\ftastate$ that is consistent with $\abstractinputenv$; this is exactly how $\inputconsistent$ utilizes $\slicingoracle$.
In particular, as shown in~\autoref{fig:online-slicing-rules}, $\inputconsistent$ first abstracts the input example $\inputex$ to an abstract input $\abstractinputenv$. Then, the check simply boils down to checking if $\abstractinputenv$ is a satisfying assignment for $\slicingoracle(\ftastate)$.

\newpara{Building oracle.}
\autoref{fig:slicing-oracle-rules} presents the $\buildslicingoraclealg$ procedure. 
For each state $\ftastate_{\grammarsymbol}^{\abstractvalue}$ in $\fta$, $\inputconsistentconstraint_{\ftatransition}$ creates a constraint for each incoming transition of $\ftastate_{\grammarsymbol}^{\abstractvalue}$, and their disjunction gives the constraint $\inputconsistentconstraint$ for $\ftastate_{\grammarsymbol}^{\abstractvalue}$. Intuitively, this means, $\inputconsistentconstraint$ is true (i.e., there is an input-consistent run to $\ftastate_{\grammarsymbol}^{\abstractvalue}$), as long as one of its incoming transitions lies on an input-consistent run. 
$\inputconsistentconstraint_{\ftatransition}$ considers two cases. 
In the base case where $\ftatransition = \variablesymbol \rightarrow \ftastate_{\grammarsymbol}^{\abstractvalue}$, $\inputconsistentconstraint_{\ftatransition}$ is simply $\variablesymbol = \abstractvalue$, meaning the abstract input must bind $\variablesymbol$ to $\abstractvalue$ in order for an input-consistent run to reach $\ftastate_{\grammarsymbol}^{\abstractvalue}$ via $\ftatransition$. 
In the recursive case, $\inputconsistentconstraint_{\ftatransition}$ conjoins $\slicingoracle(\ftastate_i)$ across all argument states. This is because $\ftastate_{\grammarsymbol}^{\abstractvalue}$ is reachable via $\ftatransition$ only if all of its argument states are reachable.

\newpara{Addressing FTA bloat.}
Let us now take a step back, and revisit the FTA bloat problem. 
The reason we can significantly reduce the bloat is because our slicing process is guarded by $\inputconsistent$, which ensures all states and transitions added to $\ftaslicee{\ioex}$ are necessary. 
One straw-man approach is to remove the $\inputconsistent$ check: (i) first traverse $\fta$ top-down to obtain $\fta'$ that contains all output-consistent runs, and (ii) then retain only runs in $\fta'$ that are input-consistent. 
The problem is that (i) will cause bloat, which is exactly what $\inputconsistent$ addresses. 

\newpara{Implementation.}
Our implementation incorporates a couple of optimizations. 
One is to simplify every constraint $\inputconsistentconstraint$, during $\buildslicingoraclealg$ (see \autoref{fig:slicing-oracle-rules}), into its disjunctive normal form, where each disjunct encodes a valid abstract input. 
This allows $\inputconsistentconstraint$ to be represented as a set of abstract inputs; then $\inputconsistent$ (see \autoref{fig:online-slicing-rules}) can be done by checking ``if $\slicingoracle(\ftastate)$ contains $\alpha(\inputex)$''.
We further optimize the $\textsc{Output}$ rule in \autoref{fig:online-slicing-rules}, to avoid going over every state in $\ftafinalstates$. 
The idea is to build a map $M$ \emph{offline} for $\fta$: $M$ maps each valid abstract input $\abstractinputenv$ to a set $S$ of final states, such that: for any $\abstractinputenv$, $M(\abstractinputenv)$ contains $\ftastate$ iff $\slicingoracle(\ftastate)$ contains $\abstractinputenv$.
Then, \textsc{Output} first does a look-up $M \big( \alpha(\inputex) \big)$ to find input-consistent final states, followed by checking against $\outputex$.

\subsection{Theorem}
\label{sec:alg:theorems}


\begin{theorem}[Soundness and Completeness] 
Suppose $\ftaslicee{\ioex_1}$ is the slice returned by  $\slicealg$ at~\alglineref{alg:top-level:slice} of~\algorithmref{alg:top-level}, given a DSL with abstract semantics and a set $\ioexs = \{ \ioex_1, \mydots, \ioex_n \}$ of examples. 
Then $\lang(\ftaslicee{\ioex_1})$ is exactly the set of DSL programs that satisfy $\ioex_1$ under abstract semantics. 
That is, $\slicealg$ is both sound (i.e., every accepting run of $\ftaslicee{\ioex_1}$ represents a DSL program that satisfies $\ioex_1$ under abstract semantics) and complete (i.e., every DSL program that satisfies $\ioex_1$ under abstract semantics corresponds to an accepting run in $\ftaslicee{\ioex_1}$).
\label{theorem:slice-sound-complete} 
\end{theorem}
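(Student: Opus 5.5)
We reduce the theorem to showing that the output of $\slicealg$ is a slice in the sense of the definition of slices. That is, we show that an accepting run $\ftarun$ of the offline $\fta$ is consistent with $\ioex_1$ iff $\ftarun$ is an accepting run of $\ftaslicee{\ioex_1}$. This reduction, together with two bridging facts, yields both directions of the theorem.

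The first bridging fact is that every DSL program $\prog$ and every input $\inputex$ in the considered space admit an example-consistent run of $\fta$. This follows by structural induction on $\prog$. The rule \textsc{Presyn-Var} supplies the transition $\variablesymbol \rightarrow \ftastate^{\alpha(\inputex(\variablesymbol))}$. The rule \textsc{Presyn-Prod}, applied to saturation, supplies the transition labeled by the abstract transformer applied to the children's abstract values. The rule \textsc{Presyn-Final} makes every $\startsymbol$-state final. The second bridging fact is the converse: the program of an example-consistent run satisfies $\ioex_1$ abstractly. This holds because the run computes exactly the abstract evaluation of $\prog$ on $\alpha(\inputex)$, and the definition of example-consistent runs already remarks on this. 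With these two facts, soundness and completeness of $\lang(\ftaslicee{\ioex_1})$ follow immediately from the slice property.

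The core is a lemma characterizing the oracle. For every state $\ftastate$ and every abstract input $\abstractinputenv$, $\abstractinputenv \models \slicingoracle(\ftastate)$ iff there is a run of $\fta$ to $\ftastate$ that is input-consistent with $\abstractinputenv$, meaning every variable leaf $\variablesymbol$ is mapped to a state with value $\abstractinputenv(\variablesymbol)$. We prove this by induction on the height of states in $\fta$. The unrolled, finitized grammar makes $\fta$ acyclic, since every transition goes from symbols of lower unrolling depth to higher depth, so the induction is well founded. The base case is the constraint $\variablesymbol = \abstractvalue$. The inductive case follows from the disjunction over incoming transitions and the conjunction over argument states in \autoref{fig:slicing-oracle-rules}. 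Here a run to $\ftastate$ through $\ftatransition$ is exactly a choice of runs to each argument state, all consistent with the same $\abstractinputenv$. It follows that $\inputconsistent(\ftastate)$ holds iff some run to $\ftastate$ is consistent with $\abstractinputenv = \alpha(\inputex)$.

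We then prove the two directions of the slice property.

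\emph{($\Leftarrow$)} Every state in $\ftaslicee{\ioex_1}$ comes from the rules \textsc{Output}, \textsc{Transition} and \textsc{Input}. Every final state is output-consistent by the side condition of \textsc{Output}. Every variable transition kept satisfies $\abstractvalue = \alpha(\inputex(\variablesymbol))$ by the side condition of \textsc{Input}. Hence any accepting run of the slice, which uses only these transitions, is output- and input-consistent, and it is also a run of $\fta$ because the slice is a sub-FTA.

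\emph{($\Rightarrow$)} Take an example-consistent run $\ftarun$ and argue top-down over the subterms of its program.
\begin{itemize}
\item The root state is final and output-consistent. Its subrun witnesses input consistency, so by the oracle lemma $\inputconsistent$ holds and \textsc{Output} adds the root state.
\item Inductively, if the state at a subterm is in the slice, then for the transition $\ftarun$ uses there, each argument state has an input-consistent subrun inside $\ftarun$. So $\inputconsistent$ holds for each argument, and \textsc{Transition} adds the transition together with its argument states.
\item At the leaves, \textsc{Input} adds the variable transitions, because their values equal $\alpha(\inputex(\variablesymbol))$.
\end{itemize}

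The main obstacle is getting the oracle lemma exactly right. In particular, consistency must mean a single shared $\abstractinputenv$ across all argument subruns, which is why the constraint is a conjunction of formulas over the same variables rather than an independent reachability check per argument. Acyclicity is also needed to justify the induction.
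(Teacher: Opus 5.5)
Your proposal is correct and essentially reproduces the paper's proof. Your two bridging facts are the appendix lemmas on offline-FTA completeness and on input-consistent runs ending at the abstract output $\abstracttransformer{\prog}(\inputex)$, and your oracle characterization is the oracle-preciseness lemma, with the same reliance on acyclicity. Your ($\Leftarrow$)/($\Rightarrow$) arguments are the paper's soundness induction via \textsc{Output}/\textsc{Input} and its top-down completeness claim via \textsc{Output}, \textsc{Transition}, \textsc{Input}, with the oracle lemma discharging each $\inputconsistent$ side condition.

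One wording fix: your first bridging fact should assert an input-consistent \emph{accepting} run for every program. Output-consistency, and hence example-consistency, follows via your second fact only when $\prog$ abstractly satisfies $\ioex_1$.
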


    
\begin{proof}
Proofs can be found in~\autoref{sec:appendix:proof}.
\end{proof}

%% file: alg-figures/online-fta-rules.tex
\begin{figure}[!t]
\small 
\centering

\[
\arraycolsep=3pt\def\arraystretch{1}
\begin{array}{c}

\textsc{[Syn-Var]}
\ 
\irule{
\begin{array}{c}
\ 
\variablesymbol  \in \allvariablesymbols
\quad\quad  
\grammarsymbol \productionarrow \variablesymbol \in \productions
\quad\quad
\abstractvalue = \alpha^{\abstractinputspace_{\variablesymbol}} \big( \inputex(x) \big)
\ 
\end{array}
}{
\ftastate_{s}^\abstractvalue \in \ftastates 
\quad 
\variablesymbol \in \ftaalphabet
\quad 
\variablesymbol \ftatransitionarrow \ftastate_{s}^\abstractvalue \in \ftatransitions
}

\quad \quad

\textsc{[Syn-Final]}
\ 
\irule{
\ 
\ftastate_{\startsymbol}^{\abstractvalue} \in \ftastates
\quad\quad
\outputex \in \gamma(\abstractvalue)
\ 
}{
\ftastate_{\startsymbol}^{\abstractvalue} \in \ftafinalstates
}

\\ \\

\textsc{[Syn-Prod]}
\ 
\irule{
\ 
\grammarsymbol \productionarrow \dsloperator( \grammarsymbol_1, \mydots, \grammarsymbol_n ) \in \productions

\quad \quad 

\ftastate_{\grammarsymbol_i}^{\abstractvalue_i} \in \ftastates

\quad \quad 

\abstractvalue = 
\transformer{\dsloperator}(\abstractvalue_1, \mydots, \abstractvalue_n) 

\
}{

\ftastate_{\grammarsymbol}^\abstractvalue \in \ftastates 

\quad

\dsloperator \in \ftaalphabet

\quad 

\dsloperator (\ftastate_{\grammarsymbol_1}^{\abstractvalue_1}, \mydots, \ftastate_{\grammarsymbol_n}^{\abstractvalue_n} ) \ftatransitionarrow \ftastate_{\grammarsymbol}^{\abstractvalue} \in \ftatransitions

}

\end{array}
\]
\caption{Rules for constructing $\fta = ( \ftastates, \ftaalphabet, \ftafinalstates, \ftatransitions)$, 
given 
(i) 
a DSL whose grammar is $\grammar = ( \nonterminalsymbols, \terminalsymbols, \startsymbol, \productions )$ and whose abstract semantics is defined by abstract transformers $\abstracttransformer{\cdot}$
and (ii) 
an input-output example $\ioex = \inputex \inputtooutput \outputex$. $\alpha^{\abstractinputspace_{\variablesymbol}}$ is the abstraction function over the abstract domain of variable $\variablesymbol$, while $\gamma$ is the concretization function.}
\label{fig:online-fta-rules}
\end{figure}

%% file: alg-figures/top-level-algs.tex
\begin{figure}[!t]
\vspace{-5pt}
\small 
\begin{algorithm}[H]
\caption{Top-level algorithm for presynthesis-based synthesis.}
\label{alg:top-level}
\begin{algorithmic}[1]

\Statex\myprocedure{\presynthesisalg($\grammar$)}
\vspace{1pt}

\Statex\myinput{$\grammar$ is a context-free grammar that is associated with abstract semantics.}

\Statex\Output{An offline FTA $\fta$ and its oracle $\slicingoracle$.} 
\vspace{2pt}

\State 
$\fta \assign \buildftaalg( \grammar )$;
\label{alg:top-level:fta}
\State 
$\slicingoracle \assign \buildslicingoraclealg( \fta )$;
\label{alg:top-level:slicing-oracle}

\State 
\Return $( \fta, \slicingoracle )$;

\vspace{5pt}

\Statex\myprocedure{\synthesisalg($\ioexs, \fta, \slicingoracle$)}
\vspace{1pt}

\Statex\myinput{$\ioexs = \{ \ioex_1, \mydots, \ioex_n \}$ is a set of $n$ input-output examples.} 

\Statex\myinput{$\fta$ and $\slicingoracle$ are the FTA and oracle build by presynthesis.}

\Statex\Output{A program $\prog$ that satisfies $\ioexs$ under concrete semantics, or $\mynull$ indicating no such program is found.}
\vspace{2pt}

\State
$\ftaslicee{\ioex_1} \assign \slicealg( \fta, \ioex_1, \slicingoracle )$;
\label{alg:top-level:slice}

\State 
$\prog \assign \searchalg( \ftaslicee{\ioex_1}, \ioexs )$;
\label{alg:top-level:search}
\State
\Return $\prog$;

\end{algorithmic}
\end{algorithm}
\vspace{-10pt}
\end{figure}

%% file: alg-figures/offline-fta-rules.tex
\begin{figure}[!t]
\small 
\centering

\[
\arraycolsep=3pt\def\arraystretch{1}
\begin{array}{c}

\textsc{[Presyn-Var]} 
\ 
\irule{
\ 

\variablesymbol  \in \allvariablesymbols

\quad \quad 

\grammarsymbol \rightarrow \variablesymbol \in \productions

\quad\quad 

\abstractvalue \in \abstractinputspace_{\variablesymbol}

\ 

}{
\ftastate_{\grammarsymbol}^{\abstractvalue} \in \ftastates 

\quad 

\variablesymbol \in \ftaalphabet

\quad 

\variablesymbol \ftatransitionarrow \ftastate_{\grammarsymbol}^{\abstractvalue} \in \ftatransitions

}
\qquad \quad 
\textsc{[Presyn-Final]}
\ 
\irule{
\ftastate_{\startsymbol}^{\abstractvalue} \in \ftastates
}{
\ \ 
\ftastate_{\startsymbol}^{\abstractvalue} \in \ftafinalstates
\ \ 
}

\\ \\

\textsc{[Presyn-Prod]}
\ 
\irule{
\ 
\grammarsymbol \productionarrow \dsloperator( \grammarsymbol_1, \mydots, \grammarsymbol_n ) \in \productions

\quad \quad 

\ftastate_{\grammarsymbol_i}^{\abstractvalue_i} \in \ftastates

\quad \quad 

\abstractvalue = 
\transformer{\dsloperator}(\abstractvalue_1, \mydots, \abstractvalue_n) 

\
}{

\ftastate_{\grammarsymbol}^\abstractvalue \in \ftastates 

\quad

\dsloperator \in \ftaalphabet

\quad 

\dsloperator (\ftastate_{\grammarsymbol_1}^{\abstractvalue_1}, \mydots, \ftastate_{\grammarsymbol_n}^{\abstractvalue_n} ) \ftatransitionarrow \ftastate_{\grammarsymbol}^{\abstractvalue} \in \ftatransitions

}
\end{array}
\]
\caption{Rules for $\buildftaalg$, which construct an FTA $\fta = ( \ftastates, \ftaalphabet, \ftafinalstates, \ftatransitions)$, given 
a DSL whose grammar is $\grammar = ( \nonterminalsymbols, \terminalsymbols, \startsymbol, \productions )$ and whose abstract semantics is defined with abstract transformers $\abstracttransformer{\cdot}$. 
$\abstractinputspace_{\variablesymbol}$ gives the set of abstract values for input variable $\variablesymbol$ (i.e., the space of inputs of interest for $\variablesymbol$).
}
\label{fig:offline-fta-rules}
\end{figure}

%% file: alg-figures/slicing-rules.tex
\begin{figure}[!t]
\small 

\centering
\[
\arraycolsep=3pt\def\arraystretch{1}
\begin{array}{c}

\textsc{[Output]}
\
\irule{

\ \ 

\ftastate_{\startsymbol}^{\abstractvalue} \in \ftafinalstates

\quad \quad 

\outputex \in \gamma(\abstractvalue)

\quad \quad

\inputconsistent(\ftastate_{\startsymbol}^{\abstractvalue}, \inputex, \slicingoracle)

\ \ 

}{
\ftastate_{\startsymbol}^{\abstractvalue} \in \ftastates'
\quad 
\ftastate_{\startsymbol}^{\abstractvalue} \in \ftafinalstates'
} 

\\
\\

\textsc{[Transition]} 
\ 
\irule{
\ \

\ftastate \in \ftastates'

\quad \quad 

\dsloperator ( \ftastate_1, \mydots, \ftastate_n ) \ftatransitionarrow \ftastate \in \ftatransitions

\quad \quad

\forall i \in [1,n]: \inputconsistent(\ftastate_i, \inputex, \slicingoracle)

\ \ 
}{
\ftastate_1, \mydots, \ftastate_n \in \ftastates'
\quad 
\dsloperator ( \ftastate_1, \mydots, \ftastate_n ) \ftatransitionarrow \ftastate \in \ftatransitions'
}

\\ \\ 

\textsc{[Input]}
\
\irule{

\ \ 

\ftastate_{\grammarsymbol}^{\abstractvalue} \in \ftastates'

\quad \quad 

\variablesymbol \ftatransitionarrow \ftastate_{\grammarsymbol}^{\abstractvalue} \in \ftatransitions

\quad \quad 

\abstractvalue = \alpha^{\abstractinputspace_{\variablesymbol}} \big( \inputex(\variablesymbol) \big)

\ \ 

}{
\variablesymbol \ftatransitionarrow \ftastate_{\grammarsymbol}^{\abstractvalue} \in \ftatransitions'
}

\\ \\

\textbf{where}\ \ \inputconsistent(\ftastate, \inputex, \slicingoracle) = 

\Big(
\text{True} 
\ \ 
\text{iff} 
\ \ 
\alpha(\inputex)
\models \slicingoracle(\ftastate) 
\Big)

\ \ 
\text{for }
\alpha(\inputex) = 
\Big\{ 
\variablesymbol \inputtooutput \alpha^{\abstractinputspace_{\variablesymbol}} \big( \inputex(\variablesymbol) \big)
\Big\} 

\end{array}
\]

\caption{Rules for $\slicealg$, which constructs a slice/FTA $\ftaslice = (\ftastates', \ftaalphabet, \ftafinalstates', \ftatransitions')$, given (i) an FTA $\fta = (\ftastates, \ftaalphabet, \ftafinalstates, \ftatransitions)$, (ii) an input-output example $\ioex = \inputex \inputtooutput \outputex$, and (iii) an oracle $\slicingoracle$ for $\fta$.}
\label{fig:online-slicing-rules}
\end{figure}

%% file: alg-figures/slicing-oracle-rules.tex
\begin{figure}[!t]
\small 
\centering

\begin{gather*}
\slicingoracle ( \ftastate_{\grammarsymbol}^{\abstractvalue} )
= \biglor\limits_{\ftatransition \in \ftatransitions}
\inputconsistentconstraint_{\ftatransition}
\quad 
\textbf{where }
\inputconsistentconstraint_{\ftatransition} =
\left\{
\begin{array}{llll}
\variablesymbol = \abstractvalue
& 
\text{ if }
\ftatransition = \variablesymbol \rightarrow \ftastate_{\grammarsymbol}^{\abstractvalue}
\\[5pt]
\bigland_{i \in [1, n]} \slicingoracle( \ftastate_i )
& 
\text{ if }
\ftatransition = \dsloperator(\ftastate_1, \mydots, \ftastate_n) \productionarrow \ftastate_{\grammarsymbol}^{\abstractvalue}
\\
\end{array}
\right.
\end{gather*}
\caption{Rules for $\buildslicingoraclealg$, which constructs an oracle $\slicingoracle$ by processing each state $\ftastate_{\grammarsymbol}^{\abstractvalue}$ in $\fta$.}
\label{fig:slicing-oracle-rules}
\end{figure}

%% file: sections/04-sql.tex

\section{Implementation and Instantiations}
\label{sec:sql}

This section presents an implementation of our presynthesis-based synthesis framework, followed by instantiations of the framework for three different domains. 

\subsection{$\frameworkname$: Implementing Presynthesis-Based Synthesis}
\label{sec:sql:framework-impl}

\input{alg-figures/sql-search-rules}

While the presynthesis module is parametrized with a domain-specific abstraction, our implementation of the synthesis component is domain-general. 
In particular, our $\searchalg$ algorithm is based on the standard bottom-up search technique with equivalence class reduction~\cite{wang2017synthesis}, which we briefly explain below.
Our implementation of the entire framework is called $\frameworkname$.

Recall our $\searchalg$ procedure (see~\alglineref{alg:top-level:search} of~\algorithmref{alg:top-level}): it takes as input a set $\ioexs$ of examples and a slice/FTA $\ftaslicee{\ioex}$ for one example $\ioex \in \ioexs$, and returns a program that satisfies all examples in $\ioexs$ under concrete semantics. 
Our implementation constructs an FTA $\fta'$ given $\ftaslicee{\ioex}$ and $\ioex$, using rules from~\autoref{fig:sql-cfta-rules}. 
This is a bottom-up process which ``concretizes'' $\ftaslicee{\ioex}$ using the DSL's concrete semantics: \textsc{Search-Var} builds transitions for $\variablesymbol$, \textsc{Search-Transition} iteratively grows $\fta'$, and finally \textsc{Search-Final} marks final states. 
Whenever a final state $\ftastate$ is produced, we pause the concretization process, and immediately enumerate accepting runs (i.e., queries) to $\ftastate$: if a program passing $\ioexs$ is found, then $\searchalg$ terminates; 
otherwise, concretization resumes, and the aforementioned process repeats. We return $\mynull$ if no satisfying program is found (within the given timeout).

\subsection{$\sqlname$: Instantiating $\frameworkname$ for SQL}
\label{sec:sql:sql}

We first show our query DSL, then present the abstraction (i.e., abstract domain and transformers) used for presynthesis, and finally explain key optimizations in brief for the SQL domain.

\newpara{Query language.}
\autoref{fig:sql-syntax} gives the syntax (which is an expressive subset of SQL).
A query $\sqlquery$ takes as input one or more table-typed variables, and returns a table.
Operators all have standard semantics. 
$\proj_{\columnlist}(\sqlquery)$ selects columns $\columnlist$ from $\sqlquery$'s result, where $\columnlist$ is a list of column indices. 
$\filter_{\sqlpredicate}(\sqlquery)$ retains rows that satisfy a filter condition $\sqlpredicate$. 
Here, $\sqlpredicate$ supports comparison between two columns, or between a column and a value (string or numeric). 
$\groupby_{\columnlist, \agglist}(\sqlquery)$ first groups all rows from $\sqlquery$ by $\columnlist$, then performs aggregation for each group based on $\agglist$, and finally outputs a table with the aggregated content for all groups.
$\distinct_{\columnlist}(\sqlquery)$ removes duplicate rows according to $\columnlist$.
$\innerjoin_{\sqlpredicate}(\sqlquery_1, \sqlquery_2)$ and $\leftjoin_{\sqlpredicate}(\sqlquery_1, \sqlquery_2)$ perform inner join and left join respectively. 
When $\sqlpredicate$ is used as a join condition, without loss of generality, we assume the column index $\colidx$ that appears on the left (resp. right) hand side of $\logicop$ always refers to a column in $\sqlquery_1$ (resp. $\sqlquery_2$); this is solely to simplify the later presentation of our abstract semantics.

\input{alg-figures/sql-syntax}

\newpara{Abstract domain.}
Like prior work~\cite{wang2017program}, we apply predicate abstraction, which abstracts a concrete value into a \emph{conjunction} of predicates. 
A predicate always refers to a grammar symbol $\grammarsymbol$ whose value is being abstracted, and may also involve constants and input variables. 
Specifically, we define the following predicate \emph{templates} to abstract a query $\sqlquery$'s return table. 
\begin{itemize}[leftmargin=*]

\item 
Our first predicate template is $\prednumofrows^{\sqlquery} ( \sigma_1, \sigma_2)$, 
where  $\sqlquery$ is the \emph{Query} grammar symbol and  $\sigma_i$ is the summation over some input variables and constants. 
This predicate is true, if the number of rows in $\sqlquery$ falls in the range $[ \sigma_1, \sigma_2 ]$.

\item 
Our second template is $\prednumofcols^\sqlquery(k)$, 
where $k$ is a constant. It means ``query $\sqlquery$ yields a table with exactly $k$ columns''. 

\item 
The next one is $\predsubsetofinput^\sqlquery( \setofcols, \sqlinputtablevar)$, 
where $\setofcols$ is a set of column indices (constants) and  $\sqlinputtablevar$ is an input variable.
This predicate is true, if ($\forall j \in \setofcols$)  the $j$-th column  of $\sqlquery$'s return table is a subset of some column from $\sqlinputtablevar$'s table (i.e., the table that $\sqlinputtablevar$ binds to). 

\item 
The last template is $\predcoltype^\sqlquery(\setofcols, t)$, stating that ($\forall j \in \setofcols$) the $j$-th column of $\sqlquery$  has type $t$. Here, $t$ is either numeric or string.

\end{itemize}

Abstracting $\agglist$ and $\columnlist$ boils down to abstracting the column index $\colidx$. 
This is done by a predicate $\predcolrange^{\colidx}(\setofcols)$, meaning $\colidx$ is in the set $\setofcols$ of column indices.
Then, $\columnlist$ is abstracted by $\predlistcolinset^{\columnlist}([ \setofcols_1, \mydots, \setofcols_l ])$, meaning $\columnlist$ has exactly $l$ columns and its $i$-th column ($\forall i \in [1, l]$) is in $\setofcols_i$.
$\agglist$ can be abstracted in a similar manner. 
These abstractions are crucial for propagating $\sqlquery$'s abstract values through query operators (which heavily make use of $\columnlist$ and $\agglist$). 

The join/filter condition $\sqlpredicate$ is abstracted using $\predcolsused^{\sqlpredicate}( \maxcol_1, \maxcol_2 )$, where $\maxcol_1$ (resp. $\maxcol_2$) is the max column index involved in $\sqlpredicate$ that appear on the left (resp. right) hand side of $\logicop$. 
Specifically, any $\colidx$ used in $\sqlpredicate$ is still abstracted by $\predcolrange$.
Then, $\maxcol_1$ (resp. $\maxcol_2$) for $\sqlpredicate$ is computed by taking the max of $\predcolrange$ predicates for each $\colidx$ that is on the left (resp. right) hand side of any $\logicop$ in $\sqlpredicate$.

Finally, $\sqlname$'s abstract input space (namely, $\abstractinputspace_{\variablesymbol}$ in~\autoref{fig:offline-fta-rules}) considers up to 3 input variables each with up to 15 columns. In other words, $k$ in $\prednumofcols$ ranges from 1 to 15.

\input{alg-figures/sql-transformers}

\newpara{Abstract transformers.}
Following prior work~\cite{wang2017program},  we use a generic transformer for each operator $\dsloperator$, which takes an abstract value---that is a \emph{conjunction} of predicates---for each of $\dsloperator$'s arguments and returns an abstract value.  Therefore, it suffices to define an abstract transformer for every operator $\dsloperator$ given an \emph{atomic} predicate for each argument.~\autoref{fig:sql-transformers} gives a representative list of transformers. 
\[\small
\abstracttransformer{\dsloperator}
\Big( 
\bigland_{i_1} \varphi_{i_1}, 
\mydots, 
\bigland_{i_n} \varphi_{i_n}
\Big)
\assign 
\bigland_{i_1} \mydots \bigland_{i_n}
\abstracttransformer{\dsloperator}
(
\varphi_{i_1}, \mydots, \varphi_{i_n}
)
\vspace{-5pt}
\]

\newpara{Optimizations.}
$\sqlname$ incorporates a few optimizations to improve the performance for both presynthesis and synthesis.
First, we reduce the abstract domain by defining a canonical order on column indices in $\predsubsetofinput$ and $\predcoltype$ (e.g., numeric  columns before string columns). 
This does not hinder completeness, since the desired order can always be recovered with a projection, but it significantly speeds up presynthesis.
On top of this canonicalization, we slightly adjust our $\searchalg$ to find a query $\sqlquery$ whose return table \emph{subsumes} all columns in the example output table, and then infer the top-level projection given $\sqlquery$.
Our final optimization is around further accelerating presynthesis. 
Recall that a set $\setofcols$ of columns is tracked in our abstraction. The canonical column order allows us to represent $\setofcols$ as an interval, making presynthesis faster without losing completeness.

\subsection{$\stringname$: Instantiating $\frameworkname$ for String Transformation}

While SQL is the focus domain, to fully demonstrate our generality, we also briefly present two more instantiations, one for string transformation and another for matrix manipulation (next section). 
We use DSLs from the $\blaze$ work~\cite{wang2017program}, and focus our presentation on the abstraction. 

Same as $\blaze$, we use a predicate template to track the length of strings. Our second template $\fromInput^S(J, x)$ tracks if the $j$-th character ($\forall j \in J$) of string $S$ appears in the input variable $x$.
The third and last template $\charEqual^S(J, x, k)$ refines $\fromInput^S(J, x)$: it requires the $j$-th character in $S$ to match the $k$-th character in $x$. 
The atomic transformers can be found in~\autoref{sec:appendix:stringtransformer}.



\subsection{$\matrixname$: Instantiating $\frameworkname$ for Matrix Manipulation}

Following $\blaze$, we have a predicate template $\dimNum^T(k)$ to track the number of dimensions for a matrix/tensor, and another $\shapeOf^T(v)$ that tracks the tensor's shape (i.e., the $i$-th dimension of tensor $T$ is $v[i]$). 
We also have a third template $\elementAt^T(\vout, x, \vin)$, indicating the element in $T$ at position $\vout$ is equal to the element in an input tensor $x$ at position $\vin$. 
The corresponding abstract transformers can be found in~\autoref{sec:appendix:matrixtransformer}.

%% file: alg-figures/sql-search-rules.tex
\begin{figure}[!t]
\small 
\centering

\[
\arraycolsep=3pt\def\arraystretch{1}
\begin{array}{c}

\textsc{[Search-Var]}
\ 
\irule{
\begin{array}{c}
\ 
\variablesymbol \ftatransitionarrow \ftastate_{\grammarsymbol}^{\abstractvalue} \in \ftatransitions
\quad\quad  
\concretevalue = \inputex(\variablesymbol)
\ 
\end{array}
}{
\ftastate_{\grammarsymbol}^{\concretevalue} \in \ftastates' 
\quad 
\variablesymbol \ftatransitionarrow \ftastate_{\grammarsymbol}^{\concretevalue} \in \ftatransitions'
}

\quad \quad

\textsc{[Search-Final]}
\ 
\irule{
\ 
\ftastate_{\startsymbol}^{\concretevalue} \in \ftastates'
\quad\quad
\concretevalue = \outputex 
\ 
}{
\ftastate_{\startsymbol}^{\concretevalue} \in \ftafinalstates'
}

\\ \\

\textsc{[Search-Transition]}
\ 
\irule{
\ 
\dsloperator (\ftastate_{\grammarsymbol_1}^{\abstractvalue_1}, \mydots, \ftastate_{\grammarsymbol_n}^{\abstractvalue_n} ) \ftatransitionarrow \ftastate_{\grammarsymbol}^{\abstractvalue} \in \ftatransitions
\quad \quad 
\ftastate_{\grammarsymbol_i}^{\concretevalue_i} \in \ftastates'
\quad \quad 
\concretevalue_i \in \gamma(\abstractvalue_i)
\quad \quad 
\concretevalue = 
\concretetransformer{\dsloperator}(\concretevalue_1, \mydots, \concretevalue_n) 

\
}{

\ftastate_{\grammarsymbol}^{\concretevalue} \in \ftastates' 
\quad
\dsloperator (\ftastate_{\grammarsymbol_1}^{\concretevalue_1}, \mydots, \ftastate_{\grammarsymbol_n}^{\concretevalue_n} ) \ftatransitionarrow \ftastate_{\grammarsymbol}^{\concretevalue} \in \ftatransitions'

}

\end{array}
\]
\caption{
Rules for constructing a (concrete) FTA $\fta' = ( \ftastates', \ftaalphabet, \ftafinalstates', \ftatransitions')$, 
given a slice/FTA $\ftaslicee{\ioex} = ( \ftastates, \ftaalphabet, \ftafinalstates, \ftatransitions)$ for an input-output example $\ioex = \inputex \inputtooutput \outputex$.
}
\label{fig:sql-cfta-rules}
\end{figure}

%% file: alg-figures/sql-syntax.tex
\begin{figure}[!t]
\small 
\centering
\[
\begin{array}{rrcl}

\textit{Query} &
\sqlquery &
\grammareq &

\sqlinputtablevar 
\ | \ 
\proj_{\columnlist}(\sqlquery) 
\ | \ 
\filter_{\filterpred}(\sqlquery) 
\ | \ 
\groupby_{\columnlist, \agglist}(\sqlquery) 
\\

& & | &
\distinct_{\columnlist}(\sqlquery)  
\ | \ 
\innerjoin_{\joinpred}(\sqlquery,\sqlquery) 
\ | \ 
\leftjoin_{\joinpred}(\sqlquery, \sqlquery) 

\\

\textit{Column List} & 
\columnlist & 
\grammareq &
[ \colidx, \mydots, \colidx ]

\\

\textit{Aggregation List} &
\agglist &
\grammareq &
[ \aggexpr, \mydots, \aggexpr ]

\\

\textit{Aggregation} &
\aggexpr &
\grammareq &
\agg(\colidx)

\\

\textit{Condition} &
\sqlpredicate &
\grammareq &
\colidx \logicop \colidx
\ | \ 
\colidx \logicop \sqlvalue
\ | \ 
\sqlpredicate \land \sqlpredicate
\ | \ 
\sqlpredicate \lor \sqlpredicate

\\

\textit{Operator} &
\logicop &
\grammareq &
\leq  
\ | \ 
< 
\ | \ 
= 
\ | \ 
\neq 
\ | \ 
> 
\ | \ 
\geq 
\ | \ 
\like

\end{array}
\]
\[
\begin{array}{c}
\sqlinputtablevar \in \textbf{Input Variables} 
\quad 
\colidx \in \textbf{Column Indices} 
\quad 
\sqlvalue \in \textbf{Values}
\quad
\agg \in \{ \sqlmin, \sqlmax, \sqlsum, \sqlcount, \sqlavg \}
\end{array}
\]
\vspace{-10pt}
\caption{Syntax of our query DSL (which is an expressive subset of SQL).}
\label{fig:sql-syntax}
\vspace{-10pt}
\end{figure}

%% file: alg-figures/sql-transformers.tex
\begin{figure}[!t]
\fontsize{8.5pt}{10pt}\selectfont
\centering
\renewcommand{\arraystretch}{1.2}
\setlength{\arraycolsep}{1pt} 
\[
\begin{array}{rlll}

\abstracttransformer{\proj}
\big( 
\prednumofcols^{\sqlquery}( k ),  \ 
\predlistcolinset^{\columnlist} ( [ \setofcols_1, \mydots, \setofcols_l ] ) 
\big)
& \assign 
& \prednumofcols^{\sqlquery'} ( l ) 
\ \  
\text{if } 
\bigland_{i \in [1,l]}
\big( \max ( \setofcols_i ) \leq k \big)
\\ 

\abstracttransformer{\proj}
\big( 
\predsubsetofinput^{\sqlquery}( \setofcols, \sqlinputtablevar),  \ 
\predlistcolinset^{\columnlist} ( [ \setofcols_1, \mydots, \setofcols_l ] ) 
\big)
& \assign 
& \predsubsetofinput^{\sqlquery'} (\setofcols', \sqlinputtablevar) 
\ \ 
\text{where } \setofcols' = \{ i \ | \ \setofcols_i \subseteq \setofcols \}
\\ 

\abstracttransformer{\proj}
\big( 
\predcoltype^{\sqlquery}( \setofcols, t),  \ 
\predlistcolinset^{\columnlist} ( [ \setofcols_1, \mydots, \setofcols_l ] ) 
\big)
& \assign 
& \predcoltype^{\sqlquery'} (\setofcols', t) 
\ \ 
\text{where } \setofcols' = \{ i \ | \ \setofcols_i \subseteq \setofcols \}
\\[4pt]

\abstracttransformer{\filter}
\big( 
\prednumofrows^{\sqlquery}(   \sigma_1, \sigma_2  ),  \ 
\sqlpredicate
\big)
& \assign 
& \prednumofrows^{\sqlquery'}(  0, \sigma_2  ) 
\\

\abstracttransformer{\filter}
\big( 
\prednumofcols^{\sqlquery}( k ),  \ 
\predcolsused^{\sqlpredicate} ( \maxcol_1, \maxcol_2 ) 
\big)
& \assign 
& \prednumofcols^{\sqlquery'} ( k ) 
\ \  
\text{if } \max(\maxcol_1, \maxcol_2) \leq k
\\

\abstracttransformer{\filter}
\big( 
\predcoltype^{\sqlquery}( \setofcols, t),  \ 
\sqlpredicate
\big)
& \assign 
& \predcoltype^{\sqlquery'}( \setofcols, t) 
\\[4pt]

\abstracttransformer{\innerjoin}
\big( 
\prednumofrows^{\sqlquery_1}( \sigma_{1, 1}, \sigma_{1, 2} ),  \ 
\prednumofrows^{\sqlquery_2}( \sigma_{2, 1}, \sigma_{2, 2} ),  \ 
\sqlpredicate
\big)
& \assign 
& \prednumofrows^{\sqlquery'}(  0, \sigma_{1, 2} \times \sigma_{2, 2}  ) 
\\

\abstracttransformer{\innerjoin}
\big( 
\prednumofcols^{\sqlquery_1}( k_1 ),  \ 
\prednumofcols^{\sqlquery_2}( k_2 ),  \ 
\predcolsused^{\sqlpredicate} ( \maxcol_1, \maxcol_2 ) 
\big)
& \assign 
& \prednumofcols^{\sqlquery'} ( k_1 + k_2 ) 
\ \  
\text{if } \maxcol_i \leq k_i 
\\

\abstracttransformer{\innerjoin}
\big( 
\predcoltype^{\sqlquery_1}( \setofcols, t),  \ 
\sqlquery_2, \ 
\sqlpredicate
\big)
& \assign 
& \predcoltype^{\sqlquery'}( \setofcols, t) 
\\[4pt]

\abstracttransformer{ \predcolrange^{\colidx_1}(\setofcols_1) \logicop \predcolrange^{\colidx_2}(\setofcols_2) }
& \assign 
& \predcolsused^{\sqlpredicate} \big( \max(\setofcols_1), \max(\setofcols_2) \big)
\\

\abstracttransformer{ \predcolsused^{\sqlpredicate_1} ( \maxcol_{1,1}, \maxcol_{1, 2} ) \land \predcolsused^{\sqlpredicate_2} ( \maxcol_{2,1}, \maxcol_{2,2} ) }
& \assign 
& \predcolsused^{\sqlpredicate} \big( \max( \maxcol_{1, 1}, \maxcol_{2, 1} ), \max( \maxcol_{1, 2}, \maxcol_{2, 2} ) \big)
\\

\end{array}
\]
\vspace{-10pt}
\caption{Representative abstract transformers for operators in \autoref{fig:sql-syntax}.}
\label{fig:sql-transformers}
\vspace{-5pt}
\end{figure}

%% file: sections/05-eval.tex

\section{Evaluation}
\label{sec:eval}

This section presents a series of experiments designed to answer the following questions. 
Note that the first four questions are centered around our focus domain (i.e., SQL), while RQ5 is designed to evaluate the generality of $\frameworkname$ (in string and matrix domains).

\begin{itemize}[leftmargin=*]
\item 
\textbf{RQ1}: 
How effective is $\sqlname$ at solving example-based SQL synthesis tasks?
\item 
\textbf{RQ2}: 
How does $\sqlname$ compare with state-of-the-art techniques? 
\item 
\textbf{RQ3}:
How important are the key ideas (namely, presynthesis and oracle)?
\item 
\textbf{RQ4}: 
How does $\sqlname$ scale as the underlying abstract semantics becomes finer?
\item 
\textbf{RQ5}: 
How does $\stringname$ and $\matrixname$ perform in practice?
\end{itemize}

\newpara{SQL Benchmarks.}
We collected all \ruichecked{3,866} benchmarks from $\cubes$~\cite{brancas2024towards,brancas2022cubes,cubes-tool}, which, to the best of our knowledge, contains all example-based SQL synthesis benchmarks in the literature~\cite{orvalho2020squares,patsql,tran2009query,wang2017synthesizing} to date. 
Each benchmark consists of a ``ground-truth'' query $P$ and one input-output example.
We augmented each benchmark with \ruichecked{16} additional examples, 
following the same setup in $\cubes$~\cite{brancas2022cubes}. 
This is because the original example led to many synthesized queries that are unintended (i.e., not equivalent to $P$).
At the end, we curated \ruichecked{3,817} benchmarks\footnote{\ruichecked{49 benchmarks were not included, since their ground-truth queries were not executable on at least one of the inputs.}}, each with \ruichecked{17} examples in total.

\newpara{String and matrix benchmarks.}
We use the same benchmarks from prior work~\cite{wang2017program}. In particular, we have 108 $\sygus$ string transformation benchmarks and 39 matrix manipulation benchmarks.

\newpara{Testbed.}
Our experiments are conducted on machines with Xeon Gold 6154 CPU and 192GB RAM running Red Hat 8.10. 
SQLite 3.46 is used to evaluate SQL queries. 

\subsection{RQ1: $\sqlname$ Results}
\label{sec:eval:RQ1}

\newpara{Setup.}
To evaluate $\sqlname$, we first run its presynthesizer to build an offline FTA $\fta$ and the oracle $\slicingoracle$. 
Then, given a benchmark with a set $\ioexs$ of 17 examples, we run $\sqlname$'s synthesizer (given $\fta$, $\slicingoracle$, and $\ioexs$), and record: 
(i) whether or not the benchmark is solved (i.e., if a query that satisfies $\ioexs$ is found) within \ruichecked{our timeout (10 minutes)}, 
and (ii) if so, the synthesis time. 
During this process, we also log detailed information (such as time breakdown and FTA size).

\input{eval-figures/RQ1}

\newpara{End-to-end results.}
Let us first go over $\sqlname$'s end-to-end performance; see ~\autoref{table:RQ1:end-to-end}. 
Overall, it successfully solved \ruichecked{2,604} benchmarks using a median of \ruichecked{0.01} seconds. 
\ruichecked{82\%} of them can be solved within \ruichecked{5 seconds}. 
The average solving time is \ruichecked{22.0} seconds.

\newpara{Synthesis results.}
\autoref{table:RQ1:slice} and~\autoref{table:RQ1:search} further present the synthesis time breakdown.
While $\slicealg$ always terminates in \ruichecked{0.5} seconds (see~\autoref{table:RQ1:slice}), $\searchalg$ time tends to be longer (see~\autoref{table:RQ1:search}). Despite this, $\searchalg$ can still finish in \ruichecked{5 seconds} for \ruichecked{82\%} of the benchmarks.

Let us dive into more details, beginning with~\autoref{table:RQ1:search}. 
Recall that $\searchalg$ (i) first builds a concrete FTA (CFTA), based on concrete semantics, for the given slice, and (ii) then enumerates accepting runs of this CFTA and along the way checking their corresponding queries against all examples, until a satisfying program is found. 
It turns out that, for most of the benchmarks, over \ruichecked{60\%} of the total $\searchalg$ time ends up being spent on SQLite-related work. 
The resulting CFTA is typically small, with a median of \ruichecked{160} states; \ruichecked{and for over 72\% of the benchmarks, it has under 10K states.}
Not all such states, however, lead to a final state. Only those accepting runs (i.e., reaching a final  state) are enumerated and checked.
For most benchmarks, the first accepting run yields a query that passes all \ruichecked{17} examples; \ruichecked{and for over 88\% of them, no more than 1,000 queries are checked.}

Now let us turn to \autoref{table:RQ1:slice}.
The main take-away is that our slices are small, with a median of \ruichecked{37} states and \ruichecked{47} transitions. 
For over \ruichecked{90\%} of the benchmarks, the slice has under  \ruichecked{1,000 states and 3,000 transitions.}
In addition to the small size, slicing does not require running abstract transformers---we just collect states/transitions from the offline FTA.
This also explains why $\slicealg$ is very fast.

\newpara{Presynthesis results.}
As shown in \autoref{table:RQ1:presyn}, our offline FTA can be built in \ruichecked{5.0 hours}, consuming \ruichecked{51.0 GB} of disk space. Its oracle takes \ruichecked{7.2 hours}  to be constructed and occupies \ruichecked{16.8GB} of disk space. 
We believe this resource requirement is affordable.

\newpara{Failure analysis.}
Despite significantly stronger performance than state-of-the-art, there are 1,213 benchmarks (out of 3,817 total) that $\sqlname$ was not able to solve (within 10 minutes). 
A closer inspection reveals that DSL expressiveness is a major reason, accounting for 653 of the unsolved benchmarks. Notable unsupported features include HAVING clause (149 benchmarks), scalar subqueries (98), INTERSECT (97), EXCEPT (66), and complex LIMIT (60). 
Another common cause is FTA concretization timeout (543 benchmarks). Ground-truth queries for them typically have complicated structures (such as many levels of JOINs and complex filter predicates), and some of them involve long SQLite execution time (due to large input/intermediate tables).
Other reasons include non-determinism in certain operators (such as ORDER BY LIMIT, which, in the presence of ties, may return a different output table from the user-provided one), and abstract input space coverage (e.g., input example has more columns than the max our offline supports).

\subsection{RQ2: $\sqlname$ vs. State-of-the-Art}
\label{sec:eval:RQ2}

\input{eval-figures/RQ2}

\newpara{Baselines.}
We compare $\sqlname$ with \emph{all} state-of-the-art \emph{bespoke} SQL synthesizers (to the best of our knowledge).
The first one is $\patsql$~\cite{patsql}, which combines program sketching and enumeration.
The other is $\cubes$~\cite{brancas2024towards,cubes-tool,brancas2022cubes}, which uses SMT-based enumeration with SQL-specific pruning and search hints (such as aggregation functions to be used; which we do not require).

Besides SQL synthesizers, we also include a third baseline, called $\blazesql$, which implements the abstraction-refinement-based synthesis framework from the $\blaze$ paper~\cite{wang2017program}.
In particular, $\blazesql$ first builds an FTA $\fta_0$ using an initial (coarse) abstraction.
$\fta_i$ is then refined into $\fta_{i+1}$, by incorporating new predicates  mined from a spurious program extracted from $\fta_i$. 
This iterative refinement process terminates when a program that passes all 17 examples is extracted. 
$\blazesql$ uses the same DSL as $\sqlname$, as well as the same predicate templates and transformers.

\newpara{Setup.}
We use the same setup (as in RQ1) to run all baselines \ruichecked{(using the same 10-minute timeout)}. For each baseline, we record the solved benchmarks and their corresponding solving times.

\newpara{Results.}
\autoref{fig:RQ2:solved-varying-timeout} summarizes the comparison. 
$\sqlname$ consistently solves more benchmarks than all baselines. 
Using a 600s timeout, it solves \ruichecked{15\%} more than the best baseline, $\cubes$. 
This improvement is significant for a few reasons. 
First, $\sqlname$ is an instantiation of our framework with a pretty standard (online) search technique, whereas $\cubes$ is specialized to SQL and incorporates domain-specific guidance.
Second, $\cubes$ uses additional hints (e.g., candidate aggregation functions), which $\sqlname$ does not use. 
Finally, for those benchmarks solved by both, $\sqlname$ is significantly faster, with a median of \ruichecked{$\speedup{90}$} speed-up and a geometric mean of \ruichecked{$\speedup{51}$}.
On the other hand, $\cubes$ can solve 235 benchmarks that $\sqlname$ cannot. 87 of them involve unsupported features (e.g., certain set operations). For the rest, $\sqlname$ timed out during concretization. This suggests clear room to further improve our approach.

The next best baseline is $\patsql$. $\sqlname$ significantly outperforms it by solving \ruichecked{33\%} more benchmarks.
These results highlight the advantage of abstraction-based pruning and our presynthesis-based technique over purely online synthesizers.
While $\blazesql$ utilizes abstract semantics, it actually solves the least number of benchmarks. In particular, it spends up to \ruichecked{391K} refinement steps to solve a benchmark, with an average of \ruichecked{8K} steps across  solved benchmarks. 
This may suggest that, in our case, predicates discovered locally do not prune away many programs.

\input{eval-figures/RQ3}

\subsection{RQ3: Ablation Studies}
\label{sec:eval:RQ3}

\newpara{Ablations.}
Our ablations aim to evaluate the two underpinning ideas of our framework---namely, presynthesis and oracle-based slicing.
The first ablation, called $\toolnopresynthesis$, removes the presynthesis step: that is, it performs $\slicealg$, given $\ioex_1$ and the grammar, by building $\ftaslicee{\ioex_1}$ \emph{from scratch}. In other words, $\toolnopresynthesis$ is a standard FTA-based synthesizer using abstract semantics (see~\autoref{sec:overview:fta-synthesis}).
Our second ablation, $\toolnoslicingoracle$, still has presynthesis but  does not build the oracle. Instead, the slice $\ftaslicee{\ioex_1}$ is computed by traversing the offline FTA using two passes. The first bottom-up pass obtains all states reachable from $\ioex_1$'s input, followed by a second top-down pass which removes unnecessary states (that do not reach an output-consistent final state).

\newpara{Setup.}
We use the same setup as in RQ2 to run all ablations.

\newpara{$\sqlname$ vs. $\toolnopresynthesis$.}
See results in \autoref{table:RQ3}.
Let us first compare $\sqlname$ with $\toolnopresynthesis$.
Overall, $\sqlname$ solves more benchmarks in less time. This is primarily due to $\toolnopresynthesis$'s significant slowdown on $\slicealg$. Building the slice online during synthesis can take up to \ruichecked{247} seconds. 
As illustrated in~\autoref{sec:overview:bloat}, this slowdown is caused by the FTA bloat issue; that is, an excessive number of unnecessary states and transitions are created (online).
A more detailed analysis of this bloat will be presented in the next~\autoref{sec:eval:RQ4}.

\newpara{$\sqlname$ vs. $\toolnoslicingoracle$.}
We observe a similar comparison with $\toolnoslicingoracle$, which is  also due to $\slicealg$ being slow.
Careful readers might wonder why it is even slower than $\toolnopresynthesis$, given that $\toolnoslicingoracle$ has additional access to an offline FTA $\fta$ (while $\toolnopresynthesis$ builds the slice  \emph{from scratch}). 
The reason is that the offline $\fta$ includes states for all inputs (not just $\ioex_1$'s input). This makes $\fta$ large, to the extent that computing $\ftaslicee{\ioex_1}$ by \emph{traversing} $\fta$ (without our oracle) ends up being more expensive than building $\ftaslicee{\ioex_1}$ from scratch! 
Specifically, recall that $\toolnoslicingoracle$ uses a bottom-up pass to collect states $\ftastate$ in $\fta$ that are reachable from $\ioex_1$'s input. During this pass, it iteratively processes $\ftastate$'s outgoing transitions to discover new reachable states. 
But the vast majority of such transitions are ``illegal'': that is, at least one argument state is not reachable from $\ioex_1$'s input.
Processing such transitions ends up wasting much of the time.
So our take-away is: only building an offline $\fta$ is far from sufficient; we must also use $\fta$ wisely (our oracle seems to be a good way).

\newpara{Sensitivity to abstract input space.}
Recall that $\sqlname$ considers up to 3 input tables each with up to 15 columns. To understand the sensitivity of $\sqlname$'s performance to this tuning knob, we create two ablations, one by shrinking the space down to up to 10 columns and the other with up to 20. 
The latter is able to express only a handful of additional benchmarks, but the corresponding offline FTA is $\speedup{1.3}$ bigger than $\sqlname$ and oracle $\speedup{2.1}$ bigger. 
The former, on the other hand, solves 2,478 benchmarks (95\% of those solved by $\sqlname$), whose offline FTA is $\speedup{3.4}$ smaller and oracle $\speedup{7.1}$ smaller.
Further reducing the abstract input space (e.g., up to 5 columns) would significantly hinder the solving capability.

\subsection{RQ4: Scaling}
\label{sec:eval:RQ4}

\newpara{Our hypothesis: ``scaling law'' for synthesis.}
Akin to the ``scaling law''~\cite{kaplan2020scaling} in deep learning---that is, model performance seems to keep improving as model size, dataset size, and computational resources spent on training increase---we hypothesize that a similar phenomenon also exists in the context of (search-based) program synthesis. 
Specifically, our hypothesis is that, in principle, finer-grained abstraction should lead to faster synthesis, \emph{provided sufficient computational resources}.
We believe that presynthesis is a promising building block towards realizing this ``scaling law'' for synthesis. 
Hence, this section will present empirical data on how synthesis scales as a function of abstraction granularity, aiming to provide preliminary validation for our hypothesis.

\input{eval-figures/RQ4-figure}

\newpara{Setup.}
We create 5 versions of $\sqlname$ with \emph{increasing} abstraction granularity. 
Specifically, we consider the following ordered list of predicate templates:
\[
\prednumofrows, \quad 
\prednumofcols, \quad 
\predsubsetofinput, \quad 
\predcoltype.
\]
$\toolkpred{k}$ ($k$ ranging from 0 to 4) is a variant of $\sqlname$, that uses the first $k$ templates to build the offline FTA. 
For instance, $\toolkpred{0}$ essentially performs $\searchalg$  on the given grammar, without any abstraction-based pruning. 
$\toolkpred{2}$ tracks a table's ``shape'' (i.e., number of rows and columns) as the abstraction, but not table content. 
$\toolkpred{4}$ (i.e., $\sqlname$) uses the finest abstraction, which additionally tracks if a column is a subset of some input table and the column's data type.

As a comparison, we also create 5 versions of $\toolnopresynthesis$ in the same fashion as above, where $\toolnopresynkpred{k}$ uses the same abstraction as $\toolkpred{k}$. This allows us to see how a purely online FTA-based approach scales, and compare it with the presynthesis-based counterpart.

Note that $\toolkpred{0}$ and $\toolnopresynkpred{0}$ are the same---because they both directly perform $\searchalg$  over the grammar (without presynthesis). 
They also serve as a very good ``baseline'', which performs standard bottom-up search (using FTA-based equivalence-class reduction)~\cite{wang2017synthesis}. 
Thus, we will normalize other tools' results to theirs, to facilitate easy inspection of the scaling trend.

Finally, for each $\toolkpred{k}$, we first run its presynthesizer and then use its synthesizer to solve benchmarks (using the same 10-minute timeout). 
The same timeout is used for $\toolnopresynkpred{k}$.

\input{eval-figures/RQ4-cfta-table}

\newpara{$\slicealg$ time and total time vs. abstraction granularity.}
\autoref{fig:RQ4:scaling:slicing} presents how $\slicealg$ scales as the abstract semantics becomes more granular, and \autoref{fig:RQ4:scaling:total} shows how the total time scales. 
As shown by \autoref{fig:RQ4:scaling:slicing}, $\toolnopresynthesis$'s $\slicealg$ time skyrockets as $k$ increases, whereas $\sqlname$ is much more scalable with consistently low slicing time.
This also explains why $\sqlname$'s total time continues to improve (see \autoref{fig:RQ4:scaling:total}) as the abstraction granularity increases, whereas $\toolnopresynthesis$'s total time has a U-shaped scaling curve.

\input{eval-figures/RQ4-offline-scaling}

\newpara{Peak slice size vs. abstraction granularity.}
\autoref{table:RQ4} presents how the peak slice size changes as the abstract semantics is more granular.
As we can see, $\sqlname$'s slice size remains quite stable, whereas $\toolnopresynthesis$'s explodes with increasing $k$. 
This is exactly the FTA bloat phenomenon that  \autoref{sec:overview:bloat} illustrates: 
(i) while $\toolnopresynkpred{k}$ eventually gives the same slice as $\toolkpred{k}$, it creates many unnecessary states and transitions, causing the bloat; 
(ii) this bloat is exacerbated, when finer-grained abstract semantics is used.
This is exactly why $\toolnopresynthesis$'s slicing time scales poorly in \autoref{fig:RQ4:scaling:slicing}, eventually leading to the U-shaped scaling curve of total time in \autoref{fig:RQ4:scaling:total}.

\newpara{Offline scaling.}
Switching gears,~\autoref{table:RQ4-offline} presents $\sqlname$'s offline scaling. Comparing adjacent columns in~\autoref{table:RQ4-offline:scaling}, one might notice the offline FTA grows unevenly. For example, the number of transitions from 1 to 2 increases roughly by $\speedup{30}$, but from 2 to 3 only about $\speedup{2}$, and then from 3 to 4 around $\speedup{80}$. 
This is because from 2 to 3, the abstraction precision increment is not as much as the other two, with 3 to 4 being the most. 
To better quantify the granularity change, instead of using only $k$, we measure the ``state split factor'' from $k$ to $k+1$. That is, for each state $\ftastate$ in $\fta_k$ (offline FTA built with $k$ templates), how many states $\ftastate'$ in $\fta_{k+1}$ that $\ftastate$ get split to (i.e., the abstract value in $\ftastate'$ refines that in $\ftastate$).~\autoref{table:RQ4-offline:split-factor} reports some statistics. As we can see, the split factor grows the most from 3 to 4, which explains the uneven FTA growth observed in~\autoref{table:RQ4-offline:scaling}.

\input{eval-figures/RQ5}

\subsection{RQ5: $\stringname$ and $\matrixname$ Results}
\label{sec:eval:RQ5}

\autoref{table:RQ5} reports main results for $\stringname$ and $\matrixname$. 

\newpara{Comparing with $\blaze$.}
For both domains, our technique (using the finest abstraction; i.e., $k=3$) solves the same set of benchmarks as $\blaze$~\cite{wang2017program} and is faster. 

\newpara{Importance of presynthesis.}
Presynthesis is crucial, as $\toolnopresynthesis$ is significantly slower. 

\newpara{Online and offline scaling.}
A similar online scaling trend as in SQL domain is observed: increasing $k$ leads to faster online synthesis. 
On the other hand, the offline cost also grows, which is consistent with the offline scaling behavior for SQL. 
Overall, for both domains, the offline overhead is affordable even with our finest (i.e., $k=3$) abstraction.

%% file: eval-figures/RQ1.tex
\begin{table}[!t]
\small 
\centering
\caption{$\sqlname$ results.
First, the end-to-end benchmark-solving results are summarized in~\autoref{table:RQ1:end-to-end}.
Then, $\slicealg$ and $\searchalg$ (i.e., two main components of synthesis) results are given in~\autoref{table:RQ1:slice} and~\autoref{table:RQ1:search}, respectively.
Finally,~\autoref{table:RQ1:presyn} reports presynthesis results (for $\buildftaalg$ and $\buildslicingoraclealg$).}
\label{table:RQ1}

\begin{subtable}{.27\linewidth}
\centering
\caption{End-to-end results.  Time statistics are calculated over all solved benchmarks.}
\label{table:RQ1:end-to-end}

\begin{tabular}{ c | cc }

\toprule

\multirow{2}{*}{\#Solved} & 
\multicolumn{2}{c}{Time (sec)}

\\ 

&
\emph{median} & \emph{avg}  

\\ 
\midrule

\ruichecked{2,604} & \ruichecked{0.01} & \ruichecked{22.0}

\\ 
\bottomrule

\end{tabular}
\end{subtable}
\hfill
\begin{subtable}{0.7\linewidth}
\centering
\caption{$\slicealg$ results across all solved benchmarks. We first present the (median, average, max) slicing time. Then, we give the (median, average, max) number of states in the slice, followed by the same statistics for transitions.}
\label{table:RQ1:slice}

\begin{tabular}{ ccc | ccc | ccc }

\toprule

\multicolumn{3}{c|}{Total $\slicealg$ time (sec)} & 
\multicolumn{3}{c|}{\#States} & 
\multicolumn{3}{c}{\#Transitions} 

\\ 

\emph{median} & \emph{avg} &  \emph{max} & 
\emph{median} & \emph{avg} &  \emph{max} & 
\emph{median} & \emph{avg} & \emph{max}

\\
\midrule

\ruichecked{0.001} & \ruichecked{0.01} &  \ruichecked{0.5} & 
\ruichecked{37} & \ruichecked{234} & \ruichecked{5.0K} & 
\ruichecked{47} & \ruichecked{1.6K} & \ruichecked{93.2K} 

\\ 

\bottomrule

\end{tabular}
\end{subtable}

\vspace{15pt}

\begin{subtable}{\linewidth}
\centering
\caption{$\searchalg$ results across all solved benchmarks. 
We first present the (median, average, max)  total search time across solved benchmarks. 
We then report the (median, average, max) percentage of the total time spent on SQLite-related work (e.g., evaluating queries and communicating with SQLite).
Next, ``\#States in CFTA'' presents the (median, average, max) number of states, upon termination of $\searchalg$, that are created while concretizing the slice into a concrete FTA (CFTA). 
Finally, ``\#Queries checked'' reports the (median, average, max) number of accepting runs of the CFTA that are enumerated, each corresponding to a query.}
\label{table:RQ1:search}

\begin{tabular}{ ccc | ccc | ccc | ccc }

\toprule

\multicolumn{3}{c|}{Total $\searchalg$ time (sec)} & 
\multicolumn{3}{c|}{SQLite time / total time} & 
\multicolumn{3}{c|}{\#States in CFTA} & 
\multicolumn{3}{c}{\#Queries checked} 

\\ 

\emph{median} & \emph{avg} &  \emph{max} & 
\emph{median} & \emph{avg} &  \emph{max} & 
\emph{median} & \emph{avg} &  \emph{max} & 
\emph{median} & \emph{avg} &  \emph{max}

\\
\midrule

\ruichecked{0.01} & \ruichecked{22.0} &  \ruichecked{581.2} & 
\ruichecked{64.7}\% & \ruichecked{60.7}\%  & \ruichecked{96.6}\% & 
\ruichecked{160}& \ruichecked{108.2K} & \ruichecked{8.6M} & 
\ruichecked{1} &  \ruichecked{32.8K} & \ruichecked{6.9M}

\\ 

\bottomrule

\end{tabular}
\end{subtable}

\vspace{15pt}

\begin{subtable}{\linewidth}
\centering
\caption{Presynthesis results. For $\buildftaalg$, we report the FTA construction time and the resulting FTA's size. For $\buildslicingoraclealg$, we report the oracle construction time and its size.}
\label{table:RQ1:presyn}

\begin{tabular}{ cccc | cc }

\toprule

\multicolumn{4}{c|}{$\buildftaalg$} 
& 
\multicolumn{2}{c}{$\buildslicingoraclealg$} 

\\

Time & Size on disk & \#States & \#Transitions & 
\ \ Time & Size on disk \

\\ 
\midrule

\ruichecked{5.0 hrs} & \ruichecked{51.0 GB} & \ruichecked{454.8K} & \ruichecked{192.6M} & 
\ruichecked{7.2 hrs} & \ruichecked{16.8 GB} 

\\ 
\bottomrule

\end{tabular}
\end{subtable}

\end{table}

%% file: eval-figures/RQ2.tex
\input{eval-figures/RQ2-figure}

%% file: eval-figures/RQ2-figure.tex
\begin{figure}[!t]
\centering
\includegraphics[height=4.2cm]{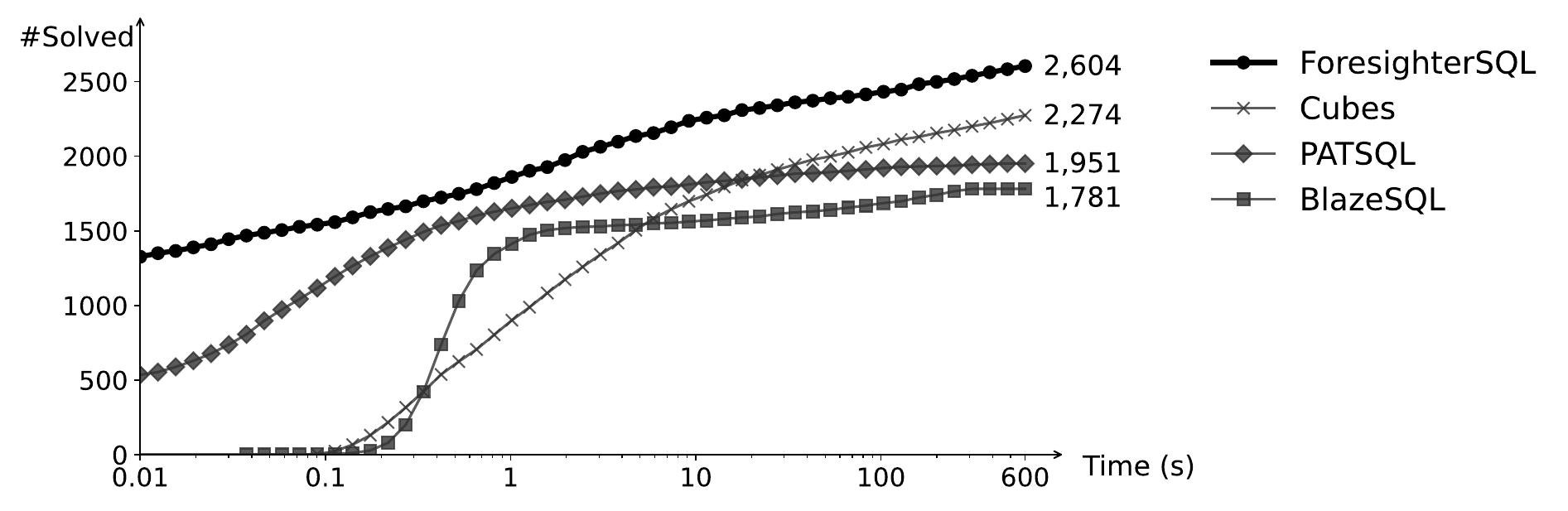}
\captionsetup{skip=10pt} 
\caption{Number of solved benchmarks when varying the timeout (up to 600 seconds), for $\sqlname$ and all baselines.
A data point (X, Y) means Y benchmarks can be solved using X seconds as the timeout. Note that y-axis is in linear scale while we use log scale for x-axis. 
We also label the number of benchmarks solved (with 600 seconds as the timeout) for each tool, right next to its line.}
\label{fig:RQ2:solved-varying-timeout}
\end{figure}

%% file: eval-figures/RQ3.tex
\begin{table}[!t]
\centering
\small 
\caption{$\sqlname$ vs. ablations. We first present the number of solved benchmarks. Then, we report the (median, average) synthesis time across all solved benchmarks. The next column compares $\sqlname$ with each ablation over benchmarks solved by both. Finally, we present the $\slicealg$ time for each tool.}
\label{table:RQ3}

\begin{tabular}{ r | c | rr | c | rrr }

\toprule

& 
\multirow{2}{*}{\#Solved} & 
\multicolumn{2}{c|}{Total time (sec)} & 
\multirow{2}{*}{\shortstack{Median speed-up\\over ablation}} & 
\multicolumn{3}{c}{$\slicealg$ time (sec)}

\\ 
& & 
\emph{median} & \emph{avg} &  
&
\emph{median} & \emph{avg} &  \emph{max}

\\ \midrule

$\sqlname$ &
\ruichecked{2,604} & 
\ruichecked{0.01} &  \ruichecked{22.0} & 
-- & 
\ruichecked{0.001} &  \ruichecked{0.01} & \ruichecked{0.5} 

\\ 

$\toolnopresynthesis$ &  
\ruichecked{2,587} & 
\ruichecked{9.3} &  \ruichecked{40.1} & 
$\ruichecked{\speedup{413.1}}$ & 
\ruichecked{9.3}& \ruichecked{21.5} &  \ruichecked{246.8}

\\ 

$\toolnoslicingoracle$ & 
\ruichecked{2,539} & 
\ruichecked{34.1} &  \ruichecked{85.0} & 
$\ruichecked{\speedup{614.4}}$ & 
\ruichecked{34.0} & \ruichecked{73.6} & \ruichecked{405.5}

\\
\bottomrule

\end{tabular}
\end{table}

%% file: eval-figures/RQ4-figure.tex
\begin{figure}[!t]
\centering
\begin{minipage}[b]{.49\linewidth}
\centering
\includegraphics[height=5cm]{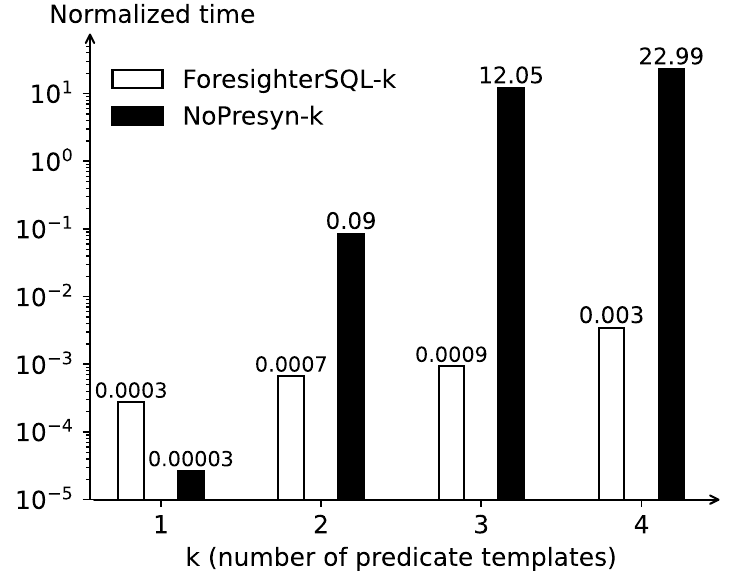}
\captionsetup{skip=1pt} 
\subcaption{Normalized $\slicealg$ time (y-axis) vs. $k$ (x-axis).}
\label{fig:RQ4:scaling:slicing}
\end{minipage}
\hfill
\begin{minipage}[b]{.49\linewidth}
\centering
\includegraphics[height=5cm]{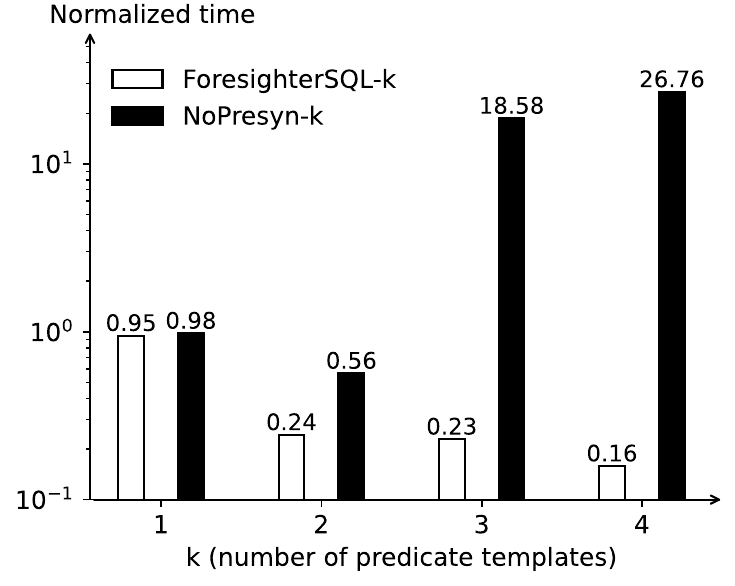}
\captionsetup{skip=1pt} 
\subcaption{Normalized total time (y-axis) vs. $k$ (x-axis).}
\label{fig:RQ4:scaling:total}
\end{minipage}
\caption{Online scaling for $\toolkpred{k}$ and $\toolnopresynkpred{k}$, with $k$ from 1 to 4. \autoref{fig:RQ4:scaling:slicing} shows the scaling for $\slicealg$, and \autoref{fig:RQ4:scaling:total} presents the scaling for the end-to-end synthesis process (i.e., $\slicealg + \searchalg$). Both figures consider benchmarks solved by $\toolkpred{4}$ (which solves the most). For an unsolved benchmark, we use the timeout (600s) as the ``solving time'' to plot \autoref{fig:RQ4:scaling:total}, and use the corresponding slicing time to plot \autoref{fig:RQ4:scaling:slicing}. For each benchmark, all tools' times (both slicing and total) are normalized to $\toolnopresynkpred{0}$'s \emph{total} time. Finally, for both figures, we present the geometric mean of normalized times across all benchmarks. For instance, $\toolkpred{4}$'s normalized total time is \ruichecked{0.16} in \autoref{fig:RQ4:scaling:total}, which means its total time is 16\% of $\toolnopresynkpred{0}$'s (representing a geometric mean of \ruichecked{$\speedup{6.3}$} speed-up).
Finally, $k=0$ is not presented: their slicing time should be interpreted as 0 and total time as 1, after normalization.}
\label{fig:RQ4:scaling}
\end{figure}

%% file: eval-figures/RQ4-cfta-table.tex
\begin{table}[!t]
\centering
\small
\caption{Peak slice size change, when increasing $k$ from 1 to 4, for $\toolkpred{k}$ and $\toolnopresynkpred{k}$. 
The ``Peak \#states'' column reports that, for each tool, the number of states \emph{ever} created by  $\slicealg$.
Same as \autoref{fig:RQ4:scaling}, we consider all benchmarks solved by $\toolkpred{4}$. 
Given a benchmark, each tool's peak number of states is normalized to $\toolkpred{0}$'s number.
Finally, for each tool, we report the geometric mean across all benchmarks. 
The ``Peak \#transitions'' column is calculated in the same manner but for transitions.}
\label{table:RQ4}

\begin{tabular}{ r | rrrr | rrrr }

\toprule

& 
\multicolumn{4}{c|}{Peak \#states} & 
\multicolumn{4}{c}{Peak \#transitions}

\\ 
& 
$k = 1$ & $k = 2$ &  $k = 3$  & $k = 4$ & 
$k = 1$ & $k = 2$ &  $k = 3$  & $k = 4$

\\ \midrule

$\toolkpred{k}$ &

1.15 & 1.06 & 1.13 & 1.37 & 
1.11 & 0.98 & 1.06 & 1.50

\\ 

$\toolnopresynkpred{k}$ &  
2.16 & 16.83 & 468.42 & 1533.07 & 
2.31 & 17.24 & 404.89 & 1445.50

\\
\bottomrule

\end{tabular}

\end{table}

%% file: eval-figures/RQ4-offline-scaling.tex
\begin{table}[!t]
\centering
\small
\caption{Offline scaling for $\sqlname$.}
\label{table:RQ4-offline}

\begin{subtable}{.61\linewidth}
\centering
\caption{Offline artifact scaling when increasing $k$ from 1 to 4.}
\vspace{6pt}
\label{table:RQ4-offline:scaling}

\begin{tabular}{ c | r | cccc }

\toprule & & 
$k = 1$ & $k = 2$ &  $k = 3$  & $k = 4$ 

\\ 
\midrule

\multirow{4}{*}{\makecell[r]{Offline \\ FTA}}
&
\#States &
1.5K & 21.2K & 62.0K & 454.8K 

\\ 

& 
\#Transitions &  
36.7K & 1.1M & 2.4M & 192.6M

\\ 

& 
Size on disk & 
4MB & 238MB & 504MB & 51.0GB

\\ 

& 
Build time & 
0.2s & 7s & 27s & 5.0h

\\ 
\midrule

\multirow{2}{*}{\makecell[r]{Offline \\ oracle}}

&
Size on disk &
5MB & 13MB & 4.7GB & 16.8GB

\\ 

&
Build time &
1s & 33s & 71s & 7.2h 

\\
\bottomrule

\end{tabular}

\end{subtable}
\hfill
\begin{subtable}{.37\linewidth}
\centering
\caption{State split factor statistics from $k$ to $k+1$, across all states in offline FTA $\fta_k$ with $k$ templates. For instance, increasing $k$ from 1 to 2, a state in $\fta_1$ is split into a median of 17 states in $\fta_2$.}
\label{table:RQ4-offline:split-factor}

\begin{tabular}{r | rrr}

\toprule 
 & $1 \rightarrow 2$ & $2 \rightarrow 3$ & $3 \rightarrow 4$
\\ 
\midrule

median 
& 17 & 2 & 72 
\\

75th 
& 45 & 4 & 882 
\\

90th 
& 79 & 6 & 3,375
\\

\bottomrule

\end{tabular}
\end{subtable}
\end{table}

%% file: eval-figures/RQ5.tex
\begin{table}[!t]
\fontsize{8.5pt}{10pt}\selectfont
\centering
\caption{$\stringkpred{k}$ and $\matrixkpred{k}$ results, in comparison with $\blaze$ and $\toolnopresynthesis$.}
\label{table:RQ5}

\setlength{\tabcolsep}{3pt}
\begin{tabular}{ c | r | rr | rrrr | rr }

\toprule

\multicolumn{2}{c|}{} 
& \multicolumn{2}{c|}{\textbf{Online}} & 
\multicolumn{6}{c}{\textbf{Offline}} 
\\ 
\cmidrule{3-10}
\multicolumn{2}{c|}{}
& 
\multicolumn{2}{c|}{$\synthesisalg$} 
& 
\multicolumn{4}{c|}{$\buildftaalg$} 
& 
\multicolumn{2}{c}{$\buildslicingoraclealg$} 
\\ 
\multicolumn{2}{c|}{}
& 
\#Solved & Avg. time (s) & 
Time (s) & Size (disk) & \#States & \#Transitions & 
Time (s) & Size (disk)

\\ 
\midrule\midrule

\multirow{6}{*}{\rotatebox{90}{\textbf{String}}}
& 
$k=1$ & 
82 & 0.751 & 
0.01 & 1.0 MB & 764 & 17.8K & 
0.03 & 212.6 KB

\\ 

& 
$k=2$ & 
90 & 0.107 & 
4.3 & 338.0 MB & 1.6M & 3.3M & 
6.9 & 427.4 MB

\\ 

& 
$k=3$ & 
90 & 0.094 & 
504.0 & 8.0 GB & 35.4M & 75.1M & 
116.6 & 6.5 GB

\\ 
\cmidrule{2-10}

& 
$\blaze$ & 

90 & 0.296 
& -- & -- & -- & -- 
& -- & -- 

\\
\cmidrule{2-10}

& 
$\toolnopresynthesis$
& 

90 & 3.883
& -- & -- & -- & -- 
& -- & --

\\ 
\midrule\midrule

\multirow{6}{*}{\rotatebox{90}{\textbf{Matrix}}}
& 
$k=1$ & 
39 & 12.376 & 
0.002 & 52.8 KB & 118 & 685 & 
0.0006 & 6.92 KB

\\ 

& 
$k=2$ & 
39 & 0.101 & 
66.6 & 3.8 GB & 307.7K & 67.2M & 
177.7 & 206.8 MB

\\ 

& 
$k=3$ & 
39 & 0.086 & 
472.1 & 20.4 GB & 4.1M & 354.7M & 
424.0 & 1.4 GB

\\ 
\cmidrule{2-10}

& 
$\blaze$ & 

39 & 1.449 
& -- & -- & -- & -- 
& -- & -- 

\\
\cmidrule{2-10}

& 
$\toolnopresynthesis$
& 

39 & 6.698
& -- & -- & -- & -- 
& -- & --

\\ 
\bottomrule

\end{tabular}
\end{table}

%% file: sections/06-related.tex

\section{Related Work}
\label{sec:related}


\newpara{Abstraction-based synthesis.}
Abstract semantics has been widely used in the program synthesis literature, across a broad range of domains, such as data transformation~\cite{feser2015synthesizing,feng2017component,wang2017program}, SQL query generation~\cite{wang2017synthesizing}, syntax-guided synthesis~\cite{guria2023absynthe,10.1145/3591288}, among others~\cite{guo2019program,10.1145/3632858,farzan2021counterexample}.
Building upon the literature, we contribute a new method that models a DSL's abstract semantics in an offline phase.

\newpara{Representation-based synthesis.}
Our abstract semantics modeling is carried out by constructing an FTA $\fta$, such that every accepting run of $\fta$ reflects a program's execution on a potential abstract input.
This is related to a long line of work on representation-based synthesis, including the seminal FlashFill work~\cite{gulwani2011automating} (which uses version space algebras as the underlying representation) among many others~\cite{wang2017synthesis,wang2017program,wang2018relational,miltner2022bottom,koppel2022searching,li2024efficient,gvero2013complete,mandelin2005jungloid,sypet,guo2019program} (which utilize other representations, such as simple graphs, hypergraphs, Petri nets, and FTAs).
Our novelty lies in \emph{when} to build the representation. We build it offline for semantics modeling, whereas virtually all prior work builds it online.

\newpara{Graph reachability in program synthesis.}
Our FTAs can be viewed as a form of hypergraphs, and our synthesis algorithm can be viewed as solving a form of graph search problem. 
Hence, our approach is also related to prior work that uses a graph-based formulation of synthesis. 
Notable examples include 
\textsc{Prospector}~\cite{mandelin2005jungloid}, 
\textsc{Blaze}~\cite{wang2017program}, \textsc{SyPet}~\cite{sypet}, 
\textsc{Hoogle+}~\cite{guo2019program}, 
among others~\cite{reinking2015type,joshi2002denali}. 
Besides technical differences (e.g., the specific problem formulation), a major distinction lies in the scale of the graph. 
For instance, unlike \textsc{Prospector} which builds a ``small'' graph (8 MB on disk), our FTA is significantly larger (51 GB on disk). 
This motivates new graph search techniques, as opposed to reusing standard ones (as in \textsc{Prospector}).

\newpara{Reachability oracle.}
Given this large offline-built FTA, our online search algorithm leverages a form of reachability oracle~\cite{cohen2003reachability,thorup2004compact} (a precomputed data structure to help efficiently answer online queries about if a path exists between any two nodes in a graph).
Such oracles have been used for efficiently processing queries over massive databases (e.g., in biomedical and social networks)~\cite{sakr2012overview,jin2008efficiently,valstar2017landmark,peng2020answering,chen2022dlcr,li2022fast,li2021complexity,li2022efficient}.
To the best of our knowledge, we are the first to develop reachability oracles for FTAs and in the context of program synthesis.

\newpara{Preprocessing.}
At a high level, our work leverages preprocessing (or precomputation) to accelerate online synthesis.
This is related to a large body of work that applies preprocessing in different areas (e.g., quantum computing~\cite{huggins2023accelerating,ravi2022cafqa}, cryptography~\cite{rainbow-table,window-method}, and algorithms~\cite{thorup2005approximate,patrascu2010distance}).
Preprocessing has also been used in programming languages---e.g., to help   efficiently answer queries about context-free language reachability~\cite{li2022fast,li2021complexity,li2022efficient} in the context of program analysis.
In addition to precomputing our oracle, building the offline FTA can also be viewed as preprocessing (i.e., transforming a given DSL into a large FTA).
To our best knowledge, we are the first to use preprocessing for (search-based) synthesis.
While \emph{neural} synthesis techniques (e.g., based on LLMs) can also be viewed as having a preprocessing component (i.e., training the neural model), they are quite different from our work: we model the DSL's abstract semantics, whereas they model the distribution of programs.

%% file: sections/07-conc.tex
\section{Conclusion and Discussion}
\label{sec:conc}

We developed the first \emph{presynthesis-based} synthesis framework,  and implemented it in $\frameworkname$. 
The evaluation on three domains demonstrated promising results. 
We hope that our work could fuel the next paradigm shift, from crafting sophisticated online synthesis algorithms to using compute resources (e.g., offline time/space, online memory/parallelism), for further synthesis scale-up. 


\newpara{Necessity of presynthesis.}
Thoughtful readers might wonder if our offline presynthesis phase is necessary: for instance, would a purely online approach that \emph{incrementally} grows a ``global'' FTA (shared across tasks) \emph{on-the-fly}, for every incoming synthesis task, be fast enough?
We suspect that a trivial incremental approach would be much slower (as it still suffers from the FTA bloat issue). A better solution might exist, but we believe it is non-trivial and its performance is also unclear. For example, using an \emph{overly engineered} abstract input space that covers \emph{exactly} our SQL benchmarks (i.e., ``overfitting''), the resulting FTA takes 70 minutes to build. 
This time would be spread across all benchmarks, causing a significant slowdown for many of them. 
Note that this time is the theoretical lower bound, \emph{assuming no additional overhead}---which seems quite hard to realize. 
For instance, one challenge is, given a new task, how to maximally avoid visiting existing transitions and creating additional ones.  
A further problem is how to incrementally update the oracle on-the-fly. 
We believe these challenges could motivate new incremental algorithms for synthesis (potentially built upon our presynthesis idea), which is an interesting future direction to pursue.

%% file: sections/ack.tex
\section*{Acknowledgments} \label{sec:ack}

We would like to thank the anonymous PLDI reviewers, for their insightful feedback. 
We thank Greg Bodwin and Nicole Wein, for helping us navigate the literature on graph reachability oracle and providing useful pointers. 
We would also like to extend our thanks to Chenglong Wang and Yuepeng Wang, for their comments on an earlier draft of this work; and to Anhong Guo, Nicole Wein, and Xu Wang, for their help with experimentation resources. 
This research was supported by the National Science Foundation (NSF) under Grant Numbers 
CCF-2210832, CCF-2318937, CCF-2236233, and CCF-2123654. 
Zheng Guo was supported by Schmidt Sciences.

%% file: sections/appendix-proof.tex
\section{Proof of \autoref{theorem:slice-sound-complete}}
\label{sec:appendix:proof}
This section supplies the proof for \autoref{theorem:slice-sound-complete}, which states the soundness and completeness of our algorithm.
For clarity, let us start by re-stating some important notations.
\begin{itemize}
    \item Given an FTA $\mathcal A = (Q, F, Q_f, \Delta)$ and a (sub-)program $P$, an FTA \emph{run} $\rho$ is a mapping $\textit{Sub}(P) \mapsto Q$ compatible with $\Delta$; in other words, for every sub-program $P' = f(P_1', \mydots, P'_n)$ in $P$, $f(\rho(P_1'), \mydots, \rho(P'_n)) \to \rho(P)$ must always be a transition in $\Delta$.
    \item We call $\rho$ \emph{an accepting run} if it ends at a final state; that is, we have $\rho(P) \in Q_f$ for the associated program $P$.
    \item Recall that in our offline FTA, every state is in the form of $q_s^a$ for a grammar symbol $s$ and an abstract value $a$. 
    Then, we call a run $\rho$ (of the offline FTA) \emph{input-consistent} with an input example $\inputex$ if $\rho$ maps each input variable $x$ in $P$ to $q_s^a$ such that $a = \alpha^{\abstractinputspace_{\variablesymbol}} ( \inputex(\variablesymbol) )$.
\end{itemize}

\newpara{Lemma 1.} Our proof relies on three auxiliary lemmas.
First, \lemmaref{lemma:IO-consistent} states that any input-consistent runs must also be consistent with the abstract output of their associated programs.

\begin{lemma} \label{lemma:IO-consistent} Let $\mathcal A$ be the offline FTA constructed over abstract semantics $\abstracttransformer{\cdot}$, and $\rho$ be a run in $\mathcal A$ that is input-consistent with an input example $\inputex$.
Let $P$ be the program associated with $\rho$, and $q_{s}^a$ be $\rho(P)$ (i.e., the state that $\ftarun$ ends at).
Then, the abstract value $a$ must be equal to the abstract output $\abstracttransformer{P}(\inputex)$.
\end{lemma}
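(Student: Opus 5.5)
We prove \lemmaref{lemma:IO-consistent} by structural induction on the program $P$ associated with $\rho$. We read the claim as a statement about every sub-program $P'$ of $P$: if $\rho(P') = q_{s'}^{a'}$, then $a' = \abstracttransformer{P'}(\inputex)$. Here $\abstracttransformer{\cdot}$ on programs is the compositional abstract evaluation. An input variable $x$ evaluates to $\alpha^{\abstractinputspace_{\variablesymbol}}(\inputex(\variablesymbol))$. A compound term $f(P_1,\mydots,P_n)$ evaluates to $\abstracttransformer{f}(\abstracttransformer{P_1}(\inputex),\mydots,\abstracttransformer{P_n}(\inputex))$. Constants are nullary operators whose transformer returns a fixed abstract value. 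The lemma is then the instance $P' = P$.

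\emph{Base cases.} If $P' = x$ is an input variable, then $\rho$ maps $x$ to some $q_s^a$ via a transition $x \to q_s^a$ created by \textsc{Presyn-Var}. Input-consistency of $\rho$ directly gives $a = \alpha^{\abstractinputspace_{\variablesymbol}}(\inputex(\variablesymbol))$, which is exactly $\abstracttransformer{x}(\inputex)$. If $P'$ is a constant $c$, the only transitions labeled $c$ come from \textsc{Presyn-Prod} with $n=0$. Such a transition ends at $q_s^{a}$ with $a = \abstracttransformer{c}()$, which matches the abstract evaluation of $c$.

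\emph{Inductive case.} Let $P' = f(P_1,\mydots,P_n)$. Since $\rho$ is a run, $f(\rho(P_1),\mydots,\rho(P_n)) \to \rho(P')$ is a transition in $\Delta$. Every non-variable transition of the offline FTA is produced by \textsc{Presyn-Prod}, so writing $\rho(P_i) = q_{s_i}^{a_i}$ and $\rho(P') = q_s^a$ gives $a = \abstracttransformer{f}(a_1,\mydots,a_n)$. The restriction of $\rho$ to $\textit{Sub}(P_i)$ is a run for $P_i$, and it stays input-consistent because the variable occurrences in $P_i$ are among those in $P'$. The induction hypothesis therefore gives $a_i = \abstracttransformer{P_i}(\inputex)$. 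Substituting, $a = \abstracttransformer{f}(\abstracttransformer{P_1}(\inputex),\mydots,\abstracttransformer{P_n}(\inputex)) = \abstracttransformer{P'}(\inputex)$.

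The main point needing care is that each transition of the offline FTA determines its target's abstract value as a function of its sources. This holds because the abstract transformer is a function, so distinct target values are impossible for the same source states and operator. It follows that nondeterminism in the offline FTA arises only from \textsc{Presyn-Var}, and input-consistency removes exactly that nondeterminism. We will also state the definition of $\abstracttransformer{P}(\inputex)$ explicitly, with constants folded into the nullary-operator case, so that the induction matches how the paper defines abstract semantics.
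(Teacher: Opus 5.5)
Your proof is correct and matches the paper's: both use structural induction on $P$, where the variable case follows directly from input-consistency and the composite case uses the fact that every non-variable transition comes from \textsc{Presyn-Prod}, giving $a = \abstracttransformer{f}(a_1, \mydots, a_n)$, after which the induction hypothesis on the input-consistent sub-runs finishes the argument. Your separate constant case is simply the $n = 0$ instance of the paper's composite case, and your closing remark about nondeterminism is not needed for the lemma. That remark is also slightly overstated, since two productions with the same operator and argument symbols but different left-hand sides already make \textsc{Presyn-Prod} nondeterministic in the grammar symbol, though never in the abstract value.
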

\begin{proof}
We prove by induction on the structure of the associated program $P$. It can either be a variable $x$ or a composite program $f(P_1, \mydots, P_n)$. 
\begin{itemize}
    \item When $P$ is a variable $x$, let $q_s^a$ be the state $\rho(x)$. By the input-consistency of $\rho$, this abstract value $a$ must be $\alpha^{\mathcal D_x}(\inputex(x))$, thus must match the abstract output of $P$.
    \item When $P$ is a composite program $f(P_1, \mydots, P_n)$, let $\rho_i$ be the sub-run of $\rho$ that is associated with $P_i$, $q_{s_i}^{a_i}$ be the end state of $\rho_i$ (i.e., $\rho(P_i)$), and $q_s^a$ be the end state of $\rho$ (i.e., $\rho(P)$). Then, $\rho_i$ is also input-consistent with $\inputex$ (by the input-consistency of $\rho$), and $f(q_{s_1}^{a_1}, \mydots, q_{s_n}^{a_n}) \to q_s^a$ must be a transition in $\mathcal A$ (by the definition of FTA runs). 
    Note that this transition can only be constructed by the \textsc{Presyn-Prod} rule in \autoref{fig:offline-fta-rules}.
    Therefore, by the process of this rule, the induction hypothesis, and the definition of the abstract transformer, we have:
    $$
    a = \abstracttransformer{f}(a_1, \mydots, a_n) = \abstracttransformer{f} \Big( \abstracttransformer{P_1}(\inputex), \mydots, \abstracttransformer{P_n}(\inputex) \Big) = \abstracttransformer{P}(\inputex)
    $$
\end{itemize}
Therefore, the induction holds, and the lemma is proved. 
\end{proof}

\newpara{Lemma 2.} Then, the second lemma \autoref{lemma:offline-fta-completeness} shows the \textit{completeness} of the offline FTA; that is, the offline FTA always comprises an input-consistent run for any program and any input.

\begin{lemma}
\label{lemma:offline-fta-completeness}
    let $\mathcal A = (Q, F, Q_f, \ftatransitions)$ be the offline FTA constructed from grammar $\grammar = (\nonterminalsymbols, \terminalsymbols, \startsymbol, \productions )$ and its associated abstract semantics $\abstracttransformer{\cdot}$.
    Then, for any program $P \in \lang(\grammar)$ and any input $\inputex$, there must be an accepting run $\rho$ in $\mathcal A$ for $P$, such that is $\ftarun$ \emph{input consistent} with $\inputex$. 
\end{lemma}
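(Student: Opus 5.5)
We argue by structural induction on $P$, proving a slightly stronger claim that does not mention acceptance. For every sub-program $P'$ of $P$ rooted at grammar symbol $s$ (i.e., derived from $s$ in $\grammar$), there is a run $\rho'$ of $\mathcal A$ for $P'$ that is input-consistent with $\inputex$. Moreover, this run ends at the state $q_s^{a}$ with $a = \abstracttransformer{P'}(\inputex)$, i.e., the abstract output of $P'$ on $\inputex$ computed bottom-up via the transformers. The lemma then follows by applying the claim to $P$ itself, which is derived from $\startsymbol$. By \textsc{Presyn-Final}, every state of the form $q_{\startsymbol}^{a}$ in $Q$ is final, so the resulting run is accepting.

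Base case: $P' = x$, produced by a rule $s \to x \in \productions$. Let $a = \alpha^{\abstractinputspace_x}(\inputex(x))$. We assume, as the statement implicitly does, that $\inputex$ lies in the considered input space, so $a \in \abstractinputspace_x$. Then \textsc{Presyn-Var} puts $q_s^a \in Q$ and $x \to q_s^a \in \Delta$. The run mapping $x$ to $q_s^a$ is input-consistent by construction, and $a$ is exactly the abstract output of $x$.

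Inductive case: $P' = f(P_1,\mydots,P_n)$, produced by $s \to f(s_1,\mydots,s_n) \in \productions$ with each $P_i$ derived from $s_i$. The induction hypothesis gives input-consistent runs $\rho_i$ ending at $q_{s_i}^{a_i}$ with $a_i = \abstracttransformer{P_i}(\inputex)$. Because these states are in $Q$, \textsc{Presyn-Prod} (applied to saturation) yields $q_s^a \in Q$ with $a = \abstracttransformer{f}(a_1,\mydots,a_n) = \abstracttransformer{P'}(\inputex)$. It also yields the transition $f(q_{s_1}^{a_1},\mydots,q_{s_n}^{a_n}) \to q_s^a \in \Delta$. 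Define $\rho'$ as the union of the $\rho_i$ on the sub-terms of each $P_i$, extended by $\rho'(P') = q_s^a$. This map is compatible with $\Delta$ at every node. It is input-consistent because every variable occurrence lies in some $P_i$ and is handled by $\rho_i$.

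The main subtlety is that a run is defined as a map on $\textit{Sub}(P)$. If syntactically identical sub-programs occur in different positions, the gluing step above could in principle assign conflicting states. This is resolved because the constructed state depends only on the sub-program and on $\inputex$: it is $q_s^{a}$ with $a$ the abstract output. So identical sub-terms at the same grammar symbol receive the same state, and the union of the $\rho_i$ is well defined. This determinism is exactly the fact recorded in Lemma~\ref{lemma:IO-consistent}. Alternatively, one can treat runs as labeling occurrences (tree positions) rather than sub-terms, and the issue disappears.

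A second point needs checking: \textsc{Presyn-Prod} is applied until saturation, so it fires on every tuple of existing argument states. This requires the finitized grammar (and hence the set of reachable states) to make the construction terminate. This is assumed in the paper, so no further obstacle remains.
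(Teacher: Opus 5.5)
Your proof is correct and follows the paper's own argument. It uses the same generalized claim for sub-programs derived from a symbol $s$, proved by structural induction with \textsc{Presyn-Var} (taking $a = \alpha^{\abstractinputspace_{x}}(\inputex(x))$) and \textsc{Presyn-Prod}, then \textsc{Presyn-Final} for acceptance; tracking $a = \abstracttransformer{P'}(\inputex)$ and stating explicitly that $\alpha^{\abstractinputspace_{x}}(\inputex(x)) \in \abstractinputspace_{x}$ are harmless refinements that the paper leaves implicit.

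One small correction to your gluing remark: determinism of the abstract value does not help when the same sub-term is derived from two different symbols. For example, $f(x,x)$ via $s \to f(s_1,s_2)$, $s_1 \to x$, $s_2 \to x$ needs both $q_{s_1}^{a}$ and $q_{s_2}^{a}$, so it is your occurrence-based reading of runs that actually resolves the issue.
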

\begin{proof} We prove the following claim, which generalizes the lemma to sub-programs. 
\begin{itemize}
\item \textbf{Claim.} For any grammar symbol $s \in V$ and any sub-program $P$ that can be expanded from $s$, there must be a run $\rho$ in $\mathcal A$ for $P$, such that (1) $\ftarun$ is input-consistent with $\inputex$, and (2) $\ftarun$ ends at a state $q_s^a$. 
\end{itemize} 
To see how this claim generalizes the original lemma, let us assume $s$ is the start symbol $s_0$. Then, according to the claim: for any program in $P$, there must be a corresponding input-consistent run $\rho$ that ends at a state $q_{s_0}^a$.
All such states must be final states in $Q_f$ by the \textsc{Presyn-Final} rule (\autoref{fig:offline-fta-rules}); hence, $\rho$ must also be an accepting run. Therefore, the original lemma is implied.

We prove this claim by induction on the structure of the sub-program $P$. This program $P$ can either be a variable $x$ or a composite program $f(P_1, \mydots, P_n)$.
\begin{itemize}
    \item When $P$ is a variable $x$, there must be production $s \to x$ in $\productions$. Then, by the \textsc{Presyn-Var} rule, the offline FTA must comprise a transition $x \to q_s^a$ for any abstract value $a$. 
    By taking $a$ as $\alpha^{\mathcal D_x}(\inputex(x))$, this single transition forms an input-consistent run $\rho$ of $P$ that ends at $q_s^a$.
    \item When $P$ is a composite program $f(P_1, \mydots, P_n)$, there must be a production $s \to f(s_1, \mydots, s_n)$ in $\productions$, where each $P_i$ is expanded from $s_i$.
    By the induction hypothesis, for each $P_i$, there must be an input-consistent run $\rho_i$ of $P_i$ that ends at a state $q_{s_i}^{a_i}$ for some abstract value $a_i$.
    Then, by the \textsc{Presyn-Prod} rule, the offline FTA must comprise a transition $f(q_{s_1}^{a_1}, \mydots, q_{s_n}^{a_n}) \to q_s^a$ for $a = \abstracttransformer{f}(a_1, \mydots, a_n)$.
    Clearly, the union of all $\rho_i$ and $\rho(P) \mapsto q_{s}^a$ forms an input-consistent run $\rho$ of $P$ that ends at $q_s^a$.
\end{itemize}

Therefore, the induction holds, then the claim is proved. Hence the original lemma.
\end{proof}

\newpara{Lemma 3.} Our last lemma \lemmaref{lemma:oracle-preciseness} states the preciseness of our oracle. That is, the oracle constraint holds if and only if an input-consistent run exists.

\begin{lemma}
\label{lemma:oracle-preciseness}
Let $\mathcal A$ be the offline FTA, and $\slicingoracle$ be the oracle constructed for $\mathcal A$.
Then, for any state $q$ and any input $\inputex$, $\alpha(\inputex) \models \mathcal O(q)$ (i.e., \textup{\textsc{InputConsistent}}$(q, \inputex, \mathcal O)$ in \autoref{fig:slicing-oracle-rules}) if and only if there exists an input-consistent run $\rho$ that ends at $q$.
\end{lemma}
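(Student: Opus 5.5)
The statement is Lemma 3 (\lemmaref{lemma:oracle-preciseness}). I will prove it by treating $\slicingoracle$ as the least solution of the recursive equations in \autoref{fig:slicing-oracle-rules}. Since the grammar is finitized by unrolling, the offline FTA is acyclic, so $\slicingoracle$ is well defined by structural recursion on states in a bottom-up topological order (states for $t_0$ first, and so on). I will fix $\inputex$ and write $\abstractinputenv = \alpha(\inputex)$. Note that $\abstractinputenv$ satisfies an atom $\variablesymbol = \abstractvalue$ exactly when $\abstractvalue = \alpha^{\abstractinputspace_{\variablesymbol}}(\inputex(\variablesymbol))$. The two directions are proved separately.

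For the ($\Leftarrow$) direction, I will induct on the program $P$ associated with an input-consistent run $\ftarun$ ending at $\ftastate$. If $P$ is a variable $\variablesymbol$, then $\ftarun$ uses a transition $\variablesymbol \rightarrow \ftastate_{\grammarsymbol}^{\abstractvalue}$ with $\abstractvalue = \alpha^{\abstractinputspace_{\variablesymbol}}(\inputex(\variablesymbol))$. Its $\inputconsistentconstraint_{\ftatransition}$ is the atom $\variablesymbol = \abstractvalue$, which $\abstractinputenv$ satisfies, so the disjunction $\slicingoracle(\ftastate)$ holds. If $P = \dsloperator(P_1,\mydots,P_n)$, then $\ftarun$ uses $\dsloperator(\ftarun(P_1),\mydots,\ftarun(P_n)) \rightarrow \ftastate$. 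Each sub-run is input-consistent, so by the induction hypothesis $\abstractinputenv \models \slicingoracle(\ftarun(P_i))$ for every $i$. Hence the conjunction $\inputconsistentconstraint_{\ftatransition}$ holds, and with it the disjunction for $\ftastate$.

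For the ($\Rightarrow$) direction, I will induct on the topological rank of $\ftastate$, or equivalently on the height of the recursive unfolding of $\slicingoracle(\ftastate)$. If $\abstractinputenv \models \slicingoracle(\ftastate)$, then some disjunct $\inputconsistentconstraint_{\ftatransition}$ is satisfied for an incoming transition $\ftatransition$ of $\ftastate$. In the variable case, the satisfied atom forces $\abstractvalue = \alpha^{\abstractinputspace_{\variablesymbol}}(\inputex(\variablesymbol))$, so the single transition is an input-consistent run. In the operator case, each $\slicingoracle(\ftastate_i)$ holds for argument states of strictly lower rank. The induction hypothesis then gives input-consistent runs $\ftarun_i$ ending at each $\ftastate_i$, for sub-programs $P_i$. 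The run $\ftarun$ for $\dsloperator(P_1,\mydots,P_n)$ is $\bigcup_i \ftarun_i$ extended with $\dsloperator(P_1,\mydots,P_n)\mapsto \ftastate$.

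The main obstacle I expect is justifying that the $\ftarun_i$ glue into a single function. The subterms $P_i$ are distinct positions in the tree, so I will formalize runs as maps on positions (subterm occurrences) of $P$, not on syntactic subterms. Then the union is disjoint and compatible with $\ftatransitions$, and input-consistency is inherited position-wise. I must also check that the recursion in \autoref{fig:slicing-oracle-rules} is well-founded, since the ($\Rightarrow$) induction needs it. This follows from acyclicity of the unrolled grammar: every transition goes from states of lower unrolling depth to a state of higher depth. If that argument fails, I would instead define $\slicingoracle$ as a least fixpoint and induct on the fixpoint iteration stage at which $\slicingoracle(\ftastate)$ first holds for $\abstractinputenv$.
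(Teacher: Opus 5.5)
Your proposal is correct and follows the paper's proof: structural induction on the run's program for the ``if'' direction, and induction on the topological order of the acyclic (unrolled) offline FTA for the ``only-if'' direction, gluing the argument sub-runs with the new top transition. Your extra care is a refinement of that same argument, not a different route: you formalize runs on subterm occurrences so the union of sub-runs is a well-defined function, and you check that the oracle's recursion is well-founded. The paper passes over both points.
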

\begin{proof} We prove the two directions separately. 

For the \emph{if} direction, we need to show that the oracle constraint holds if an input-consistent run $\rho$ exists. We prove this by induction on the structure of the associated program $P$---it can either be a variable or a composite program.
\begin{itemize}
    \item When $P$ is a variable $x$, the current state $q$ must be in the form of $q_s^a$ for some grammar symbol $s$ and abstract value $a = \alpha^{\mathcal D_x}(\inputex(x))$ (by the input-consistency of $\rho$), and there must be a transition $x \to q$ in $\Delta$ (by the definition of run).
    Let $\Delta_q$ be the set of transitions ending at $q$ and $\delta^*$ be the transition $x \to q$. By the rule for \textsc{BuildOracle} (\autoref{fig:slicing-oracle-rules}), we have:
    \begin{align*}
\alpha(\inputex) \models \mathcal O(q) \ \ \Longleftarrow \ \ \alpha(\inputex) \models  \bigvee_{\delta \in \Delta_q} \Psi_{\delta} \ \ &\Longleftarrow \ \ \alpha(\inputex) \models \Psi_{\delta^*} \\
&\Longleftarrow \ \ \alpha(\inputex) \models \left(x = \alpha^{\mathcal D_x}(\inputex(x))\right)  \ \ \Longleftarrow \ \ \text{True}
    \end{align*}
    \item When $P$ is a composite program $f(P_1, \mydots, P_n)$, let $\rho_i$ be the sub-run of $\rho$ associated with $P_i$, and let $q_i$ be the end state of $\rho_i$ (i.e., $\rho(P_i)$). Then, there must be a transition $f(q_1, \mydots, q_n) \to q$ in $\Delta$ (by the definition of run) and $\rho_i$ must also be input-consistent with $\inputex$ (by the input-consistency of $\rho$). By the induction hypothesis, we have $\alpha(\inputex) \models \mathcal O(q_i)$ for each $i \in [1, n]$.
    
    Similar to the previous case, let $\Delta_q$ be the set of transitions ending at $q$ and $\delta^*$ be this transition $f(q_1, \mydots, q_n) \to q$, then we have:
    \begin{align*}
\alpha(\inputex) \models \mathcal O(q) \ \ &\Longleftarrow \ \ \alpha(\inputex) \models \Psi_{\delta^*} \ \ \Longleftarrow \ \ \alpha(\inputex) \models \bigwedge_{i \in [1, n]} \mathcal O(q_i) \ \ \Longleftarrow \ \ \text{True}
    \end{align*}
\end{itemize}
Therefore, the induction holds, and the \emph{if} direction is proved.

\smallskip 

Then, for the \textit{only-if} direction, we need to show that an input-consistent run $\rho$ exists if the oracle constraint holds, i.e., $\alpha(\inputex) \models \mathcal O(q)$. 
Because the grammar (which our offline FTA $\fta$ is built for) is finite, to simplify our proof, we assume the grammar is acyclic (i.e., no recursive productions, which can be achieved by unrolling). Then, the resulting $\fta$ obviously is also acyclic. 
Therefore, we can prove this direction by induction on the topological order of the states in $\mathcal A$, from initial states to final states. 

Let $\Delta_q$ be the set of transitions which all end at $q$. By the rule for \textsc{BuildOracle}, $\alpha(\inputex) \models \mathcal O(q)$ implies that there is a transition $\delta^* \in \Delta_q$ such that $\alpha(\inputex) \models \Psi_{\delta^*}$. This transition has two possible forms:
\begin{itemize}
    \item If $\delta^* = x \rightarrow q$, its constraint $\Psi_{\delta^*}$ is $x = a$ for $a$ being the abstract value associated with $q$. Since $\alpha(\inputex) \models \Psi_{\delta^*}$, this abstract value $a$ must be equal to $\alpha^{\mathcal D_x}(\inputex(x))$; hence the mapping $\{x \mapsto q\}$ directly forms an input-consistent run ending at $q$.
    \item If $\delta^* = f(q_1, \mydots, q_n) \to q$, its constraint $\Psi_{\delta^*}$ is $\bigwedge_{i \in [1, n]} \mathcal O(q_i)$. Since $\alpha(\inputex) \models \Psi_{\delta^*}$, we have $\alpha(\inputex) \models \mathcal O(q_i)$ for each $i \in [1, n]$. Then, by the induction hypothesis, there is an input-consistent run $\rho_i$ ending at $q_i$ for each $i \in [1, n]$; hence the union of these runs and $\rho(f(P_1, \mydots, P_n)) \mapsto q$ forms an input-consistent run ending at $q$, where $P_i$ is the program associated with $\rho_i$.
\end{itemize}
Therefore, the induction holds, and the \emph{only-if} direction is proved. 

Combining both directions, the lemma follows.
\end{proof}

\newpara{\autoref{theorem:slice-sound-complete}.} Now, we are ready to prove the target theorem \autoref{theorem:slice-sound-complete}.

\begin{theorem}[\autoref{theorem:slice-sound-complete}] Suppose $\ftaslicee{\ioex_1}$ is the slice returned by  $\slicealg$ at~\alglineref{alg:top-level:slice} of~\algorithmref{alg:top-level}, given a DSL with abstract semantics and an example $e_1$. 
Then $\lang(\ftaslicee{\ioex_1})$ is exactly the set of DSL programs that satisfy $\ioex_1$ under abstract semantics. 
That is, $\slicealg$ is both sound (i.e., every accepting run of $\ftaslicee{\ioex_1}$ represents a DSL program that satisfies $\ioex_1$ under abstract semantics) and complete (i.e., every DSL program that satisfies $\ioex_1$ under abstract semantics corresponds to an accepting run in $\ftaslicee{\ioex_1}$).
\end{theorem}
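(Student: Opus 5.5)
The plan is to reduce the theorem to a single characterization of the slice: an accepting run $\ftarun$ of the offline FTA $\fta$ is an accepting run of $\ftaslicee{\ioex_1}$ if and only if $\ftarun$ is consistent with $\ioex_1$, i.e., both output-consistent and input-consistent. Once this holds, the theorem follows by combining it with \lemmaref{lemma:IO-consistent} and \lemmaref{lemma:offline-fta-completeness}.

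\emph{Soundness.} Let $\ftarun$ be an accepting run of $\ftaslicee{\ioex_1}$ for a program $\prog$. Since $\ftaslicee{\ioex_1}$ is a sub-FTA of $\fta$, $\ftarun$ is also a run of $\fta$.
\begin{itemize}
\item Its final state is in $\ftafinalstates'$, which can only be produced by the \textsc{Output} rule. Hence $\ftarun(\prog)=\ftastate_{\startsymbol}^{\abstractvalue}$ with $\outputex\in\gamma(\abstractvalue)$.
\item Every leaf transition $\variablesymbol\to\ftastate_{\grammarsymbol}^{\abstractvalue}$ of $\ftarun$ lies in $\ftatransitions'$. The only rule adding such transitions is \textsc{Input}, which requires $\abstractvalue=\alpha^{\abstractinputspace_{\variablesymbol}}(\inputex(\variablesymbol))$. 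Hence $\ftarun$ is input-consistent.
\end{itemize}
By \lemmaref{lemma:IO-consistent}, $\abstractvalue=\abstracttransformer{\prog}(\inputex)$. So $\outputex\in\gamma(\abstracttransformer{\prog}(\inputex))$, which says $\prog$ satisfies $\ioex_1$ abstractly.

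\emph{Completeness.} Let $\prog$ satisfy $\ioex_1$ abstractly, i.e., $\outputex\in\gamma(\abstracttransformer{\prog}(\inputex))$.
\begin{itemize}
\item By \lemmaref{lemma:offline-fta-completeness}, $\fta$ has an accepting, input-consistent run $\ftarun$ for $\prog$.
\item By \lemmaref{lemma:IO-consistent}, $\ftarun$ ends at $\ftastate_{\startsymbol}^{\abstracttransformer{\prog}(\inputex)}$, so $\ftarun$ is output-consistent.
\end{itemize}
It remains to show that every state and transition of $\ftarun$ lands in $\ftaslicee{\ioex_1}$. I would prove this by top-down induction over the subterms of $\prog$, from the root to the leaves. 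The invariant is: for each subterm $\prog'$, the state $\ftarun(\prog')\in\ftastates'$.
\begin{itemize}
\item \emph{Root.} The sub-run of $\ftarun$ ending at the root is input-consistent, so \lemmaref{lemma:oracle-preciseness} (the \emph{if} direction) gives $\inputconsistent(\ftarun(\prog),\inputex,\slicingoracle)$. The root state is final and consistent with $\outputex$, so \textsc{Output} places it in $\ftastates'$ and $\ftafinalstates'$.
\item \emph{Composite subterm} $\prog'=\dsloperator(\prog_1,\mydots,\prog_n)$ with $\ftarun(\prog')\in\ftastates'$. Each child sub-run ending at $\ftarun(\prog_i)$ is input-consistent, so \lemmaref{lemma:oracle-preciseness} gives $\inputconsistent(\ftarun(\prog_i),\inputex,\slicingoracle)$ for every $i$. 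Then \textsc{Transition} adds the transition $\dsloperator(\ftarun(\prog_1),\mydots,\ftarun(\prog_n))\to\ftarun(\prog')$ and all its argument states.
\item \emph{Variable leaf.} Its state is in $\ftastates'$ and its abstract value is $\alpha^{\abstractinputspace_{\variablesymbol}}(\inputex(\variablesymbol))$, so \textsc{Input} adds the leaf transition.
\end{itemize}
Thus $\ftarun$ is an accepting run of $\ftaslicee{\ioex_1}$, and $\prog\in\lang(\ftaslicee{\ioex_1})$.

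The step needing the most care is completeness. The slicing rules gate every inclusion on $\inputconsistent$, which refers only to the existence of \emph{some} input-consistent run to a state, not to the particular run $\ftarun$. The bridge is that each sub-run of $\ftarun$ is itself input-consistent, so the \emph{if} direction of \lemmaref{lemma:oracle-preciseness} applies to exactly the states $\ftarun$ uses. Soundness needs no oracle reasoning at all, because input-consistency is enforced syntactically by \textsc{Input} at the leaves. One detail I would check is that in soundness every leaf of $\ftarun$ really comes from an \textsc{Input}-added transition, since \textsc{Transition} only adds transitions of positive arity. This holds because variables are nullary and, in the convention of \autoref{fig:offline-fta-rules}, are created only by \textsc{Presyn-Var}.
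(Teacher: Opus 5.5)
Your proof is correct and follows the paper's proof essentially step for step: soundness comes from the \textsc{Output} and \textsc{Input} rules plus \lemmaref{lemma:IO-consistent}, and completeness comes from \lemmaref{lemma:offline-fta-completeness} followed by a root-to-leaf induction that discharges the \textsc{Output}/\textsc{Transition} side conditions via the \emph{if} direction of \lemmaref{lemma:oracle-preciseness} and closes the leaves with \textsc{Input}. The only slip is the justification in your closing remark: constants are nullary operators whose transitions come from \textsc{Presyn-Prod} and must enter the slice through \textsc{Transition} with $n=0$, so the right justification is that \textsc{Transition} ranges over non-variable transitions, with variable transitions added only by \textsc{Input}, not that it adds only positive-arity transitions (the paper's proof relies on this reading implicitly as well).
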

\begin{proof}
    We prove the soundness and completeness separately. 
    
    To prove the \textit{soundness}, we first show that \emph{every run $\rho$ in $\mathcal A_{e_1}$ is input consistent with the input $\inputex$ of $e_1$}. This claim can be proved by induction on the structure of the program $P$ associated with $\rho$.
    \begin{itemize}
        \item When $P$ is a variable $x$, the run $\rho$ is formed by a single transition $x \to q_s^a$.
        By the \textsc{Input} rule in our slicing algorithm (\autoref{fig:online-slicing-rules}), $a$ must be equal to $\alpha^{\mathcal D_x}(\inputex(x))$ (otherwise this transition will not be available in $\mathcal A_{e_1}$), such that $\rho$ must be input-consistent with $\inputex$.
        \item When $P$ is a composite program $f(P_1, \mydots, P_n)$, let $\rho_i$ be the sub-run of $\rho$ associated with $P_i$. By the induction hypothesis, each $\rho_i$ is input-consistent with $\inputex$, such that their union $\rho$ must also be input-consistent with $\inputex$.
    \end{itemize}
Therefore, the induction holds, and every run $\rho$ in $\mathcal A_{e_1}$ is input consistent with $\inputex$.
Then, let $\rho$ be any accepting run in $\mathcal A_{e_1}$, $P$ be the associated program, and $q_{s_0}^a$ be the end state $\rho(P)$.
\begin{itemize}
    \item By the previous claim, $\rho$ is input consistent with $\inputex$; hence the associated abstract value $a$ must be equal to $\abstracttransformer{P}(\inputex)$ (by \lemmaref{lemma:IO-consistent}).
    \item Since $q_{s_0}^a$ is a final state in $\mathcal A_{e_1}$, $a$ must matches the output of $e_1$ (by the \textsc{Output} rule in our slicing algorithm).
\end{itemize}
Therefore, program $P$ must satisfy $e_1$ under abstract semantics, which implies the soundness.

\smallskip 
\vspace{10pt}

For the \textit{completeness}, let $P$ be any program that satisfies $e_1$ under abstract semantics. According to \lemmaref{lemma:offline-fta-completeness}, there must be an accepting run $\rho$ of $P$ that is input-consistent with $\inputex$. The remaining task is to show that this run $\rho$ is also available in the slice $\mathcal A_{e_1}$, i.e., the states in $\rho$ and their induced transitions are all inside $\mathcal A_{e_1}$. To achieve this, we introduce the following claim.
\begin{itemize}
    \item \textbf{Claim.} For any sub-program $P'$ of $P$, if $\rho(P')$ is in the slice $\mathcal A_{e_1}$, then the sub-run of $\rho$ associated with $P'$ is also in $\mathcal A_{e_1}$.
\end{itemize}

If this claim holds, then by taking $P'$ as the entire program $P$, we have that the whole run $\rho$ is in $\mathcal A_{e_1}$ when the end state $q_{s}^a = \rho(P)$ is in $\mathcal A_{e_1}$. This condition is always true because this state must satisfy all three prerequisites in the \textsc{Output} rule of our slicing algorithm:
\begin{itemize}
    \item $q_{s}^a \in Q_f$ because $\rho$ is an accepting run in the offline FTA $\mathcal A$.
    \item $\outputex \in \gamma(a)$ is ensured in two steps. First, since $\rho$ is input-consistent with $\inputex$, the abstract value $a$ must be equal to $\abstracttransformer{P}(\inputex)$ by \lemmaref{lemma:IO-consistent}. Second, since $P$ satisfies $e_1$ under abstract semantics, we have $\outputex \in \gamma(\abstracttransformer{P}(\inputex)) = \gamma(a)$.
    \item $\inputconsistent(q_{s}^a, \inputex, \slicingoracle)$ holds by \lemmaref{lemma:oracle-preciseness} because $\rho$ is input-consistent with $\inputex$.
\end{itemize}

At last, we prove the claim by induction on the structure of the sub-program $P'$. 
\begin{itemize}
    \item When $P'$ is a variable $x$, the sub-run of $\rho$ comprises a single state $q_s^a$ and induces a single transition $x \to q_s^a$.
    Since $\rho$ is input-consistent with $\inputex$, we have $a = \alpha^{\mathcal D_x}(\inputex(x))$; hence this transition must be in $\mathcal A_{e_1}$ when $q_s^a$ is so, by the \textsc{Input} rule in our slicing algorithm.
    \item When $P'$ is a composite program $f(P_1', \mydots, P_n')$, let $q$ be the state $\rho(P')$, $\rho_i$ be the sub-run of $\rho$ associated with $P_i'$, and $q_i$ be the end state of $\rho_i$.
    Let us check the prerequisites for applying the \textsc{Transition} rule (\autoref{fig:online-slicing-rules}) to the state $q$.
    \smallskip

    \begin{enumerate} 
        \item $q \in Q'$ is assumed by the hypothesis of the claim.
        \item $f(q_1, \mydots, a_n) \rightarrow q \in \Delta$ is satisfied since $\rho$ is a run on the offline FTA $\mathcal A$.
        \item $\inputconsistent(q_i, \inputex, \slicingoracle)$ holds for each $i \in [1, n]$---this is ensured by \autoref{lemma:oracle-preciseness}, as $\rho_i$ is an input consistent run ended at $q_i$. 
    \end{enumerate}
    \smallskip 

    Therefore, this rule must have been executed, which adds the transition $f(q_1, \mydots, q_n) \to q$ (which is induced by $\rho$) and all sub-states $q_i$. 
    Then, by the induction hypothesis, all sub-runs $\rho_i$ are also in $\mathcal A_{e_1}$. Hence, the whole sub-run of $\rho$ associated with $P'$ must also be in $\mathcal A_{e_1}$.
\end{itemize}
So the induction holds, and the claim is proved. Consequently, the completeness is also proved. 

\vspace{10pt}

Combining both soundness and completeness, our theorem follows. 

\end{proof}

%% file: sections/appendix-transformer.tex
\clearpage

\section{Abstract transformers for $\sqlname$}
\label{sec:appendix:sqltransformer}

This section presents the complete list of abstract transformers in $\sqlname$. We re-use notations from \autoref{sec:sql:sql}. In addition, we abstract each aggregation expression $\aggexpr = \text{agg}(\colidx)$ 
using $\predcolrange^{\aggexpr}(\setofcols^{\agg})$, 
where the superscript $\agg$ records the aggregation type: 
$\sqlminmax$ means the aggregation is $\sqlmin$ or $\sqlmax$, 
$\sqlsumavg$ means the aggregation is $\sqlsum$ or $\sqlavg$, and 
$\sqlcount$ means the aggregation is $\sqlcount$. 
This annotation is necessary because 
the transformers for $\groupby$ distinguish aggregation types 
(e.g., applying $\sqlsumavg$ to a string-typed column yields $\bot$, 
whereas $\sqlminmax$ preserves the original type). 
The aggregation list $\agglist$ is abstracted by 
$\predlistcolinset^{\agglist}([\setofcols^{\agg_1}_1, \mydots, \setofcols^{\agg_n}_n])$, 
analogously to the column list abstraction $\predlistcolinset^{\columnlist}$.

\begin{figure}[H]
\footnotesize 
\centering
\renewcommand{\arraystretch}{1.2}
\setlength{\arraycolsep}{1pt} 
\[
\begin{array}{rlll}

\abstracttransformer{\proj}
\big(
\prednumofrows^{\sqlquery}(  \sigma_1, \sigma_2 ), \ 
\columnlist
\big)
& \assign
& \prednumofrows^{\sqlquery'}( \sigma_1, \sigma_2  )
\\

\abstracttransformer{\proj}
\big( 
\prednumofcols^{\sqlquery}( k ),  \ 
\predlistcolinset^{\columnlist} ( [ \setofcols_1, \mydots, \setofcols_l ] ) 
\big)
& \assign 
& \prednumofcols^{\sqlquery'} ( l ) 
\ \  
\text{if } 
\bigland_{i \in [1,l]}
\big( \max ( \setofcols_i ) \leq k \big)
\\ 

\abstracttransformer{\proj}
\big( 
\prednumofcols^{\sqlquery}( k ),  \ 
\predlistcolinset^{\columnlist} ( [ \setofcols_1, \mydots, \setofcols_l ] ) 
\big)
& \assign 
& \bot
\ \  
\text{if } 
\biglor_{i \in [1,l]}
\big( \max ( \setofcols_i ) > k \big)
\\ 

\abstracttransformer{\proj}
\big( 
\predsubsetofinput^{\sqlquery}( \setofcols, \sqlinputtablevar),  \ 
\predlistcolinset^{\columnlist} ( [ \setofcols_1, \mydots, \setofcols_l ] ) 
\big)
& \assign 
& \predsubsetofinput^{\sqlquery'} (\setofcols', \sqlinputtablevar) 
\ \ 
\text{where } \setofcols' = \{ i \ | \ \setofcols_i \subseteq \setofcols \}
\\ 

\abstracttransformer{\proj}
\big( 
\predcoltype^{\sqlquery}( \setofcols, t),  \ 
\predlistcolinset^{\columnlist} ( [ \setofcols_1, \mydots, \setofcols_l ] ) 
\big)
& \assign 
& \predcoltype^{\sqlquery'} (\setofcols', t) 
\ \ 
\text{where } \setofcols' = \{ i \ | \ \setofcols_i \subseteq \setofcols \}
\\[2pt] 
\end{array}
\]
\caption{Abstract transformers for $\proj$}
\label{fig:appendix:projtransformer}
\end{figure}

\begin{figure}[H]
\footnotesize
\centering
\renewcommand{\arraystretch}{1.2}
\setlength{\arraycolsep}{1pt} 
\[
\begin{array}{rlll}

\abstracttransformer{\filter}
\big( 
\prednumofrows^{\sqlquery}(  \sigma_1, \sigma_2 ),  \ 
\sqlpredicate
\big)
& \assign 
& \prednumofrows^{\sqlquery'}(  0, \sigma_2  ) 
\\

\abstracttransformer{\filter}
\big( 
\prednumofcols^{\sqlquery}( k ),  \ 
\predcolsused^{\sqlpredicate} ( \maxcol_1, \maxcol_2 ) 
\big)
& \assign 
& \prednumofcols^{\sqlquery'} ( k ) 
\ \  
\text{if } \max(\maxcol_1, \maxcol_2) \leq k
\\ 

\abstracttransformer{\filter}
\big( 
\prednumofcols^{\sqlquery}( k ),  \ 
\predcolsused^{\sqlpredicate} ( \maxcol_1, \maxcol_2 ) 
\big)
& \assign 
& \bot
\ \  
\text{if } \max(\maxcol_1, \maxcol_2) > k
\\

\abstracttransformer{\filter}
\big( 
\predsubsetofinput^{\sqlquery}( \setofcols, \sqlinputtablevar),  \ 
\sqlpredicate
\big)
& \assign 
& \predsubsetofinput^{\sqlquery'} (\setofcols, \sqlinputtablevar) 
\\

\abstracttransformer{\filter}
\big( 
\predcoltype^{\sqlquery}( \setofcols, t),  \ 
\sqlpredicate
\big)
& \assign 
& \predcoltype^{\sqlquery'}( \setofcols, t) 
\\[2pt] 
\end{array}
\]
\caption{Abstract transformers for $\filter$}
\label{fig:appendix:filtertransformer}
\end{figure}

\begin{figure}[H]
\footnotesize
\centering
\renewcommand{\arraystretch}{1.2}
\setlength{\arraycolsep}{1pt} 
\[
\begin{array}{rlll}
\abstracttransformer{\groupby}
\big(
\prednumofrows^{\sqlquery}(  \sigma_1, \sigma_2 ), \ 
\predlistcolinset^{\columnlist} ( [] ), \ 
\agglist
\big)
& \assign
& \prednumofrows^{\sqlquery'}(1,1)
\\

\abstracttransformer{\groupby}
\big(
\prednumofrows^{\sqlquery}(  \sigma_1, \sigma_2 ), \ 
\predlistcolinset^{\columnlist} ( [\setofcols_1, \mydots, \setofcols_l] ), \ 
\agglist )
\big)
& \assign
& \prednumofrows^{\sqlquery'}(\sigma_1,\sigma_2) \text{ if } l \geq 1
\\[5pt]

\abstracttransformer{\groupby}
\big( 
\prednumofcols^{\sqlquery}( k ),  \ 
\predlistcolinset^{\columnlist} ( [ \setofcols_1, \mydots, \setofcols_l ] ), \ 
\predlistcolinset^{\agglist} ( [ \setofcols^{\agg_1}_1, \mydots, \setofcols^{\agg_n}_n ] )
\big)
& \assign 
&  \prednumofcols^{\sqlquery'} (l+n) 
\\

\multicolumn{3}{c}{\text{if} \ \big( \bigland_{i \in [1,l]}
\big( \max ( \setofcols_i ) \leq k \big) \big) 
\land
\big(\bigland_{i \in [1,n]}
\big( \max ( \setofcols^{\agg_i}_i ) \leq k \big)\big)  }

\\[5pt]

\abstracttransformer{\groupby}
\big( 
\prednumofcols^{\sqlquery}( k ),  \ 
\predlistcolinset^{\columnlist} ( [ \setofcols_1, \mydots, \setofcols_l ] ), \ 
\predlistcolinset^{\agglist} ( [ \setofcols^{\agg_1}_1, \mydots, \setofcols^{\agg_n}_n ] )
\big)
& \assign 
& \bot
 
\\ 

\multicolumn{3}{c}{\text{if} \ \big( \biglor_{i \in [1,l]}
\big( \max ( \setofcols_i ) > k \big) \big) 
\lor
\big(\biglor_{i \in [1,n]}
\big( \max ( \setofcols^{\agg_i}_i ) > k \big)\big)  }

\\[5pt]

\abstracttransformer{\groupby}
\big( 
\predsubsetofinput^{\sqlquery}( \setofcols, \sqlinputtablevar),  \ 
\predlistcolinset^{\columnlist} ( [ \setofcols_1, \mydots, \setofcols_l ] ), \ 
\predlistcolinset^{\agglist} ( [ \setofcols^{\agg_1}_1, \mydots, \setofcols^{\agg_n}_n ] )
\big)
& \assign 
& \predsubsetofinput^{\sqlquery'} (\setofcols', \sqlinputtablevar) 

\\ 

\multicolumn{3}{c}{\text{where } \setofcols' = \{ i \ | \ \setofcols_i \subseteq \setofcols \} \cup \{ j + l \ | \ \setofcols_j^{\sqlminmax} \subseteq \setofcols \} }

\\[5pt]

\abstracttransformer{\groupby}
\big( 
\predcoltype^{\sqlquery}( \setofcols, \emph{string}),  \ 
\columnlist, \ 
\predlistcolinset^{\agglist} ( [ \setofcols^{\agg_1}_1, \mydots, \setofcols^{\agg_n}_n ] )
\big)
& \assign 
& \bot \ \text{ if } \ \setofcols_{i}^{\sqlsumavg} \cap \setofcols \neq \emptyset

\\[5pt]

\abstracttransformer{\groupby}
\big( 
\predcoltype^{\sqlquery}( \setofcols, t),  \ 
\predlistcolinset^{\columnlist} ( [ \setofcols_1, \mydots, \setofcols_l ] ), \ 
\predlistcolinset^{\agglist} ( [ \setofcols^{\agg_1}_1, \mydots, \setofcols^{\agg_n}_n ] )
\big)
& \assign 
& \predcoltype^{\sqlquery'}(\setofcols',t) \land \predcoltype^{\sqlquery'}(\setofcols'',\emph{numeric}) 
\\

\multicolumn{3}{c}{
\text{where }
\setofcols' = \{i \ | \ \setofcols_i \subseteq \setofcols \} \ \cup \ \{l + j \ | \ \setofcols_j^{\sqlminmax} \subseteq \setofcols\} \text{ and } \setofcols'' =  \ \{l + k \ | \ \setofcols_k^{\sqlsumavg} \subseteq \setofcols \} \ \cup \ \{ l + m \ | \ \agg_m = \sqlcount \}
}

\\[5pt]

\end{array}
\]
\caption{Abstract transformers for $\groupby$}
\label{fig:appendix:groupbytransformer}
\end{figure}

\begin{figure}[H]
\footnotesize
\centering
\renewcommand{\arraystretch}{1.2}
\setlength{\arraycolsep}{1pt} 
\[
\begin{array}{rlll}

\abstracttransformer{\distinct}
\big(
\prednumofrows^{\sqlquery}(  \sigma_1, \sigma_2 ), \ 
\columnlist
\big)
& \assign
& \prednumofrows^{\sqlquery'}( \sigma_1, \sigma_2  )
\\

\abstracttransformer{\distinct}
\big( 
\prednumofcols^{\sqlquery}( k ),  \ 
\predlistcolinset^{\columnlist} ( [ \setofcols_1, \mydots, \setofcols_l ] ) 
\big)
& \assign 
& \prednumofcols^{\sqlquery'} ( l ) 
\ \  
\text{if } 
\bigland_{i \in [1,l]}
\big( \max ( \setofcols_i ) \leq k \big)
\\ 

\abstracttransformer{\distinct}
\big( 
\prednumofcols^{\sqlquery}( k ),  \ 
\predlistcolinset^{\columnlist} ( [ \setofcols_1, \mydots, \setofcols_l ] ) 
\big)
& \assign 
& \bot
\ \  
\text{if } 
\biglor_{i \in [1,l]}
\big( \max ( \setofcols_i ) > k \big)
\\ 

\abstracttransformer{\distinct}
\big( 
\predsubsetofinput^{\sqlquery}( \setofcols, \sqlinputtablevar),  \ 
\predlistcolinset^{\columnlist} ( [ \setofcols_1, \mydots, \setofcols_l ] ) 
\big)
& \assign 
& \predsubsetofinput^{\sqlquery'} (\setofcols', \sqlinputtablevar) 
\ \ 
\text{where } \setofcols' = \{ i \ | \ \setofcols_i \subseteq \setofcols \}
\\ 

\abstracttransformer{\distinct}
\big( 
\predcoltype^{\sqlquery}( \setofcols, t),  \ 
\predlistcolinset^{\columnlist} ( [ \setofcols_1, \mydots, \setofcols_l ] ) 
\big)
& \assign 
& \predcoltype^{\sqlquery'} (\setofcols', t) 
\ \ 
\text{where } \setofcols' = \{ i \ | \ \setofcols_i \subseteq \setofcols \}
\\[2pt] 
\end{array}
\]
\caption{Abstract transformers for $\distinct$}
\label{fig:appendix:distincttransformer}
\end{figure}

\begin{figure}[H]
\footnotesize
\centering
\renewcommand{\arraystretch}{1.2}
\setlength{\arraycolsep}{1pt} 
\[
\begin{array}{rlll}
\abstracttransformer{\innerjoin}
\big( 
\prednumofrows^{\sqlquery_1}( \sigma_{1, 1}, \sigma_{1, 2} ),  \ 
\prednumofrows^{\sqlquery_2}( \sigma_{2, 1}, \sigma_{2, 2} ),  \ 
\sqlpredicate
\big)
& \assign 
& \prednumofrows^{\sqlquery'}( 0, \sigma_{1, 2} \times \sigma_{2, 2} ) 
\\

\abstracttransformer{\innerjoin}
\big( 
\prednumofcols^{\sqlquery_1}( k_1 ),  \ 
\prednumofcols^{\sqlquery_2}( k_2 ),  \ 
\predcolsused^{\sqlpredicate} ( \maxcol_1, \maxcol_2 ) 
\big)
& \assign 
& \prednumofcols^{\sqlquery'} ( k_1 + k_2 ) 
\ \  
\text{if } \maxcol_i \leq k_i
\\ 

\abstracttransformer{\innerjoin}
\big( 
\prednumofcols^{\sqlquery_1}( k_1 ),  \ 
\prednumofcols^{\sqlquery_2}( k_2 ),  \ 
\predcolsused^{\sqlpredicate} ( \maxcol_1, \maxcol_2 ) 
\big)
& \assign 
& \bot
\ \  
\text{if } \big( \maxcol_1 > k_1 \big) \lor \big( \maxcol_2 > k_2 \big)
\\

\abstracttransformer{\innerjoin}
\big( 
\predsubsetofinput^{\sqlquery_1}( \setofcols, \sqlinputtablevar),  \ 
\sqlquery_2, \ 
\sqlpredicate
\big)
& \assign 
& \predsubsetofinput^{\sqlquery'}( \setofcols, \sqlinputtablevar) 
\\

\abstracttransformer{\innerjoin}
\big( 
\prednumofcols^{\sqlquery_1}( k ),  \
\predsubsetofinput^{\sqlquery_2}( \setofcols, \sqlinputtablevar),  \ 
\sqlpredicate
\big)
& \assign 
& \predsubsetofinput^{\sqlquery'}( \setofcols', \sqlinputtablevar) 
\ \ \text{where }
\setofcols' = \{ j+k \ | \  j \in \setofcols \} 
\\

\abstracttransformer{\innerjoin}
\big( 
\predcoltype^{\sqlquery_1}( \setofcols, t),  \ 
\sqlquery_2, \ 
\sqlpredicate
\big)
& \assign 
& \predcoltype^{\sqlquery'}( \setofcols, t) 
\\

\abstracttransformer{\innerjoin}
\big( 
\prednumofcols^{\sqlquery_1}( k ),  \
\predcoltype^{\sqlquery_2}( \setofcols, t),  \ 
\sqlpredicate
\big)
& \assign 
& \predcoltype^{\sqlquery'}( \setofcols', t) 
\ \ \text{where }
\setofcols' = \{ j+k \ | \  j \in \setofcols \} 
\\[2pt]

\end{array}
\]
\caption{Abstract transformers for $\innerjoin$}
\label{fig:appendix:innerjointransformer}
\end{figure}

\begin{figure}[H]
\footnotesize
\centering
\renewcommand{\arraystretch}{1.2}
\setlength{\arraycolsep}{1pt} 
\[
\begin{array}{rlll}
\abstracttransformer{\leftjoin}
\big( 
\prednumofrows^{\sqlquery_1}( \sigma_{1, 1}, \sigma_{1, 2} ),  \ 
\prednumofrows^{\sqlquery_2}( \sigma_{2, 1}, \sigma_{2, 2} ),  \ 
\sqlpredicate
\big)
& \assign 
& \prednumofrows^{\sqlquery'}( \sigma_{1, 1}, \sigma_{1, 2} \times \sigma_{2, 2} ) 
\\

\abstracttransformer{\leftjoin}
\big( 
\prednumofcols^{\sqlquery_1}( k_1 ),  \ 
\prednumofcols^{\sqlquery_2}( k_2 ),  \ 
\predcolsused^{\sqlpredicate} ( \maxcol_1, \maxcol_2 ) 
\big)
& \assign 
& \prednumofcols^{\sqlquery'} ( k_1 + k_2 ) 
\ \  
\text{if } \maxcol_i \leq k_i
\\ 

\abstracttransformer{\leftjoin}
\big( 
\prednumofcols^{\sqlquery_1}( k_1 ),  \ 
\prednumofcols^{\sqlquery_2}( k_2 ),  \ 
\predcolsused^{\sqlpredicate} ( \maxcol_1, \maxcol_2 ) 
\big)
& \assign 
& \bot
\ \  
\text{if } \big( \maxcol_1 > k_1 \big) \lor \big( \maxcol_2 > k_2 \big)
\\

\abstracttransformer{\leftjoin}
\big( 
\predsubsetofinput^{\sqlquery_1}( \setofcols, \sqlinputtablevar),  \ 
\sqlquery_2, \ 
\sqlpredicate
\big)
& \assign 
& \predsubsetofinput^{\sqlquery'}( \setofcols, \sqlinputtablevar) 
\\

\abstracttransformer{\leftjoin}
\big( 
\predcoltype^{\sqlquery_1}( \setofcols, t),  \ 
\sqlquery_2, \ 
\sqlpredicate
\big)
& \assign 
& \predcoltype^{\sqlquery'}( \setofcols, t) 
\\

\abstracttransformer{\leftjoin}
\big( 
\prednumofcols^{\sqlquery_1}( k ),  \
\predcoltype^{\sqlquery_2}( \setofcols, t),  \ 
\sqlpredicate
\big)
& \assign 
& \predcoltype^{\sqlquery'}( \setofcols', t) 
\ \ \text{where }
\setofcols' = \{ j+k \ | \  j \in \setofcols \} 
\\[2pt]

\end{array}
\]
\caption{Abstract transformers for $\leftjoin$}
\label{fig:appendix:leftjointransformer}
\end{figure}

\begin{figure}[H]
\footnotesize
\centering
\renewcommand{\arraystretch}{1.2}
\setlength{\arraycolsep}{1pt} 
\[
\begin{array}{rlll}

\abstracttransformer{ [\predcolrange^{\colidx_1}(\setofcols_1), \mydots, \predcolrange^{\colidx_l}(\setofcols_l)]}
& \assign 
& \predcolrange^{\columnlist}\big([\setofcols_1, \dotsb, \setofcols_l]\big)

\\

\abstracttransformer{ [\predcolrange^{\aggexpr_1}(\setofcols^{\agg_1}_1), \mydots, \predcolrange^{\aggexpr_l}(\setofcols^{\agg_l}_l)]}
& \assign 
& \predcolrange^{\agglist}\big([\setofcols^{\agg_1}_1, \dotsb, \setofcols^{\agg_l}_l]\big)

\end{array}
\]
\caption{Abstract transformers for lists}
\label{fig:appendix:liststransformer}
\end{figure}

\begin{figure}[H]
\footnotesize
\centering
\renewcommand{\arraystretch}{1.2}
\setlength{\arraycolsep}{1pt} 
\[
\begin{array}{rlll}

\abstracttransformer{ \predcolrange^{\colidx_1}(\setofcols_1) \logicop \predcolrange^{\colidx_2}(\setofcols_2) }
& \assign 
& \predcolsused^{\sqlpredicate} \big( \max(\setofcols_1), \max(\setofcols_2) \big)
\\

\abstracttransformer{ \predcolrange^{\colidx_1}(\setofcols_1) \logicop \sqlvalue }
& \assign 
& \predcolsused^{\sqlpredicate} \big( \max(\setofcols_1), 0 \big)
\\ 

\abstracttransformer{ \predcolsused^{\sqlpredicate_1} ( \maxcol_{1,1}, \maxcol_{1, 2} ) \land \predcolsused^{\sqlpredicate_2} ( \maxcol_{2,1}, \maxcol_{2,2} ) }
& \assign 
& \predcolsused^{\sqlpredicate} \big( \max( \maxcol_{1, 1}, \maxcol_{2, 1} ), \max( \maxcol_{1, 2}, \maxcol_{2, 2} ) \big)
\\
\abstracttransformer{ \predcolsused^{\sqlpredicate_1} ( \maxcol_{1,1}, \maxcol_{1, 2} ) \lor \predcolsused^{\sqlpredicate_2} ( \maxcol_{2,1}, \maxcol_{2,2} ) }
& \assign 
& \predcolsused^{\sqlpredicate} \big( \max( \maxcol_{1, 1}, \maxcol_{2, 1} ), \max( \maxcol_{1, 2}, \maxcol_{2, 2} ) \big)
\\

\end{array}
\]
\caption{Abstract transformers for predicates}
\label{fig:appendix:predtransformer}
\end{figure}

\begin{figure}[H]
\footnotesize
\centering
\renewcommand{\arraystretch}{1.2}
\setlength{\arraycolsep}{1pt} 
\[
\begin{array}{rlll}

\abstracttransformer{\sqlmin(\predcolrange^{\colidx}(J))}
& \assign
& \predcolrange^{\aggexpr}(J^{\sqlminmax})
\\

\abstracttransformer{\sqlmax(\predcolrange^{\colidx}(J))}
& \assign
& \predcolrange^{\aggexpr}(J^{\sqlminmax})
\\

\abstracttransformer{\sqlsum(\predcolrange^{\colidx}(J))}
& \assign
& \predcolrange^{\aggexpr}(J^{\sqlsumavg})
\\

\abstracttransformer{\sqlavg(\predcolrange^{\colidx}(J))}
& \assign
& \predcolrange^{\aggexpr}(J^{\sqlsumavg})
\\

\abstracttransformer{\sqlcount(\predcolrange^{\colidx}(J))}
& \assign
& \predcolrange^{\aggexpr}(J^{\sqlcount})
\\

\end{array}
\]
\caption{Abstract transformers for aggregations}
\label{fig:appendix:aggtransformer}
\end{figure}

\section{Abstract transformers for $\stringname$}
\label{sec:appendix:stringtransformer}

This section presents a representative list of abstract transformers for $\stringname$.

\input{alg-figures/string-transformers.tex}

\section{Abstract transformers for $\matrixname$}
\label{sec:appendix:matrixtransformer}

This section presents a representative list of abstract transformers for $\matrixname$.

\input{alg-figures/tensor-transformers.tex}

%% file: alg-figures/string-transformers.tex
\begin{figure}[H]
\fontsize{8.5pt}{10pt}\selectfont
\centering
\renewcommand{\arraystretch}{1.2}
\setlength{\arraycolsep}{1pt} 
\[
\begin{array}{rlll}
\abstracttransformer{\text{Concat}}
\big( 
\lenOfString^{S_1}(k_1), \lenOfString^{S_2}(k_2)
\big)
& \assign 
& \lenOfString^{S'}(k_1 + k_2)
\\ 

\abstracttransformer{\text{Concat}}
\big( 
\fromInput^{S_1}(J, x), S_2
\big)
& \assign 
& \fromInput^{S'}(J, x)
\\ 

\abstracttransformer{\text{Concat}}
\big( 
\lenOfString^{S_1}(k), \fromInput^{S_2}(J, x)
\big)
& \assign 
& \fromInput^{S'}(J', x)
\ \ 
\text{where } J' = \{ k + j \ | \ j \in J \}

\\
    
\abstracttransformer{\text{Substr}}
\big( 
\lenOfString^{S}(k), \indexOf^{I_l}(j_l), \indexOf^{I_r}(j_r)
\big)
& \assign 
& \lenOfString^S(j_r - j_l)
\ \ \text{if } j_l \leq j_r \leq k
\\ 
\abstracttransformer{\text{Substr}}
\big( 
\charEqual^{S}(J, x, k), \indexOf^{I_l}(j_l), \indexOf^{I_r}(j_r)
\big)
& \assign 
& \charEqual^{S'}(J', x, k)
\ \ \text{where } J' = \{j - j_l\ |\ j \in J \cap [j_l, j_r)\}

\end{array}
\]
\vspace{-10pt}
\caption{Representative abstract transformers for string transformation.}
\label{figure:string-transformer}
\vspace{-10pt}
\end{figure}

%% file: alg-figures/tensor-transformers.tex
\begin{figure}[H]
\fontsize{8.5pt}{10pt}\selectfont
\centering
\renewcommand{\arraystretch}{1.2}
\setlength{\arraycolsep}{1pt} 
\[
\begin{array}{rlll}
\abstracttransformer{\text{Reshape}}
\big( 
\dimNum^T(k), \vectorOf^V(v)
\big)
& \assign 
& \dimNum^{T'}(\textit{size}(v))
\\ 
\abstracttransformer{\text{Reshape}}
\big( 
\shapeOf^{T}(v_1), \vectorOf^V(v_2)
\big)
& \assign 
& \shapeOf^{T'}(v_2)
\\

\abstracttransformer{\text{Permute}}
\big( 
\shapeOf^T(v), \vectorOf^V(v_p)
\big)
& \assign 
& \shapeOf^{T'}(v') \ \ \text{where } v'[i] = v[v_p[i]]
\\ 
\abstracttransformer{\text{Permute}}
\big( 
\elementAt^T(\vout, x, \vin), \vectorOf^V(v_p)
\big)
& \assign 
& \elementAt^{T'}(\voutprime, x, \vin) \ \ \text{where } \voutprime[i] = \vout[v_p[i]]
\\

\abstracttransformer{\text{FlipUD}}
\big( \shapeOf^T(v) \wedge \elementAt^T(\vout, x, \vin)
\big)
& \assign 
& \elementAt^T(\voutprime, x, \vin)
\\
\multicolumn{3}{r}{\text{where } \voutprime[0] = v[0] - \vout[0] - 1 \text{ and } \voutprime[i] = \vout[i] \text{ for } i > 0}

\end{array}
\]
\vspace{-10pt}
\caption{Representative abstract transformers for matrix manipulation.}
\label{figure:matrix-transformer}
\end{figure}

%% file: main.bib
@software{dong_2026_19078478,
  author       = {Dong, Rui and
                  Wu, Qingyue and
                  Ding, Danny and
                  Guo, Zheng and
                  Ji, Ruyi and
                  Wang, Xinyu},
  title        = {Foresighter (Presynthesis: Towards Scaling Up
                   Program Synthesis with Finer-Grained Abstract
                   Semantics)
                  },
  month        = mar,
  year         = 2026,
  publisher    = {Zenodo},
  doi          = {10.5281/zenodo.19078478},
  url          = {https://doi.org/10.5281/zenodo.19078478},
}

@article{wang2017program,
  title={{Program Synthesis using Abstraction Refinement}},
  author={Wang, Xinyu and Dillig, Isil and Singh, Rishabh},
  journal={Proceedings of the ACM on Programming Languages},
  volume={2},
  number={POPL},
  pages={1--30},
  year={2017},
  publisher={ACM New York, NY, USA},
  doi={https://doi.org/10.1145/3158151},
}

@article{wang2017synthesis,
  title={{Synthesis of data completion scripts using finite tree automata}},
  author={Wang, Xinyu and Dillig, Isil and Singh, Rishabh},
  journal={Proceedings of the ACM on Programming Languages},
  volume={1},
  number={OOPSLA},
  pages={1--26},
  year={2017},
  publisher={ACM New York, NY, USA},
  doi={https://doi.org/10.1145/3133886},
}

@inproceedings{wang2018learning,
  title={Learning Abstractions for Program Synthesis},
  author={Wang, Xinyu and Anderson, Greg and Dillig, Isil and McMillan, Kenneth L},
  booktitle={International Conference on Computer Aided Verification},
  pages={407--426},
  year={2018},
  organization={Springer}
}

@article{wang2018relational,
  title={Relational program synthesis},
  author={Wang, Yuepeng and Wang, Xinyu and Dillig, Isil},
  journal={Proceedings of the ACM on Programming Languages},
  volume={2},
  number={OOPSLA},
  pages={1--27},
  year={2018},
  publisher={ACM New York, NY, USA},
  doi={https://doi.org/10.1145/3276525},
}

@phdthesis{wang2019efficient,
  title={An efficient programming-by-example framework},
  author={Wang, Xinyu},
  year={2019}
}

@book{solar2008program,
  title={Program synthesis by sketching},
  author={Solar-Lezama, Armando},
  year={2008},
  publisher={University of California, Berkeley}
}

@article{yaghmazadeh2018automated,
  title={Automated migration of hierarchical data to relational tables using programming-by-example},
  author={Yaghmazadeh, Navid and Wang, Xinyu and Dillig, Isil},
  journal={VLDB},
  volume={11},
  number={5},
  pages={580--593},
  year={2018},
  publisher={VLDB Endowment}, 
doi={https://doi.org/10.1145/3187009.3177735},
}

@inproceedings{wang2017synthesizing,
  title={Synthesizing highly expressive SQL queries from input-output examples},
  author={Wang, Chenglong and Cheung, Alvin and Bodik, Rastislav},
  booktitle={Proceedings of the 38th ACM SIGPLAN Conference on Programming Language Design and Implementation},
  pages={452--466},
  year={2017},
}

@article{gulwani2011automating,
  title={Automating string processing in spreadsheets using input-output examples},
  author={Gulwani, Sumit},
  journal={ACM Sigplan Notices},
  volume={46},
  number={1},
  pages={317--330},
  year={2011},
  publisher={ACM New York, NY, USA},
  doi={https://doi.org/10.1145/1926385.1926423},
}

@misc{comon2008tree,
  title={Tree automata techniques and applications},
  author={Comon, Hubert and Dauchet, Max and Gilleron, R{\'e}mi and Jacquemard, Florent and Lugiez, Denis and L{\"o}ding, Christof and Tison, Sophie and Tommasi, Marc},
  year={2008}
}

@inproceedings{cousot1977abstract,
  author    = {Cousot, Patrick and Cousot, Radhia},
  title     = {Abstract Interpretation: A Unified Lattice Model for Static Analysis of Programs by Construction or Approximation of Fixpoints},
  booktitle = {Proceedings of the 4th ACM SIGACT-SIGPLAN Symposium on Principles of Programming Languages},
  pages     = {238--252},
  year      = {1977},
  address   = {Los Angeles, California},
  publisher = {ACM Press},
  doi       = {10.1145/512950.512973}
}

@article{joshi2002denali,
  title={{Denali: A goal-directed superoptimizer}},
  author={Joshi, Rajeev and Nelson, Greg and Randall, Keith},
  journal={ACM SIGPLAN Notices},
  volume={37},
  number={5},
  pages={304--314},
  year={2002},
  publisher={ACM New York, NY, USA}
}

@inproceedings{alur2013syntax,
  author={Alur, Rajeev and Bodik, Rastislav and Juniwal, Garvit and Martin, Milo M. K. and Raghothaman, Mukund and Seshia, Sanjit A. and Singh, Rishabh and Solar-Lezama, Armando and Torlak, Emina and Udupa, Abhishek},
  booktitle={2013 Formal Methods in Computer-Aided Design}, 
  title={Syntax-guided synthesis}, 
  year={2013},
  pages={1-8},
  doi={10.1109/FMCAD.2013.6679385}
}

@article{feng2017component,
  title={Component-based synthesis of table consolidation and transformation tasks from examples},
  author={Feng, Yu and Martins, Ruben and Van Geffen, Jacob and Dillig, Isil and Chaudhuri, Swarat},
  journal={ACM SIGPLAN Notices},
  volume={52},
  number={6},
  pages={422--436},
  year={2017},
  publisher={ACM New York, NY, USA}
}

@article{guo2019program,
  title={Program synthesis by type-guided abstraction refinement},
  author={Guo, Zheng and James, Michael and Justo, David and Zhou, Jiaxiao and Wang, Ziteng and Jhala, Ranjit and Polikarpova, Nadia},
  journal={Proceedings of the ACM on Programming Languages},
  volume={4},
  number={POPL},
  pages={1--28},
  year={2019},
  publisher={ACM New York, NY, USA}
}

@article{feser2015synthesizing,
  title={Synthesizing data structure transformations from input-output examples},
  author={Feser, John K and Chaudhuri, Swarat and Dillig, Isil},
  journal={ACM SIGPLAN Notices},
  volume={50},
  number={6},
  pages={229--239},
  year={2015},
  publisher={ACM New York, NY, USA},
  doi={https://doi.org/10.1145/2737924.2737977},
}

@article{polikarpova2016program,
  title={Program synthesis from polymorphic refinement types},
  author={Polikarpova, Nadia and Kuraj, Ivan and Solar-Lezama, Armando},
  journal={ACM SIGPLAN Notices},
  volume={51},
  number={6},
  pages={522--538},
  year={2016},
  publisher={ACM New York, NY, USA}
}

@inproceedings{polozov2015flashmeta,
  title={Flashmeta: A framework for inductive program synthesis},
  author={Polozov, Oleksandr and Gulwani, Sumit},
  booktitle={Proceedings of the 2015 ACM SIGPLAN International Conference on Object-Oriented Programming, Systems, Languages, and Applications},
  pages={107--126},
  year={2015}
}

@article{miltner2022bottom,
  title={Bottom-up synthesis of recursive functional programs using angelic execution},
  author={Miltner, Anders and Nu{\~n}ez, Adrian Trejo and Brendel, Ana and Chaudhuri, Swarat and Dillig, Isil},
  journal={Proceedings of the ACM on Programming Languages},
  volume={6},
  number={POPL},
  pages={1--29},
  year={2022},
  publisher={ACM New York, NY, USA},
  doi={https://doi.org/10.1145/3498682},
}

@article{koppel2022searching,
  title={Searching entangled program spaces},
  author={Koppel, James and Guo, Zheng and De Vries, Edsko and Solar-Lezama, Armando and Polikarpova, Nadia},
  journal={Proceedings of the ACM on Programming Languages},
  volume={6},
  number={ICFP},
  pages={23--51},
  year={2022},
  publisher={ACM New York, NY, USA},
  doi={https://doi.org/10.1145/3547622},
}

@article{lee2021combining,
  title={Combining the top-down propagation and bottom-up enumeration for inductive program synthesis},
  author={Lee, Woosuk},
  journal={Proceedings of the ACM on Programming Languages},
  volume={5},
  number={POPL},
  pages={1--28},
  year={2021},
  publisher={ACM New York, NY, USA},
  doi={https://doi.org/10.1145/3434335},
}

@article{brancas2022cubes,
  author = {Brancas, Ricardo and Terra-Neves, Miguel and Ventura, Miguel and Manquinho, Vasco and Martins, Ruben},
title = {CUBES: A Parallel Synthesizer for SQL Using Examples},
year = {2025},
publisher = {Association for Computing Machinery},
address = {New York, NY, USA},
issn = {0934-5043},
url = {https://doi.org/10.1145/3768578},
doi = {10.1145/3768578},
journal = {Form. Asp. Comput.},
month = sep,
}

@article{orvalho2020squares,
  title={SQUARES: A SQL synthesizer using query reverse engineering},
  author={Orvalho, Pedro and Terra-Neves, Miguel and Ventura, Miguel and Martins, Ruben and Manquinho, Vasco},
  journal={Proceedings of the VLDB Endowment},
  volume={13},
  number={12},
  pages={2853--2856},
  year={2020},
  publisher={VLDB Endowment}
}

@article{patsql,
author = {Takenouchi, Keita and Ishio, Takashi and Okada, Joji and Sakata, Yuji},
title = {PATSQL: efficient synthesis of SQL queries from example tables with quick inference of projected columns},
year = {2021},
issue_date = {July 2021},
publisher = {VLDB Endowment},
volume = {14},
number = {11},
issn = {2150-8097},
url = {https://doi.org/10.14778/3476249.3476253},
doi = {10.14778/3476249.3476253},
journal = {Proc. VLDB Endow.},
month = jul,
pages = {1937–1949},
numpages = {13}
}

@misc{cubes-tool,
    year = {{CUBES tool}},
    howpublished = {\url{https://github.com/OutSystems/CUBES
}},
}

@inproceedings{brancas2024towards,
  title={Towards reliable SQL synthesis: Fuzzing-based evaluation and disambiguation},
  author={Brancas, Ricardo and Terra-Neves, Miguel and Ventura, Miguel and Manquinho, Vasco and Martins, Ruben},
  booktitle={International Conference on Fundamental Approaches to Software Engineering},
  pages={232--254},
  year={2024},
  organization={Springer Nature Switzerland Cham}
}

@inproceedings{tran2009query,
  title={Query by output},
  author={Tran, Quoc Trung and Chan, Chee-Yong and Parthasarathy, Srinivasan},
  booktitle={Proceedings of the 2009 ACM SIGMOD International Conference on Management of data},
  pages={535--548},
  year={2009}
}

@article{gulwani2017program,
  title={Program synthesis},
  author={Gulwani, Sumit and Polozov, Oleksandr and Singh, Rishabh and others},
  journal={Foundations and Trends{\textregistered} in Programming Languages},
  volume={4},
  number={1-2},
  pages={1--119},
  year={2017},
  publisher={Now Publishers, Inc.}
}

@inproceedings{solar2006combinatorial,
  title={Combinatorial sketching for finite programs},
  author={Solar-Lezama, Armando and Tancau, Liviu and Bodik, Rastislav and Seshia, Sanjit and Saraswat, Vijay},
  booktitle={Proceedings of the 12th international conference on Architectural support for programming languages and operating systems},
  pages={404--415},
  year={2006}
}

@inproceedings{srivastava2010program,
  title={From program verification to program synthesis},
  author={Srivastava, Saurabh and Gulwani, Sumit and Foster, Jeffrey S},
  booktitle={Proceedings of the 37th annual ACM SIGPLAN-SIGACT symposium on Principles of programming languages},
  pages={313--326},
  year={2010}
}

@article{guria2023absynthe,
  title={Absynthe: Abstract interpretation-guided synthesis},
  author={Guria, Sankha Narayan and Foster, Jeffrey S and Van Horn, David},
  journal={Proceedings of the ACM on Programming Languages},
  volume={7},
  number={PLDI},
  pages={1584--1607},
  year={2023},
  publisher={ACM New York, NY, USA}
}

@article{10.1145/3591288,
author = {Yoon, Yongho and Lee, Woosuk and Yi, Kwangkeun},
title = {Inductive Program Synthesis via Iterative Forward-Backward Abstract Interpretation},
year = {2023},
issue_date = {June 2023},
publisher = {Association for Computing Machinery},
address = {New York, NY, USA},
volume = {7},
number = {PLDI},
url = {https://doi.org/10.1145/3591288},
doi = {10.1145/3591288},
journal = {Proc. ACM Program. Lang.},
month = jun,
articleno = {174},
numpages = {25}
}

@article{10.1145/3632858,
author = {Mell, Stephen and Zdancewic, Steve and Bastani, Osbert},
title = {Optimal Program Synthesis via Abstract Interpretation},
year = {2024},
issue_date = {January 2024},
publisher = {Association for Computing Machinery},
address = {New York, NY, USA},
volume = {8},
number = {POPL},
url = {https://doi.org/10.1145/3632858},
doi = {10.1145/3632858},
journal = {Proc. ACM Program. Lang.},
month = jan,
articleno = {16},
numpages = {25}
}

@inproceedings{sypet,
  title={Component-based synthesis for complex APIs},
  author={Feng, Yu and Martins, Ruben and Wang, Yuepeng and Dillig, Isil and Reps, Thomas W},
  booktitle={Proceedings of the 44th ACM SIGPLAN Symposium on Principles of Programming Languages},
  pages={599--612},
  year={2017}
}

@article{mandelin2005jungloid,
  title={Jungloid mining: helping to navigate the API jungle},
  author={Mandelin, David and Xu, Lin and Bod{\'\i}k, Rastislav and Kimelman, Doug},
  journal={ACM Sigplan Notices},
  volume={40},
  number={6},
  pages={48--61},
  year={2005},
  publisher={ACM New York, NY, USA}
}

@inproceedings{reinking2015type,
  title={A type-directed approach to program repair},
  author={Reinking, Alex and Piskac, Ruzica},
  booktitle={International Conference on Computer Aided Verification},
  pages={511--517},
  year={2015},
  organization={Springer}
}

@article{li2022fast,
  title={Fast graph simplification for interleaved-dyck reachability},
  author={Li, Yuanbo and Zhang, Qirun and Reps, Thomas},
  journal={ACM Transactions on Programming Languages and Systems (TOPLAS)},
  volume={44},
  number={2},
  pages={1--28},
  year={2022},
  publisher={ACM New York, NY}
}

@article{li2021complexity,
  title={On the complexity of bidirected interleaved Dyck-reachability},
  author={Li, Yuanbo and Zhang, Qirun and Reps, Thomas},
  journal={Proceedings of the ACM on Programming Languages},
  volume={5},
  number={POPL},
  pages={1--28},
  year={2021},
  publisher={ACM New York, NY, USA}
}

@article{li2022efficient,
  title={Efficient algorithms for dynamic bidirected dyck-reachability},
  author={Li, Yuanbo and Satya, Kris and Zhang, Qirun},
  journal={Proceedings of the ACM on Programming Languages},
  volume={6},
  number={POPL},
  pages={1--29},
  year={2022},
  publisher={ACM New York, NY, USA}
}

@misc{sygus-website,
    year = {{Syntax-Guided Synthesis website}},
    howpublished = {\url{https://sygus-org.github.io/}},
}

@article{johnson2024automating,
  title={Automating Pruning in Top-Down Enumeration for Program Synthesis Problems with Monotonic Semantics},
  author={Johnson, Keith JC and Krishnan, Rahul and Reps, Thomas and D’Antoni, Loris},
  journal={Proceedings of the ACM on Programming Languages},
  volume={8},
  number={OOPSLA2},
  pages={935--961},
  year={2024},
  publisher={ACM New York, NY, USA}
}

@article{thorup2004compact,
  title={Compact oracles for reachability and approximate distances in planar digraphs},
  author={Thorup, Mikkel},
  journal={Journal of the ACM (JACM)},
  volume={51},
  number={6},
  pages={993--1024},
  year={2004},
  publisher={ACM New York, NY, USA}
}

@inproceedings{farzan2021counterexample,
  title={Counterexample-guided partial bounding for recursive function synthesis},
  author={Farzan, Azadeh and Nicolet, Victor},
  booktitle={International Conference on Computer Aided Verification},
  pages={832--855},
  year={2021},
  organization={Springer}
}

@inproceedings{gvero2013complete,
  title={Complete completion using types and weights},
  author={Gvero, Tihomir and Kuncak, Viktor and Kuraj, Ivan and Piskac, Ruzica},
  booktitle={Proceedings of the 34th ACM SIGPLAN conference on Programming language design and implementation},
  pages={27--38},
  year={2013}
}

@article{kaplan2020scaling,
  title={Scaling laws for neural language models},
  author={Kaplan, Jared and McCandlish, Sam and Henighan, Tom and Brown, Tom B and Chess, Benjamin and Child, Rewon and Gray, Scott and Radford, Alec and Wu, Jeffrey and Amodei, Dario},
  journal={arXiv preprint arXiv:2001.08361},
  year={2020}
}

@article{li2024efficient,
  title={Efficient bottom-up synthesis for programs with local variables},
  author={Li, Xiang and Zhou, Xiangyu and Dong, Rui and Zhang, Yihong and Wang, Xinyu},
  journal={Proceedings of the ACM on Programming Languages},
  volume={8},
  number={POPL},
  pages={1540--1568},
  year={2024},
  publisher={ACM New York, NY, USA}
}

@article{sakr2012overview,
  title={An overview of graph indexing and querying techniques},
  author={Sakr, Sherif and Al-Naymat, Ghazi},
  journal={Graph Data Management: Techniques and Applications},
  pages={71--88},
  year={2012},
  publisher={IGI Global Scientific Publishing}
}

@inproceedings{jin2008efficiently,
  title={Efficiently answering reachability queries on very large directed graphs},
  author={Jin, Ruoming and Xiang, Yang and Ruan, Ning and Wang, Haixun},
  booktitle={Proceedings of the 2008 ACM SIGMOD international conference on Management of data},
  pages={595--608},
  year={2008}
}

@inproceedings{valstar2017landmark,
  title={Landmark indexing for evaluation of label-constrained reachability queries},
  author={Valstar, Lucien DJ and Fletcher, George HL and Yoshida, Yuichi},
  booktitle={Proceedings of the 2017 ACM International Conference on Management of Data},
  pages={345--358},
  year={2017}
}

@article{peng2020answering,
  title={Answering billion-scale label-constrained reachability queries within microsecond},
  author={Peng, You and Zhang, Ying and Lin, Xuemin and Qin, Lu and Zhang, Wenjie},
  journal={Proceedings of the VLDB Endowment},
  volume={13},
  number={6},
  pages={812--825},
  year={2020},
  publisher={VLDB Endowment}
}

@article{chen2022dlcr,
  title={DLCR: efficient indexing for label-constrained reachability queries on large dynamic graphs},
  author={Chen, Xin and Peng, You and Wang, Sibo and Yu, Jeffrey Xu},
  journal={Proceedings of the VLDB Endowment},
  volume={15},
  number={8},
  pages={1645--1657},
  year={2022},
  publisher={VLDB Endowment}
}

@article{cohen2003reachability,
  title={Reachability and distance queries via 2-hop labels},
  author={Cohen, Edith and Halperin, Eran and Kaplan, Haim and Zwick, Uri},
  journal={SIAM Journal on Computing},
  volume={32},
  number={5},
  pages={1338--1355},
  year={2003},
  publisher={SIAM}
}

@inproceedings{ravi2022cafqa,
  title={CAFQA: A classical simulation bootstrap for variational quantum algorithms},
  author={Ravi, Gokul Subramanian and Gokhale, Pranav and Ding, Yi and Kirby, William and Smith, Kaitlin and Baker, Jonathan M and Love, Peter J and Hoffmann, Henry and Brown, Kenneth R and Chong, Frederic T},
  booktitle={Proceedings of the 28th ACM International Conference on Architectural Support for Programming Languages and Operating Systems, Volume 1},
  pages={15--29},
  year={2022}
}

@misc{rainbow-table,
    year = {{Rainbow table (wiki)}},
    howpublished = {\url{https://en.wikipedia.org/wiki/Rainbow_table}},
}

@misc{window-method,
    year = {{Elliptic curve point multiplication}},
    howpublished = {\url{https://en.wikipedia.org/wiki/Elliptic_curve_point_multiplication}},
}

@article{huggins2023accelerating,
  title={Accelerating Quantum Algorithms with Precomputation},
  author={Huggins, William J and McClean, Jarrod R},
  journal={arXiv preprint arXiv:2305.09638},
  year={2023}
}

@article{thorup2005approximate,
  title={Approximate distance oracles},
  author={Thorup, Mikkel and Zwick, Uri},
  journal={Journal of the ACM (JACM)},
  volume={52},
  number={1},
  pages={1--24},
  year={2005},
  publisher={ACM New York, NY, USA}
}

@inproceedings{patrascu2010distance,
  title={Distance oracles beyond the Thorup-Zwick bound},
  author={Patrascu, Mihai and Roditty, Liam},
  booktitle={2010 IEEE 51st Annual Symposium on Foundations of Computer Science},
  pages={815--823},
  year={2010},
  organization={IEEE}
}
